\renewcommand{\leq}{\leqslant}
\renewcommand{\geq}{\geqslant}
\newtheorem{teo}{Theorem}
\newtheorem{dfn}{Definition}[section]
\newtheorem{lem}{Lemma}[section]
\newtheorem{cor}{Corollary}[section]
\newtheorem{prop}{Proposition}[section]
\newtheorem{remark}{Remark}[section]
\newenvironment{dem}{\vspace{.2cm}\noindent {\bf Proof }\\}{\newline \hspace{1cm} \hspace{1cm}\flushright \hfill $\square$ \newline}
\newenvironment{dem2}[1]{\vspace{.2cm}\noindent {\bf Proof {#1}}\\}{\newline \hspace{1cm} \hspace{1cm}\flushright \hfill $\square$ \newline}
\renewcommand{\ker}{\operatorname{Ker}\,}
\newcommand{\EE}{\mathcal E}
\newcommand{\GG}{\mathcal G}
\newcommand{\DD}{\mathcal D}
\newcommand{\n}{\noindent}
\newcommand{\ve}{\varepsilon}
\newcommand{\erre}{\mathbb{R}} 
\newcommand{\ci}{\mathbb{C}}
\newcommand{\de}{\delta} 
\newcommand{\al}{\alpha}
\newcommand{\ga}{\gamma}
\newcommand{\la}{\lambda}
\newcommand{\ome}{\omega}
\newcommand{\f}{\frac}
\newcommand{\ba}{\begin{eqnarray}} \newcommand{\ea}{\end{eqnarray}}
\newcommand{\be}{\begin{equation}} \newcommand{\ee}{\end{equation}}
\newcommand{\bdm}{\begin{displaymath}} \newcommand{\edm}{\end{displaymath}} 
\newcommand{\brr}{\begin{array}}\newcommand{\err}{\end{array}}
\newcommand{\lf}{\left}
\newcommand{\ri}{\right}
\providecommand{\ove}[1]{\overline{#1}}
\newcommand{\bml}{\begin{gather}} 
\newcommand{\eml}{\end{gather}}
\DeclareMathOperator{\supp}{supp}
\DeclareMathOperator{\sech}{sech}
\DeclareMathOperator{\arctanh}{arctanh}
\newcommand{\beq}{\begin{equation}}
\newcommand{\eeq}{\end{equation}}
\newcommand{\RE}{\mathbb{R}}
\def\CO{{\mathbb C}}
\numberwithin{equation}{section}
\begin{document}

\title[]{Variational properties and orbital stability of standing waves for NLS equation on a star graph}

\author{Riccardo Adami}
\address{Dipartimento di Scienze Matematiche, Politecnico di Torino,  C.so Duca degli Abruzzi 24, 10129 Torino, Italy}
\email{riccardo.adami@polito.it}
\author{Claudio Cacciapuoti}
\address{Hausdorff Center for Mathematics,
  Institut f\"ur Angewandte Mathematik \\
Endenicher Allee, 60, 53115 Bonn, Germany}
\curraddr{Dipartimento di Scienza e Alta Tecnologia, Universit\`a dell'Insubria, Via Valleggio 11, 22100 Como, Italy}
\email{claudio.cacciapuoti@uninsubria.it        }
\author{Domenico Finco}
\address{Facolt\`a di Ingegneria, Universit\`a Telematica
Internazionale Uninettuno\\ 
Corso Vittorio E\-ma\-nue\-le II, 39, 00186 Roma,
Italy}
\email{d.finco@uninettunouniversity.net}
\author{Diego Noja}
\address{Dipartimento di Matematica e Applicazioni, Universit\`a
 di Milano
Bicocca \\
via R. Cozzi, 53, 20125 Milano, Italy}
\email{diego.noja@unimib.it}
\thanks{The authors would like to thank Reika Fukuizumi for several discussions. R.A., D.F., and D.N.  acknowledge the support of the FIRB 2012 project  ``Dispersive dynamics: Fourier Analysis and Variational Methods'' code RBFR12MXPO. R.A. was partially supported by the PRIN2012 grant ``Aspetti variazionali e perturbativi nei
problemi differenziali lineari''. C.C. acknowledges the support of the FIR 2013 project  ``Condensed Matter in Mathematical Physics''  code RBFR13WAET}

\begin{abstract}
We study standing waves for a nonlinear Schr\"odinger
equation on a star graph {$\mathcal{G}$} i.e.  $N$ half-lines joined at a
vertex. At the vertex an interaction occurs
described by a boundary condition of delta type with strength
$\alpha\leqslant 0$. The nonlinearity is of focusing power
type. The dynamics is given by an equation of the form $ i
\frac{d}{dt}\Psi_t = H \Psi_t - | \Psi_t |^{2\mu} \Psi_t $, where $H$
is the Hamiltonian operator which generates the linear Schr\"odinger dynamics. We show the existence of
several families of standing waves for every
sign of the coupling at the vertex
for every $\omega >
\frac{\alpha^2}{N^2}$. 
Furthermore,
we determine the ground
states, as minimizers of the
action on the Nehari manifold, and order the various
families. 
Finally, we show that
the ground states are orbitally stable for every allowed $\omega$ if the
nonlinearity is subcritical or critical, and for $\omega<\omega^\ast$ otherwise. 
 \par\noindent
\end{abstract}
\maketitle
\begin{small}
\n
\emph{Keywords: }quantum graphs, non-linear Schr\"odinger equation, solitary waves.\\
\emph{MSC 2010: }35Q55, 81Q35, 37K40, 37K45.
\end{small}

\section{Introduction}
\noindent
In the present paper a rigorous analysis of the stationary behavior
of nonlinear Schr\"odin\-ger equation (NLS) on a graph is given, beginning
from the simplest type of unbounded graph, the star graph. In a
previous paper \cite{ACFN1} the authors studied the behavior in time
of an asymptotically solitary solution of NLS resident on a single
edge of the graph in the far past, and impinging on the vertex with
various types of couplings, giving a quantitative analysis of
reflection and transmission of the solitary wave after the collision
at the junction. Here we concentrate on a different phenomenon, namely
the
existence of persistent nonlinear bound states on the graph
(localized, or pinned nonlinear modes), and on their orbital
stability, when an attractive interaction is present at the vertex.  Some of the results here discussed and proved were briefly announced in \cite{ACFN2} .

\noindent
Let us briefly give a collocation of the model in the physical
context. Generally speaking, one can consider the NLS as a paradigm for the behavior of
nonlinear dispersive equations, but it is also an ubiquitous model
appearing in several concrete physical situations. The main fields of
application which we have in mind are the propagation of
electromagnetic pulses in nonlinear media (typically laser beams in
Kerr media or signal propagation in optical fibers), and dynamics of
Bose-Einstein condensates (BEC). We are interested in the way
solutions of NLS are affected by the presence of inhomogeneities of
various type. The propagation on the line in the presence of defects
has been a subject of intense study in the last years and it gives
rise to quite interesting phenomena, such as defect induced modes
\cite{CM,FMO,ANS}, i.e. standing solutions strongly localized around
the defect. The presence of defect modes affects propagation by
allowing trapping of wave packets, as experimentally shown in the case
of local photonic potentials in \cite{Linzon}. On the other hand,
nonlinearity can induce escaping of solitons from confining
potentials, as demonstrated in \cite{Pe}. A last interesting
phenomenon is the strong alteration of tunneling through potential
barriers in the presence of nonlinear defocusing optical media
\cite{Wan}. In this paper we consider NLS propagation through
junctions in networks. For example, when the dynamics of a BEC takes
place in essentially one-dimensional substrates (``cigar shaped"
condensates) or a laser pulse propagates in optical fibers and thin
waveguides, the question arises of the effect of a ramified junction
on propagation and on the possible generation of stable bound
states. The analysis of the behavior of NLS on networks is not yet a
fully developed subject, but it is currently growing. Concerning
situations of direct physical interest we mention the analysis of
scattering at Y junctions (``beam splitters") and other network
configurations (``ring interferometers") for one dimensional Bose
liquids discussed in \cite{TOD}. Some more results are known for the
discrete chain NLS model (DNLS), see in connection with the present
paper the analysis in \cite{Miro}. Other recent developments are in
\cite{GSD,Sob}. In particular, in the paper \cite{GSD} scattering from
a complex network sustaining nonlinear Schr\"odinger dynamics is
studied in relation to characterization of quantum chaos.\par\noindent
With these phenomenological and analytical premises in mind we would
like to construct a mathematical model capable to represent, in a
schematic but rigorous way, the propagation and stationary behavior
of a nonlinear Schr\"odinger field at a junction of a network.  We
begin by giving the needed preliminaries to rigorously define our
model. We recall that the {\it linear} Schr\"odinger equation on
graphs has been for a long time a very developed subject due to its
applications in quantum chemistry, nanotechnologies and more generally
mesoscopic physics. Standard references are
\cite{[BCFK06], [BEH], [Kuc04], [Kuc05], [KS99], EKKST08, BK13}, where more extensive
treatments are given.  Here we recall only the definitions needed to
have a self-contained exposition.  \n We consider a graph $\GG$
constituted by $N$ infinite half-lines attached to a common vertex.  The
natural Hilbert space where to pose a Schr\"odinger dynamics is then
$L^2(\GG)=\bigoplus_{j=1}^N L^2(\RE^+)$. Elements in $L^2(\GG)$ will
be represented as function vectors with components in $L^2(\RE^+)$,
namely
\[
\Psi=\left(
\begin{matrix}
\psi_1\\
\vdots \\
\psi_N
\end{matrix}
\right).
\]

\n
We denote the elements of $L^2(\GG)$ by capital Greek letters,
while functions in $L^2(\RE^+)$ are
 denoted by lowercase Greek letters.
We say that $\Psi$ is symmetric if $\psi_k$ does not depends on $k$.
The norm of $L^2$-functions on $\GG$ is naturally defined by
$$
\| \Psi \|^2_{L^2 (\GG)} : = \sum_{j=1}^N \| \psi_j \|^2_{L^2
  (\erre^+)} .
$$
From now on for the $L^2$-norm on the graph we
drop the subscript and simply write $\| \cdot \|$. Accordingly, we
denote by $(\cdot,\cdot)$ the scalar product in $L^2(\GG)$. 

\n
Analogously, given $1 \leqslant r \leqslant \infty$,
we define the space $L^r (\GG)$ as the set of functions
on the graph whose components are elements of the space $L^r (\erre^+)$,
and the norm is correspondingly defined by
\begin{equation*}
\big\|\Psi\big\|_{r}^{r}
=\sum_{j=1}^N\|\psi_j\|_{L^r(\RE^+)}^{r},
\ 1 \leqslant r < \infty, \qquad 
\big\|\Psi\big\|_{\infty}= \max_{1 \leqslant j \leqslant N}\|\psi_j\|_{L^\infty(\RE^+)} .
\end{equation*} 

\n
Besides, we need to introduce the spaces
$$
H^1(\GG) \equiv  \bigoplus_{j=1}^N   H^1(\erre^+)  \qquad
H^2(\GG) \equiv \bigoplus_{j=1}^N   H^2(\erre^+) 
, $$
equipped with the norms
\be \label{sobbo}
\| \Psi \|_{H^1}^2 \ = \ \sum_{i=1}^N \| \psi_i \|_{H^1(\erre^+)}^2,
\qquad
\| \Psi \|_{H^2}^2 \ = \ \sum_{i=1}^N \| \psi_i \|_{H^2(\erre^+)}^2.
\ee
Whenever a functional norm refers to a function defined on the graph,
we omit the symbol $\GG$. 

\n
When an element of $L^2 (\GG)$ evolves in time, we use in notation the
subscript $t$: for instance, $\Psi_t$. Sometimes we shall write $\Psi(t)$ in order
to highlight the dependence on time, or whenever such a notation is more
understandable.

\n
The dynamics we want to set on the graph is generated by a
linear part and a nonlinear one. We begin by describing the linear
part.\par\noindent 
Fixed $\alpha \in {\mathbb R}$, we consider a Hamiltonian operator,
denoted by $H$ and called $\delta$ graph or $\delta$ vertex, defined on the domain ${\mathcal D} (H)$ 
\begin{equation} \label{domdelta}
{\mathcal D} (H): =
\{ \Psi \in H^2(\GG) \text{ s.t. } \,
\psi_1 (0) = \ldots = \psi_N (0), \,\ \sum_{i=1}^N \psi_i ' (0) = \alpha \psi_1 (0) \},
\end{equation}
where $\psi'_i$ denotes the derivative of the function $\psi_i$ with
respect to the space variable related to the $i$-th edge. The
action of the operator $H$ is given by
\begin{equation*}
H \Psi =
\lf(
\begin{array}{c}
-\psi_1'' \\ \vdots \\ -\psi_N''
\end{array}
\ri).
\end{equation*}

\n
The Hamiltonian $H$ is a selfadjoint operator on $L^2(\GG)$ (\cite{[KS99]})
and generalizes to the graph the ordinary Schr\"odinger
operator with $\delta$ potential of strength $\alpha$ on the line
\cite{aghh:05}. Similarly to that case, the interaction is
encoded in the boundary condition. The case $\al=0$ in \eqref{domdelta}
plays a distinguished role and it defines what is usually given the name
of free or Kirchhoff boundary condition; we will indicate the
corresponding operator as $H^0$. Notice that for a graph with two
edges, i.e. the line, continuity of wavefunction and its derivative
for an element of ${\mathcal D}(H^0)$ makes the interaction disappear;
this fact justifies the name of free Hamiltonian. A $\delta$ vertex
with $\alpha<0$ can be interpreted as the presence of a deep
attractive potential well or attractive defect. This interpretation
can be enforced by showing that, as in the case of the line, the
operator $H$ is a norm resolvent limit for $\epsilon$ vanishing of a
scaled Hamiltonian $H_\epsilon=H^0+\alpha V_\epsilon$, where
$V_\epsilon=\frac{1}{\epsilon}V(\frac{x}{\epsilon})$ and $V$ is a
positive normalized potential on the graph (see \cite{[BEH]} and
reference therein). The attractive character shows in the fact that
for every $\alpha<0$ a (single) bound state exists for the linear
dynamics, with energy $-\frac{\alpha^2}{N^2}$. On the contrary, on a
Kirchhoff vertex no bound states exist, the spectrum is purely
absolutely continuous, but a zero energy resonance appears. Finally we
recall that in the case of repulsive delta interaction $\alpha>0$,
which is however of minor interest here, there are not bound states
nor zero energy resonances. 

\n
The quadratic form $E^{lin}$
associated to $H$ is defined on the finite energy space
\begin{equation*}
\EE \equiv{\mathcal D} ( E^{lin} ) = \{ \Psi \in H^1 (\GG)
\text{ s.t. } \,
\psi_1 (0) = \ldots = \psi_N (0) \}
\end{equation*}
and is given by
\begin{equation*}
E^{lin} [\Psi ] = \f{1}{2} \sum_{i=1}^N \int_0^{+\infty}
|\psi_i ' (x) |^2 \,dx\, + \f{\alpha}{2} | \psi_1 (0) |^2 = \frac{1}{2}\|\Psi'\|^2 + \f{\alpha}{2} | \psi_1 (0) |^2\ .
\end{equation*}
\n
The corresponding bilinear form is denoted by $B(\cdot,\cdot)$ and explicitly given by
\[   
B (\Psi, \Phi) \ : = \ \frac{1}{2}\sum_{i=1}^N( {\psi_i}^\prime, {\phi_i}^\prime)_{L^2(\RE^+)} + \frac{\alpha}{2} \overline{\psi_1} (0){\phi_1} (0).
\]

\n
As a particular case, the quadratic form $E^{0,lin}$ associated to $H^0$ is defined on
the  same space, that is ${\mathcal D} ( E^{0, lin} ) = \EE$,
and reads
\begin{equation*}
 E^{0,lin} [\Psi ] = \f{1}{2} \sum_{i=1}^N \int_0^{+\infty} |\psi_i ' (x) |^2 \,dx  = \frac{1}{2}\|\Psi'\|^2\ .
\end{equation*}



\par\noindent
Now let us introduce the nonlinearity. To this end we define
$G=(G_1,\dots, G_N):\CO^n\rightarrow \CO^n$ where $G$ acts
``componentwise" as $G_i(\zeta)=g(|\zeta_i|)\zeta_i$ for a suitable
$g:\erre^+\rightarrow \erre$ and $\zeta=(\zeta_i)\in
\CO^n$.\par\noindent 
We are interested in the special but important case of a power
nonlinearity of focusing type,  so we choose
$g(z)=-|z|^{2\mu},\ \mu >0 \ .$  

\n
After this preparation it is well defined the NLS equation on the graph, 
\beq
\label{diffform}
 i \frac{d}{dt}\Psi_t \ = \ H \Psi_t - | \Psi_t |^{2\mu} \Psi_t\,
\eeq
where $\mu >0$. This abstract nonlinear Schr\"odinger equation amounts
to a system of scalar NLS equations on the halfline, coupled through
the boundary condition at the origin included in the domain
\eqref{domdelta}. 

In Section 2 we show that for $\mu>0$
well-posedness of the dynamics described by equation \eqref{diffform}
(in weak form) for initial data in the finite energy space ${\mathcal
  E}$ holds true. Moreover, if $0<\mu<2$ then the solution exists for all
times and blow-up does not occur.  Finally, as in the standard NLS on
the line, mass $M(\Psi)=\frac{1}{2}\|\Psi\|^2$ and energy $E[\Psi]$
are conserved, where 
\[
 E[\Psi]=\frac{1}{2}\|\Psi'\|^2 -
\frac{1}{2\mu + 2}\|\Psi
\|^{2\mu+2}_{2\mu+2}+\frac{\alpha}{2}|\psi_1(0)|^2
\]
 and
analogously, in the case $\alpha = 0$, for the Kirchhoff energy
\[
E^0[\Psi]=\frac{1}{2}\|\Psi'\|^2 - \frac{1}{2\mu + 2}\|\Psi
\|^{2\mu+2}_{2\mu+2}\ .
\]
 \par After setting the model
and its well-posedness (see Section \ref{s:2}), we turn to the main subject of this paper,
existence and properties of standing wave solutions to
\eqref{diffform}. Standing waves are solutions of the form 
\begin{equation*}
\Psi_t(x)=e^{i\omega t}\ \Psi_{\omega}(x)\ .
\end{equation*}
The function
$\Psi_{\omega}$ is the amplitude or the profile (with some abuse of
interpretation) of the standing wave, and we will frequently refer to
the set of $\Psi_\omega$ as to the stationary states of the
problem.\par\noindent
The amplitude $\Psi_{\omega}$ satisfies the {\it stationary equation }
\[
H\Psi_{\omega} - |\Psi_{\omega}|^{2\mu} \Psi_{\omega} = -\ome \Psi_{\omega}\ ,\qquad \ome>0\ .
\]
This equation has a variational structure.\par\noindent
Let us define the {\it action} functional 
$$
S_\ome[\Psi] = E[\Psi] + \ome  M[\Psi] = \f 1 2 \|\Psi'\|^2 + \f \ome
2 \| \Psi \|^2 - \f{1}{2\mu +2}  \|\Psi\|_{2\mu+2}^{2\mu+2}\, +\,
\frac{\alpha}{2}|\psi_1(0)|^2. 
$$
The Euler-Lagrange equation of the action is the stationary equation above. 
The action $S_{\omega}$, defined on the form domain ${\mathcal E}$ of
the operator $H$, is unbounded from below. Nevertheless, it is bounded
on the so called natural (or Nehari) constraint $ \{ \Psi \in \EE$
\text{ s.t. }$I_\ome [\Psi] =0\}$, where $I_{\ome} [\Psi] =  \| \Psi'
\|^2 - \| \Psi \|_{2 \mu+ 2}^{2\mu +2} +\ome\| \Psi \|^{2}  
+\alpha|\psi_1(0)|^2$. Note that
$I_{\omega}(\Psi_{\omega})=S'_{\omega}(\Psi_{\omega})\Psi_{\omega}\ ,$
and thus 
the Nehari manifold is
a codimension one constraint which contains all the solutions to the
stationary equation. One of our main results is the following theorem.
\begin{teo}[Existence of minimizers for the Action functional] \mbox{ } \label{mainvar} 
\n
Let $\mu>0$. There exists $\alpha^\ast<0$ such that for $-N\sqrt{\ome} < \al< \al^\ast$
the action functional $S_\ome$  constrained to the Nehari manifold admits an absolute minimum, i.e. a $\Psi\neq 0$ such that $I_\ome [\Psi]=0$
and $S_\ome [\Psi] = {\rm inf} \left\{S_\omega[\Phi]\ :\ I_\omega[\Phi]=0 \right\}$.
\end{teo}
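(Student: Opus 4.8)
The plan is to reduce the constrained minimization to a scale-invariant quotient, establish coercivity to obtain a positive infimum together with bounded minimizing sequences, and then recover the missing compactness through a rearrangement adapted to the graph. First I would simplify the functional on the constraint: if $I_\omega[\Psi]=0$ then $\|\Psi\|_{2\mu+2}^{2\mu+2}=\|\Psi'\|^2+\omega\|\Psi\|^2+\alpha|\psi_1(0)|^2=:Q_\omega[\Psi]$, and substituting this into $S_\omega$ gives $S_\omega[\Psi]=\frac{\mu}{2\mu+2}Q_\omega[\Psi]$. Moreover every $\Psi\neq0$ with $Q_\omega[\Psi]>0$ admits a unique rescaling $\lambda\Psi$ landing on the Nehari manifold, and along that orbit $S_\omega$ is a fixed multiple of $\big(Q_\omega[\Psi]/\|\Psi\|_{2\mu+2}^2\big)^{(\mu+1)/\mu}$. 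Hence minimizing $S_\omega$ on the Nehari manifold is equivalent to minimizing the scale-invariant quotient $\nu:=\inf_{\Psi\neq0}Q_\omega[\Psi]/\|\Psi\|_{2\mu+2}^2$, and a minimizer of the quotient yields, after rescaling, the desired nonzero minimizer of the action.

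Next I would prove coercivity, which is exactly where the lower bound $\alpha>-N\sqrt\omega$ originates. Using the half-line trace inequality on each edge and the fact that all components share the vertex value, one obtains for every $\epsilon>0$ the estimate $|\psi_1(0)|^2\leq \frac{\epsilon}{N}\|\Psi'\|^2+\frac{1}{N\epsilon}\|\Psi\|^2$. Inserting this into $Q_\omega$ gives $Q_\omega[\Psi]\geq(1-\frac{|\alpha|\epsilon}{N})\|\Psi'\|^2+(\omega-\frac{|\alpha|}{N\epsilon})\|\Psi\|^2$, and an admissible $\epsilon$ making both coefficients positive exists precisely when $\alpha^2<N^2\omega$, that is $\alpha>-N\sqrt\omega$. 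Thus $Q_\omega$ is equivalent to the $H^1(\GG)$ norm; combined with the Gagliardo--Nirenberg/Sobolev bound $\|\Psi\|_{2\mu+2}^2\leq C\|\Psi\|_{H^1}^2$ this forces $\nu>0$ and shows that any minimizing sequence, normalized by $\|\Psi\|_{2\mu+2}=1$, is bounded in $H^1(\GG)$.

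The hard part is the compactness of the minimizing sequence, since the half-lines are unbounded and mass may escape to infinity. Here I would follow the rearrangement route announced in the introduction. Given $\Psi$, let $\hat\Psi$ be its symmetric decreasing rearrangement, the vector whose components all equal a single profile decreasing away from the vertex and equimeasurable with $|\Psi|$; this preserves $\|\cdot\|_2$ and $\|\cdot\|_{2\mu+2}$ and raises the vertex value to $\|\Psi\|_\infty\geq|\psi_1(0)|$, so that, as $\alpha<0$, the boundary term does not increase. The genuinely delicate point, and the main obstacle, is that on a star graph with $N\geq3$ the classical P\'olya--Szeg\H o inequality fails: a coarea computation shows that symmetric rearrangement can raise the kinetic energy, because at values exceeding the vertex value the original function may have as few as two points in its level set (a bump sitting on a single edge), whereas the symmetric rearrangement, spread over all $N$ edges, always contributes $N$ crossings, and more level crossings force more kinetic energy. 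What must be proved instead is a generalized P\'olya--Szeg\H o inequality controlling the possible increase by the gain in vertex value, of the form $\|\hat\Psi'\|^2-\|\Psi'\|^2\leq C_N\big(|\hat\psi(0)|^2-|\psi_1(0)|^2\big)$. Granting this, for $\alpha$ below the threshold $\alpha^\ast:=-C_N$ the negative boundary gain dominates the kinetic increase, so $Q_\omega[\hat\Psi]\leq Q_\omega[\Psi]$ and the minimization reduces to symmetric decreasing functions.

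Finally, on the class of symmetric decreasing functions compactness is restored: the monotone decay bound $u(x)^2\leq\|u\|_2^2/x$ rules out escape of mass, so the bounded minimizing sequence converges strongly in $L^{2\mu+2}(\GG)$, and its weak $H^1$ limit $\Psi$ still satisfies $\|\Psi\|_{2\mu+2}=1$, hence $\Psi\neq0$. Since weak $H^1$ convergence forces local uniform convergence, $\psi_1^{(n)}(0)\to\psi_1(0)$, so the boundary term is continuous and $Q_\omega$ is weakly lower semicontinuous; therefore $Q_\omega[\Psi]\leq\nu$ and $\Psi$ attains the quotient infimum. Rescaling $\Psi$ onto the Nehari manifold produces the sought absolute minimum of $S_\omega$. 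I expect the star-graph P\'olya--Szeg\H o inequality and the sharp determination of $\alpha^\ast$ to be the crux of the whole argument.
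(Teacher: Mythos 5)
Your first two steps --- the reduction to a scale-invariant quotient (equivalently, the Nehari reformulation) and the coercivity argument, which is exactly where the restriction $\al>-N\sqrt{\ome}$ enters --- are correct, and they essentially coincide with Lemmas \ref{lemequi} and \ref{lemlowbound} of the paper. The proof collapses, however, at the step you yourself call the crux: the proposed ``generalized P\'olya--Szeg\H o inequality''
\begin{equation*}
\|(\widehat\Psi)'\|^2-\|\Psi'\|^2\ \leqslant\ C_N\big(|\widehat\psi(0)|^2-|\psi_1(0)|^2\big)
\end{equation*}
is false as stated, and a counterexample is essentially already in the paper: shrink the width of the tent in the last remark of Appendix \ref{rearra}. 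Take $\Psi$ equal to a tent of height $h$ and width $w$ supported on a single edge away from the vertex and zero on the other edges (this lies in $\EE$, since it vanishes at the vertex). Then $|\psi_1(0)|=0$ and $|\widehat\psi(0)|=h$, while $\|\Psi'\|^2=4h^2/w$ and the rearrangement, spread over all $N$ edges, has $\|(\widehat\Psi)'\|^2=N^2h^2/w$. The left-hand side is $(N^2-4)h^2/w$, the right-hand side is $C_Nh^2$, and letting $w\to0$ defeats any constant $C_N$, even one depending on $\ome$ and $\mu$. Restricting the inequality to the class where you would apply it ($L^{2\mu+2}$-normalized, $H^1$-bounded functions) does not rescue the argument as written: you would have to prove such an inequality with a constant valid along an arbitrary minimizing sequence, and no argument is offered; the structural information that would exclude spike-like pieces from minimizing sequences is precisely the compactness one is trying to establish, so the reasoning becomes circular. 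Consequently the reduction ``$Q_\ome[\widehat\Psi]\leqslant Q_\ome[\Psi]$, hence it suffices to minimize over symmetric decreasing vectors'' is unsupported, and with it the entire compactness step; your threshold $\al^\ast=-C_N$ inherits the same problem.

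For comparison, the paper never attempts an inequality of this kind. Rearrangements --- with the unavoidable factor $N/2$ of Theorem \ref{polya} --- are used only for the Kirchhoff problem, where a constant is harmless because only a lower bound is needed: after a scaling that absorbs the factor $4/N^2$ and a reduction to a half-line problem, the infimum $d^0(\ome)$ is computed exactly (Theorem \ref{actionkir}) and shown to be attained only in the limit by solitons escaping along one edge. For $\al<0$ existence is then proved by a concentration-compactness type argument: a minimizing sequence is bounded and has a weak limit $\Psi_\infty$; if $\Psi_\infty=0$, the vertex term vanishes in the limit (weak continuity of the trace), so the sequence can be rescaled onto the Kirchhoff Nehari manifold, forcing $d(\ome)\geqslant d^0(\ome)$; this contradicts the strict inequality $d(\ome)<d^0(\ome)$ of Proposition \ref{iwashi}, which is proved by evaluating the action on an explicit symmetric trial function (the $N$-tail stationary state) --- and it is this comparison, not a rearrangement constant, that determines $\al^\ast$. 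Brezis--Lieb then yields $I_\ome[\Psi_\infty]\leqslant0$, and weak lower semicontinuity concludes. Your endgame (Strauss-type compactness for symmetric decreasing vectors, weak continuity of the boundary term) is sound in itself, but it rests on the false lemma.
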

\noindent
So the action admits a constrained minimum on the natural constraint for
every $\omega>\frac{\alpha^2}{N^2}$ if the strength $\alpha$ of the
$\delta$ interaction at the vertex is negative and sufficiently
strong. The strategy of the proof, which is a consequence of results of Section 3 and Section 4, makes use of non trivial elements
and we give here some remarks. To get the existence of the minimum one
has at a certain point to compare the action $S_\omega$ with
$\alpha<0$ with the Kirchhoff action $S_{\omega}^0$. 
\par
In Section 3 we prove that the Kirchhoff action, while bounded from below on its natural
constraint, has no minimum (see \cite{ACFN3} for an analogous
phenomenon affecting the constrained energy functional). As a matter of fact, the infimum can be
exactly computed and it is achieved as the limit over a sequence of functions which escape at
infinity on a single edge. A main step in establishing the previous picture and in applying it to the $\delta$ case, is the exact calculation of the
infimum and the identification of the minimizers of the free action; to
this end one exploits an extension and generalization of the classical properties of
symmetric rearrangements of $L^p$ and $H^1$ functions to the case of
graphs.\par  
In Section 4 we prove Theorem 1. The analysis follows in part
proofs of similar results for singular interactions on the line given in \cite{FMO,[AN12]}, with major modifications due
to the fact that in this case the comparison with the free case is not
standard. In particular the upper bound in $\alpha$ given in the statement of Theorem 1 is a
consequence of the fact that one needs the condition ${\rm inf}
S_\omega<{\rm inf} S_\omega^0$ to guarantee the existence of an
absolute minimum in the constrained action, penalizing situations analogous to 
escaping minimizers of the free action. A sufficiently strong
attractive interaction at the vertex allows to satisfy the previous
condition. 

\noindent
We conjecture that the action has a {\it local} constrained
minimum that is larger than the infimum when the condition on $\alpha$ fails,
but presently we do not have a proof of this fact. 

In 
Section 5 an explicit construction of all the stationary states of the
problem is obtained, by solving the stationary equation for every value
of $\alpha$. It turns out that for every $N$ and $\omega>\frac{\alpha^2}{N^2}$ there exist
families $\{\Psi_{\omega,j}\}$ of stationary states of different
action and energy, which can be ordered in $j$ to form a nonlinear
spectrum (the family is unique only in the case $N=2$, i.e. the line). The
state of minimal action is the ground state $\Psi_{\omega,0}$, which is
of course the solution to the constrained minimum problem for the action just
discussed. The others are excited states, and they exists, for every $j$, when $\omega>\frac{\alpha^2}{(N-2j)^2}$. 
See Theorem 4 for a complete description.\par
Finally, in
Section 6 we study the stability of ground states. Stability is an
important requisite of a standing wave, because at a physical level
unstable states are rapidly
dominated by dispersion, drift or blow-up and so are undetectable (instability of NLS with a $\delta$ potential on the line is studied, partly numerically, in \cite{LeCFFKS}). The concept of stability,
due to gauge or $U(1)$ invariance of the action, is orbital
stability. The solutions remain close to the orbit
$e^{i\theta}\Psi_{\omega,0}$ of the ground state for all times if they
start close enough to it. The framework in which we study orbital
stability of the ground state is the mainstream of Weinstein and
Grillakis-Shatah-Strauss theory, which applies to infinite dimensional Hamiltonian systems such as abstract NLS equation when a regular branch of standing waves $\omega\mapsto\Psi_{\omega,0}$ (not necessarily ground states) exists, which is our case. \par\noindent According to this theory, to guarantee
orbital stability one needs to verify a set of spectral conditions on
the linearization of the NLS around the ground state, and a slope (or
Vakhitov-Kolokolov in the physical literature) condition concerning the behavior in $\omega$ of
the $L^2$-norm of the ground state.  Some adaptation of standard methods is needed to
treat the singular character of the interaction at the vertex, but in
fact it turns out that in the range of $\alpha$ over which Theorem $1$ is valid, the spectral conditions and the slope
condition are encountered for every $\omega$ and every nonlinearity in $\mu\in (0,2]$.
\begin{teo}[Orbital stability of the ground state] \mbox{} \label{orbitalstab}
 Let $\mu \in (0,2]$,
$\alpha<\alpha^*<0$, $\omega>\frac{{\alpha}^2}{N^2}$. Then the ground state
$\Psi_{\omega,0}\ $ is orbitally stable in ${\mathcal E}\ .$
\end{teo}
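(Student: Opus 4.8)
The plan is to apply the abstract orbital stability theory of Grillakis--Shatah--Strauss and Weinstein, which applies once three ingredients are in place: local well-posedness of the flow in $\EE$, furnished by Section~\ref{s:2}; a $C^1$ branch $\ome\mapsto\Psi_{\ome,0}$ of standing waves, furnished by the explicit construction of Section 5; and a set of spectral and convexity conditions on the Hessian of the action at the ground state. Writing a perturbation of the real, positive profile as $\Psi_{\ome,0}+u+iv$ with $u,v$ real, the second variation $S_\ome''[\Psi_{\ome,0}]$ decouples into the two self-adjoint operators
$$
L_+ \ = \ -\lp + \ome - (2\mu+1)\Psi_{\ome,0}^{2\mu}, \qquad L_- \ = \ -\lp + \ome - \Psi_{\ome,0}^{2\mu},
$$
each realized on $L^2(\GG)$ with the same $\delta$ boundary condition \eqref{domdelta} carried by $H$. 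The criterion then reads: if $L_-\geq 0$ with kernel spanned by $\Psi_{\ome,0}$, if $L_+$ has exactly one negative eigenvalue and trivial kernel, and if $\partial_\ome\|\Psi_{\ome,0}\|^2>0$, then $\Psi_{\ome,0}$ is orbitally stable.

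I would handle the spectral conditions by exploiting the permutation symmetry of the edges, under which the (symmetric) ground state is invariant: $L^2(\GG)$ splits into the one--dimensional symmetric sector $\psi_1=\dots=\psi_N$, on which the vertex condition becomes the Robin condition $N\psi'(0)=\al\psi(0)$ on the half--line, and the $(N-1)$--fold orthogonal sector $\sum_i\psi_i=0$, on which continuity forces $\psi(0)=0$ and the condition becomes Dirichlet. For $L_-$ one uses that $\Psi_{\ome,0}>0$ solves $L_-\Psi_{\ome,0}=0$: a positive zero mode is the ground state, so $L_-\geq 0$, and a Sturm/ground--state argument gives strict positivity in the Dirichlet sector, whence $\ker L_-=\mathrm{span}\{\Psi_{\ome,0}\}$, encoding the $U(1)$ gauge invariance. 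For $L_+$ the count $n(L_+)=1$ follows from the variational characterization of Theorem~\ref{mainvar}: since $\Psi_{\ome,0}$ minimizes $S_\ome$ on the codimension--one Nehari manifold, $S_\ome''[\Psi_{\ome,0}]$ is nonnegative on its tangent space, forcing at most one negative direction, while the homogeneity identity $(L_+\Psi_{\ome,0},\Psi_{\ome,0})=-2\mu\|\Psi_{\ome,0}\|_{2\mu+2}^{2\mu+2}<0$ exhibits exactly one. Triviality of $\ker L_+$ follows because the vertex breaks translation invariance, so the would--be zero mode $\Psi_{\ome,0}'$ fails the boundary condition; this is checked directly on the explicit profile.

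It then remains to verify the slope (Vakhitov--Kolokolov) condition. Setting $d(\ome)=S_\ome[\Psi_{\ome,0}]$ one has $d'(\ome)=M[\Psi_{\ome,0}]=\tfrac12\|\Psi_{\ome,0}\|^2$, so the criterion $d''(\ome)>0$ is exactly $\partial_\ome\|\Psi_{\ome,0}\|^2>0$. Using the explicit edge profile from Section 5, a shifted soliton $\Psi_{\ome,0}=[(\mu+1)\ome]^{1/(2\mu)}\sech^{1/\mu}(\mu\sqrt\ome\,(x+a))$ with the shift $a>0$ fixed by $\tanh(\mu\sqrt\ome\,a)=-\al/(N\sqrt\ome)$, one computes
$$
\|\Psi_{\ome,0}\|^2 \ = \ \frac{N(\mu+1)^{1/\mu}}{\mu}\,\ome^{\frac{2-\mu}{2\mu}}\int_{\mu\sqrt\ome\,a}^{\infty}\sech^{2/\mu}(y)\,dy .
$$
For $0<\mu<2$ both the prefactor $\ome^{(2-\mu)/(2\mu)}$ and the integral (whose lower limit $\mu\sqrt\ome\,a=\arctanh(-\al/(N\sqrt\ome))$ decreases with $\ome$) are strictly increasing, so the condition holds. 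At the critical power $\mu=2$ the prefactor is constant, and strict positivity is supplied precisely by the attractive vertex: the decreasing lower limit makes the integral strictly increasing, whereas for the free soliton ($\al=0$) the mass would be $\ome$--independent. This yields $\partial_\ome\|\Psi_{\ome,0}\|^2>0$ throughout $\ome>\al^2/N^2$ for all $\mu\in[0,2]$ (the requirement $\al<\al^*$ being inherited from Theorem~\ref{mainvar}), and the abstract theorem concludes orbital stability in $\EE$.

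The hard part will be the spectral analysis of $L_+$ and $L_-$ in the presence of the singular vertex: one must ensure that the self--adjoint realizations inherit exactly the $\delta$ boundary condition, run the symmetry reduction rigorously, and extract both $n(L_+)=1$ and $\ker L_+=\{0\}$ from the interplay of the Nehari minimality and the explicit profile. A secondary but delicate point is the strict slope inequality at and near the critical exponent $\mu=2$, where the free soliton is exactly marginal, so the positivity of $\partial_\ome\|\Psi_{\ome,0}\|^2$ must be read off from the $\al$--dependent correction induced by the vertex rather than from a scaling argument.
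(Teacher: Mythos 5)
Your proposal is correct and, in its overall architecture, coincides with the paper's proof: Grillakis--Shatah--Strauss/Weinstein theory, with well-posedness from Section~\ref{s:2}, the explicit regular branch $\ome\mapsto\Psi_{\ome,0}$ from Section~\ref{sezione5}, the Morse-index count obtained from Nehari minimality (Theorem~\ref{mainvar}) together with the negativity of the second variation evaluated at $\Psi_{\ome,0}$ itself, kernel triviality of the operator with the $(2\mu+1)$ coefficient by checking that the unique decaying solution of the zero-energy ODE (the derivative of the profile) is incompatible with the vertex conditions, and the Vakhitov--Kolokolov condition read off the explicit mass, where the $\al$-dependent lower limit $\arctanh\bigl(|\al|/(N\sqrt{\ome})\bigr)$ supplies strict positivity even at $\mu=2$; your mass formula agrees with the paper's computation \eqref{VK}. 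Be aware only that your labels $L_\pm$ are interchanged with respect to the paper's $\mathcal{L}_\pm$ (the paper attaches $\mathcal{L}_-$ to the $(2\mu+1)$ potential), a purely notational clash. The one point where your route genuinely differs is the operator whose kernel encodes gauge invariance (your $L_-$, the paper's $\mathcal{L}_+$): the paper proves nonnegativity and the kernel characterization in a single stroke via the ground-state factorization $(\mathcal{L}_+V,V)=\sum_{k}\int_0^{\infty}(\Psi_{\ome,0})_k^2\,\bigl|\tfrac{d}{dx}\bigl(v_k/(\Psi_{\ome,0})_k\bigr)\bigr|^2\,dx$ plus a boundary term that vanishes precisely because of the $\delta$ condition, whereas you reduce by permutation symmetry to a Robin problem on the half-line (where the positive zero mode $\phi_s(\cdot+a)$ must be the ground state) and an $(N-1)$-fold Dirichlet sector (strictly positive, since the only $L^2$ zero-energy solution is a translate of $\phi_s$, which does not vanish at the origin). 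Both arguments are sound: the factorization is shorter and treats all edges at once, while your symmetry reduction localizes the spectral information sector by sector, which could be an advantage for the open problem of the stability of the excited states $\Psi_{\ome,j}$, where the simple positivity structure exploited by the factorization is no longer available.
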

The proof of this result is contained in Section 6.
\n
Notice that one has orbital stability of the ground state in a
range of nonlinearities which includes the critical case. An
analogous phenomenon occurs in the case of the line, previously
treated in \cite{LeCFFKS}. This marks a difference with the case of a
free ($\alpha=0$) NLS on the line, where one has orbital instability
in the critical case. Finally, the proof of the previous theorem (see Remark 6.1) shows
that for supercritical nonlinearities $\mu>2$ the ground state is
orbitally stable for not too large $\omega$: there exists a threshold
$\omega^*>\frac{\alpha^2}{N^2}$ such that one has orbital stability
for the ground state $\Psi_{\omega,0}$ with $\frac{\alpha^2}{N^2} <
\omega < \omega^*$ and orbital instability in the opposite
case. 

Appendix A contains a theory of symmetric rearrangements on star graphs.
More precisely, the classical inequalities stating conservation of
$L^p$ norms and domination of kinetic energy are proved. This last
property, i.e. the P\'olya-Szeg\H{o} inequality, is particularly
interesting because it changes with respect to the case of the line
through the presence of a factor which takes into account the number of
edges of the graph, and this fact is crucial in the previously described analysis of action minimization on a star graph. 
A previous analysis of rearrangements on bounded graphs is contained in \cite{[Fri05]}, and a comparison of the two treatments is given at the end of Appendix A.
We stress the fact that the theory of rearrangements is a general tool and
it is in principle applicable to more general or different problems. 

\par We end this introduction with a few open problems and future
directions of study. Concerning technical issues, a different strategy
from the one here pursued in the analysis of ground states and their
stability is minimization of energy at constant mass (see the
classical paper \cite{[CL]} and for models related to the present one
\cite{ANV}); it requires a non trivial extension to graphs of
concentration-compactness method and it is studied in \cite{[ACFN4]}. Nothing is known up to now about stability
properties of the branches of excited states $\Psi_{\omega, j}$, which
exist for every $N>2$ and sufficiently high $\omega$; this is a
subject of special interest because there are only few cases where
excited states of NLS equations are explicitly known. The authors plan
to study this issue in a subsequent paper. Finally it would be interesting, and perhaps a difficult task, the extension of the analysis here given to different classes of graphs, possibly with non trivial topology. Several results in this direction were recently obtained in \cite{AST14-1, AST14-2, CFN14}. Dispersion properties, relevant to give precise large time behavior of solutions have been studied for trees, including star graphs, in \cite{BI1,BI2}. This is a first step for the analysis of possible asymptotic stability of standing waves on networks.
All these issues will need the development of new technical tools, both concerning variational analysis and stability properties.

\section{Well-posedness of  the model}
\label{s:2}
For our purposes it is sufficient to prove that the solution of the Schr\"odinger 
equation is uniquely defined in time in the energy domain and that
energy and mass are conserved quantities. This section is devoted to
the proof of these conservation laws and of  the  well-posedness of
equation \eqref{diffform}.  
\noindent
In fact along the proofs we shall always work with the weak form of \eqref{diffform}, namely
\begin{equation}
\label{intform1}
\Psi_t \ = \ e^{-iH t} \Psi_0 + i \int_0^t e^{-i
  H (t-s)}|\Psi_s|^{2\mu} \Psi_s \, ds\,.
\end{equation}
We consider the problem of the well-posedness in the sense of, e.g.,
\cite{[Caz]}, i.e., we prove  existence and uniqueness of the solution to equation \eqref{intform1}
in the energy domain of the system. Such a domain turns out
to coincide with the form domain of the linear part of equation
\eqref{diffform}. 
\noindent
We follow the traditional line of proving first  local
well-posedness,
and then extending it to all times by means of
a priori estimates provided by the
conservation
laws. 
\noindent
Proceeding as in \cite{ACFN1} where the cubic NLS is treated, we show
the well-posedness of the dynamics for any $\mu>0$, i.e. local existence
and 
uniqueness for initial data in the energy space. Moreover, we will
prove that  if $0<\mu<2$, then the well-posedness is
global, i.e. the solution exists for all times and no collapse occurs.
For a more extended treatment of the analogous problem for a 
two-edge vertex (namely, the real line with a point interaction at the
origin) see \cite{[AN09]}.

\n
We endow the energy domain $\EE$ with the $H^1$-norm defined in \eqref{sobbo}. 
Moreover  we denote by $\EE^\star$ the dual of $\EE$, i.e. the
set of the continuous linear functionals on $\EE$. 
We denote the dual product of $\Gamma \in \EE^\star$ 
and $\Psi \in\EE$
by
$
\langle \Gamma, \Psi \rangle$. In such a bracket we sometimes exchange the
place of the factor in $\EE^\star$ 
with the place of the factor in $\EE$: indeed, the duality product follows the same
algebraic rules of the standard scalar product.

\n
As usual, one can extend the action of $H$ to the space
$\EE$, with values in $\EE^\star$, by
$$
\langle H \Psi_1 , \Psi_2 \rangle \ : = B[\Psi_1, \Psi_2 ],
$$
where $B[\cdot,\cdot]$ denotes the bilinear form associated to the selfadjoint operator $H$.

\n
Furthermore, for any $\Psi \in \EE$
the identity
\begin{equation}
\label{extder}
\f d {dt} e^{-i H t} \Psi \ = \ - i H
e^{-i H t} \Psi
\end{equation}
 holds in $\EE^\star$
too. To prove it, one can first test the functional $\f d {dt} e^{-i
  H t} \Psi$ on an element $\Xi$ in the operator domain $\EE$, obtaining
\begin{equation*} \nonumber
\left\langle \f d {dt} e^{-i
  H t} \Psi, \Xi \right\rangle \ = \ \lim_{h \to 0} \left( \Psi, \f {e^{i
  H (t+h)} \Xi - e^{i H t} \Xi} h \right)\ = \ ( \Psi, i H e^{i H
t} \Xi) \ = \
\langle -i H e^{-i H t} \Psi, \Xi \rangle.
\end{equation*}
Then, the result can be extended to $\Xi \in \EE$ by a density argument.

\n
Besides, by \eqref{extder}, the differential version \eqref{diffform} of the
Schr\"odinger equation holds in $\EE^\star$.

In order to prove a well-posedness result we
need to generalize standard one-dimensional Gagliardo-Nirenberg
estimates to graphs, i.e.
\begin{equation}
\label{gajardo}
\| \Psi \|_{p} \ \leqslant \ C \| \Psi^\prime \|^{\f 1 2 - \f 1 p}
\| \Psi \|^{\f 1 2 + \f 1 p},
\end{equation}
where the $C > 0$ is a positive constant which depends on the index $p$ only.
The proof of \eqref{gajardo}
follows immediately from the analogous estimates for functions of the
real line,
considering that any function in $H^1 (\erre^+)$ can be extended to an even function
in $H^1 (\erre)$, and applying this reasoning to each component of $\Psi$ (see also \cite[I.31]{MPF91}).

\begin{prop}[Local well-posedness in $\EE$] \mbox{ }
\label{loch2}

\n Let $\mu >0$. For any $\Psi_0 \in \EE$, there exists $T > 0$ such that the
equation \eqref{intform1} has a unique solution $\Psi \in C^0 ([0,T),
\EE )
\cap C^1 ([0,T), \EE^\star)$. Moreover, eq. \eqref{intform1} has a maximal solution $\Psi^{\rm{max}}$
defined on an interval of the form $[0, T^\star)$, and the following ``blow-up
alternative''
holds: either $T^\star = \infty$ or
$$
\lim_{t \to T^\star} \| \Psi_t^{\rm{max}} \|_{\EE}
\ = \ + \infty,
$$
where we denoted by $\Psi_t^{\rm{max}}$ the function $\Psi^{\rm{max}}$ evaluated at time $t$.
\end{prop}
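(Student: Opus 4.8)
The plan is to prove local well-posedness by a standard fixed-point (contraction mapping) argument applied to the integral equation \eqref{intform1}, and then to obtain the maximal solution and the blow-up alternative by the usual continuation procedure. First I would define, for fixed $T>0$ and a suitable radius, the closed ball in the Banach space $X_T := C^0([0,T],\EE)$ centered at the free evolution, and introduce the map
\[
\Phi(\Psi)_t \ := \ e^{-iHt}\Psi_0 + i \int_0^t e^{-iH(t-s)} |\Psi_s|^{2\mu}\Psi_s \, ds.
\]
The goal is to show that for $T$ small enough $\Phi$ maps this ball into itself and is a contraction, whence it has a unique fixed point, which is the desired solution. The two ingredients that make this work are, on one hand, the fact that $e^{-iHt}$ is a unitary group on $L^2(\GG)$ that also acts boundedly on $\EE$ (it commutes with $H$ and hence preserves the form domain), so that $\| e^{-iH(t-s)}\,\cdot\,\|_{\EE}$ is controlled; and on the other hand the nonlinear estimate for the Nemytskii-type operator $F(\Psi):=|\Psi|^{2\mu}\Psi$.

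The key analytic step is therefore to establish that $F$ maps $\EE$ into $\EE^\star$ (or, more conveniently, into $\EE$ after integration against the propagator) with the local Lipschitz bound
\[
\| F(\Psi) - F(\Phi) \|_{\EE^\star} \ \leq \ C\big( \|\Psi\|_{\EE}^{2\mu} + \|\Phi\|_{\EE}^{2\mu}\big)\, \|\Psi-\Phi\|_{\EE}.
\]
Here I would use the pointwise inequality $\big||z|^{2\mu}z-|w|^{2\mu}w\big| \leq C(|z|^{2\mu}+|w|^{2\mu})|z-w|$, pass to $L^p$ norms on each component $\psi_j$ on $\erre^+$, and invoke the graph Gagliardo-Nirenberg inequality \eqref{gajardo} together with the Sobolev embedding $H^1(\GG)\hookrightarrow L^\infty(\GG)$ to turn the resulting products of $L^p$ norms into powers of $\|\cdot\|_{\EE}$. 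Working componentwise is legitimate because the $\EE$-norm is the Hilbert sum of the $H^1(\erre^+)$-norms and the boundary constraint $\psi_1(0)=\dots=\psi_N(0)$ is linear, hence preserved under the relevant operations. Combining this Lipschitz bound with the factor $T$ (or a positive power of $T$) coming from the time integral yields both the self-mapping property and the contraction constant strictly less than $1$ once $T$ is taken sufficiently small; the Banach fixed-point theorem then delivers the unique $\Psi\in C^0([0,T),\EE)$, and \eqref{extder} together with \eqref{diffform} holding in $\EE^\star$ upgrades this to $\Psi\in C^1([0,T),\EE^\star)$.

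For the second half of the statement, I would note that the length $T$ of the interval of existence produced above depends on $\Psi_0$ only through $\|\Psi_0\|_{\EE}$. Hence one may restart the construction at the endpoint, and a standard gluing/continuation argument produces a maximal existence interval $[0,T^\star)$ on which the solution is unique. The blow-up alternative follows by contradiction: if $T^\star<\infty$ but $\liminf_{t\to T^\star}\|\Psi_t\|_{\EE}$ were finite, then along a sequence $t_n\to T^\star$ the norm $\|\Psi_{t_n}\|_{\EE}$ stays bounded by some $M$, and the local existence time associated with data of size $M$ is a fixed $\tau>0$; restarting from such a $t_n$ with $t_n>T^\star-\tau$ would extend the solution beyond $T^\star$, contradicting maximality. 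I expect the main obstacle to be the nonlinear estimate itself: one must be careful that for general $\mu>0$ the map $F$ is only locally Lipschitz (not globally), and that the correct functional setting is duality into $\EE^\star$ rather than into $\EE$, so that the interplay between the propagator smoothing, the Gagliardo-Nirenberg inequality \eqref{gajardo}, and the boundary condition at the vertex must be handled so that the constants are uniform in the relevant ball. Everything else is routine adaptation of the line case treated in \cite{ACFN1,[AN09],[Caz]}.
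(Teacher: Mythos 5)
Your overall skeleton coincides with the paper's: a contraction argument in a ball of $C^0([0,T],\EE)$ (the paper works in $L^\infty([0,T),\EE)$ and then bootstraps), a Gagliardo--Nirenberg/Sobolev estimate for the power nonlinearity, and the standard continuation argument for the maximal solution and the blow-up alternative. However, there is a genuine flaw in the functional setting of your key estimate. You propose to bound the nonlinearity $F(\Psi)=|\Psi|^{2\mu}\Psi$ only in the dual norm, $\|F(\Psi)-F(\Phi)\|_{\EE^\star}\leq C\bigl(\|\Psi\|_{\EE}^{2\mu}+\|\Phi\|_{\EE}^{2\mu}\bigr)\|\Psi-\Phi\|_{\EE}$, and you state explicitly that ``the correct functional setting is duality into $\EE^\star$ rather than into $\EE$'', invoking ``propagator smoothing'' to repair this. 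That cannot work: $e^{-iHt}$ is a unitary group commuting with functions of $H$, so it maps $\EE$ into $\EE$ and $\EE^\star$ into $\EE^\star$ boundedly, but it has no smoothing effect on the Sobolev scale. Hence, if the integrand $e^{-iH(t-s)}F(\Psi_s)$ is only known to lie in $\EE^\star$, the Duhamel integral is only an $\EE^\star$-valued function, and you can never verify that your map sends the ball of $C^0([0,T],\EE)$ into itself; the fixed-point argument does not close.

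The fix is exactly what the paper does, and you in fact already list every ingredient needed for it: prove the estimate in $\EE$ itself, namely $\| |\Phi|^{2\mu}\Phi \|_{\EE}\leq C\|\Phi\|_{\EE}^{2\mu+1}$, together with its Lipschitz counterpart in the $\EE$-norm. This is possible precisely because the problem is one-dimensional: $H^1(\erre^+)\hookrightarrow L^\infty(\erre^+)$, the pointwise bound $|(|\phi|^{2\mu}\phi)'|\leq C|\phi|^{2\mu}|\phi'|$, H\"older and \eqref{gajardo} give the $H^1$ bound on each edge, and the vertex condition is preserved because $F$ acts identically on every component, so equal boundary values are mapped to equal boundary values. (Note that this last point is not because the constraint is ``linear'', as you assert --- $F$ is nonlinear --- but because $F$ is diagonal and the same on each edge.) With the nonlinearity mapping bounded sets of $\EE$ into $\EE$, your contraction in $C^0([0,T],\EE)$, the upgrade to $C^1([0,T),\EE^\star)$ via the validity of \eqref{diffform} in $\EE^\star$, and your continuation/blow-up argument (which is correct as written, and is the paper's own: the local existence time depends on the datum only through its $\EE$-norm, so a finite limsup of the norm at $T^\star$ would allow extension past $T^\star$) all go through.
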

\begin{proof}
We define the space
$
{\mathcal X} : = L^\infty ([0,T), \EE),$
endowed with the norm
$
\| \Psi \|_{\mathcal X} \ : = \ \sup_{t \in [0,T)} \| \Psi_t
\|_{\EE}.
$
Given $\Psi_0 \in \EE$, we define the map $G : {\mathcal X}
\longrightarrow {\mathcal X}$ as
$$
G \Phi : = e^{- i H \cdot} \Psi_0 + i \int_0^\cdot
e^{- i H (\cdot - s)} | \Phi_s |^{2\mu} \Phi_s \, ds.
$$
We first notice that the nonlinearity preserves the space $\EE$. Then by $|(|\phi|^{2\mu}\phi)' |\leq C |\phi|^{2\mu} |\phi'|$ and using H\"older and Gagliardo-Nirenberg inequalities, one
obtains
$$
\| | \Phi_s |^{2\mu} \Phi_s \|_{\EE} \ \leqslant \ C \| \Phi_s \|_{\EE}^{2\mu+1},
$$
so that
\begin{equation}
\label{contraz1}
\begin{split}
\| G \Phi \|_{\mathcal X} \ \leqslant \ & \| \Psi_0 \|_{\EE} + C \int_0^T
\| \Phi_s \|_{\EE}^{2\mu+1} \, ds \
\leqslant \ \| \Psi_0 \|_{\EE} + C T \| \Phi \|_{\mathcal X}^{2\mu+1} \,.
\end{split}
\end{equation}
Analogously, given $\Phi, \Xi \in \EE$, one has 
\begin{equation}
\label{contraz2}
\begin{split}
\| G \Phi - G \Xi \|_{\mathcal X} \ \leqslant \ & C T
\left( \| \Phi \|_{\mathcal X}^{2\mu} + \| \Xi \|_{\mathcal X}^{2\mu} \right)
\| \Phi - \Xi \|_{\mathcal X}\,.
\end{split}
\end{equation}
We point out that the constant $C$ appearing in \eqref{contraz1} and
\eqref{contraz2} is independent of $\Psi_0$, $\Phi$, and $\Xi$.
Now let us restrict the map $G$ to elements $\Phi$ such that $\| \Phi
\|_{\mathcal X} \leqslant 
2 \| \Psi_0 \|_{\EE}$. From \eqref{contraz1} and \eqref{contraz2}, if
$T$ is chosen to be strictly less than $(8C \| \Psi_0 \|_{\EE}^{2\mu})^{-1}$, then
 $G$ is a contraction of the ball in ${\mathcal X}$ of radius
$ 2 \| \Psi_0 \|_{\EE}$, and so,
by the contraction lemma,
there exists a unique solution to \eqref{intform1} in the
time
interval $[0, T)$. By a standard one-step bootstrap argument one
immediately has that the solution actually belongs to $C^0 ([0,T),
\EE)$, and
due to the validity of \eqref{diffform} in the space
$\EE^\star$ we immediately have that the solution
$\Psi$ actually belongs to $C^0 ([0,T), \EE)) \cap
C^1 ([0,T),\EE^\star)$.

The proof of the
existence of a maximal solution is standard, while
the blow-up alternative is a consequence of the fact that,
whenever the $\EE$-norm of the solution is
finite, it is possible to extend it for a further time by the same
contraction
argument.
\end{proof}

The next step consists in the proof of the conservation laws.
\begin{prop}[Conservation laws] \mbox{ }

\n Let $\mu>0$. For any solution $\Psi \in C^0 ([0,T), \EE)
\cap C^1 ([0,T), \EE^\star)$ to
the problem \eqref{intform1}, the following conservation laws hold at
any time $t$:
\begin{equation*}
M[\Psi_t ] \ = M[ \Psi_0 ], \qquad
E[ \Psi_t ] \ = \ E[ \Psi_0 ].
\end{equation*}
\end{prop}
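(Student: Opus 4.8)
The plan is to treat the two conservation laws by different routes. Mass conservation follows directly from the rigged-Hilbert-space structure $\EE\hookrightarrow L^2(\GG)\hookrightarrow\EE^\star$, while energy conservation, because the functional $E$ cannot be expressed through the $L^2$ pairing alone, will be reached by first proving it for regular solutions and then passing to the limit.

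For the mass I would use the standard differentiation formula for a Gelfand triple: any $\Psi\in C^0([0,T),\EE)\cap C^1([0,T),\EE^\star)$ produces a $C^1$ map $t\mapsto M[\Psi_t]=\frac12\|\Psi_t\|^2$ with
\[
\frac{d}{dt}M[\Psi_t] \ = \ \Re\,\langle \dot\Psi_t,\Psi_t\rangle .
\]
Inserting the equation in the form $\dot\Psi_t=-i\big(H\Psi_t-|\Psi_t|^{2\mu}\Psi_t\big)$, which holds in $\EE^\star$ by \eqref{extder} and \eqref{diffform}, and using that $\langle H\Psi_t,\Psi_t\rangle=B[\Psi_t,\Psi_t]\in\erre$ while $\langle|\Psi_t|^{2\mu}\Psi_t,\Psi_t\rangle=\|\Psi_t\|_{2\mu+2}^{2\mu+2}\in\erre$, the pairing $\langle\dot\Psi_t,\Psi_t\rangle$ is purely imaginary, so its real part vanishes and $M[\Psi_t]$ is constant. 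Here only the reality of the two pairings is used, so the computation is insensitive to normalization conventions for $B$.

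For the energy I would first establish the identity along regular solutions. If $\Psi\in C^0([0,T),\DD(H))\cap C^1([0,T),L^2(\GG))$ and the equation holds strongly in $L^2(\GG)$, then writing $E[\Psi]=\frac12(H\Psi,\Psi)-\frac{1}{2\mu+2}\|\Psi\|_{2\mu+2}^{2\mu+2}$ and using self-adjointness of $H$ for the quadratic term together with the chain rule $\frac{d}{dt}\frac{1}{2\mu+2}\|\Psi_t\|_{2\mu+2}^{2\mu+2}=\Re(|\Psi_t|^{2\mu}\Psi_t,\dot\Psi_t)$ for the nonlinear term, one obtains
\[
\frac{d}{dt}E[\Psi_t] \ = \ \Re\big(H\Psi_t-|\Psi_t|^{2\mu}\Psi_t,\dot\Psi_t\big) \ = \ \Re\,(i\dot\Psi_t,\dot\Psi_t) \ = \ 0 .
\]
To reach an arbitrary datum $\Psi_0\in\EE$ I would approximate it in $\EE$ by regular data, invoke the Lipschitz dependence of the flow encoded in \eqref{contraz1}--\eqref{contraz2} (as in Proposition \ref{loch2}) to obtain convergence of the solutions in $C^0([0,T'],\EE)$ on a common interval, and pass to the limit using that $E$ is continuous on $\EE$: the power term is controlled by the Gagliardo-Nirenberg inequality \eqref{gajardo} and the boundary term by continuity of the trace $\Psi\mapsto\psi_1(0)$ on $H^1(\erre^+)$.

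The main obstacle is precisely the step ``along regular solutions.'' The focusing nonlinearity does not preserve the operator domain: if $\Psi\in\DD(H)$ has common vertex value $c=\psi_1(0)$, a direct computation gives $\sum_{i}\big(|\psi_i|^{2\mu}\psi_i\big)'(0)=(1+2\mu)\,\alpha\,|c|^{2\mu}c$, which differs from the value $\alpha|c|^{2\mu}c$ required by the $\delta$ condition as soon as $\alpha\neq0$ and $c\neq0$. Hence $|\Psi|^{2\mu}\Psi\notin\DD(H)$ in general, the Duhamel term fails to remain in $H^2(\GG)$, and the naive persistence of regularity that would legitimize the strong differentiation above is unavailable. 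I would circumvent this by regularizing: applying the smoothing operators $J_n=(1+n^{-1}H^2)^{-1}$, which are self-adjoint, commute with $e^{-iHt}$, map $L^2(\GG)$ into $\DD(H)$ and converge strongly to the identity, one obtains approximations $\Psi^n=J_n\Psi\in C^0(\DD(H))\cap C^1(L^2(\GG))$ for which the above differentiation is licit up to a commutator error $J_n|\Psi|^{2\mu}\Psi-|\Psi^n|^{2\mu}\Psi^n$; letting $n\to\infty$ and controlling this error by means of \eqref{gajardo} yields the identity for $\Psi$ itself. This parallels the analysis of the two-edge vertex in \cite{[AN09]} and of the cubic case in \cite{ACFN1}.
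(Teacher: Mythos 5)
Your mass argument coincides with the paper's: both differentiate $M[\Psi_t]$ using the equation in $\EE^\star$ and the reality of the pairings $\langle H\Psi_t,\Psi_t\rangle$ and $\langle|\Psi_t|^{2\mu}\Psi_t,\Psi_t\rangle$. For the energy, however, you take a genuinely different route. The paper never leaves the energy class and never regularizes: it differentiates the quadratic form directly, writing the difference quotient of $\langle\Psi_t,H\Psi_t\rangle$ as $\langle\frac{\Psi_{t+h}-\Psi_t}{h},H\Psi_{t+h}\rangle+\langle H\Psi_t,\frac{\Psi_{t+h}-\Psi_t}{h}\rangle$ (legitimate because $H$ extends to a bounded map $\EE\to\EE^\star$ and $\dot\Psi_t$ exists in $\EE^\star$), which in the limit gives $\frac{d}{dt}\langle\Psi_t,H\Psi_t\rangle=-2\Im\langle|\Psi_t|^{2\mu}\Psi_t,H\Psi_t\rangle$; the very same pairing, with coefficient $-2(\mu+1)$, arises when differentiating $(\Psi_t,|\Psi_t|^{2\mu}\Psi_t)=(\Psi_t^{\mu+1},\Psi_t^{\mu+1})$, and the two contributions cancel identically in $\frac{d}{dt}E[\Psi_t]$. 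The only structural input is that the nonlinearity maps $\EE$ into $\EE$, so that $\langle|\Psi_t|^{2\mu}\Psi_t,H\Psi_t\rangle$ is a well-defined $\EE$--$\EE^\star$ pairing. Your route --- strong solutions plus mollification by $J_n=(1+n^{-1}H^2)^{-1}$ --- is the classical Cazenave-style argument; it is more robust, since it does not rely on the two time derivatives being proportional to one and the same quantity, and your computation $\sum_i\big(|\psi_i|^{2\mu}\psi_i\big)'(0)=(1+2\mu)\alpha|c|^{2\mu}c\neq\alpha|c|^{2\mu}c$ is correct and pinpoints exactly why persistence of $\DD(H)$-regularity fails and why the paper's form-level shortcut is the natural one for this model. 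One step of your sketch needs more care than you allot to it: after mollifying, the commutator error $J_n(|\Psi_t|^{2\mu}\Psi_t)-|J_n\Psi_t|^{2\mu}J_n\Psi_t$ enters the energy computation paired against $HJ_n\Psi_t$, whose $L^2$-norm is in general unbounded as $n\to\infty$ (the solution is not in $\DD(H)$); the limit must therefore be run in the $\EE$--$\EE^\star$ duality (error tending to $0$ in $\EE$, while $HJ_n\Psi_t=J_nH\Psi_t\to H\Psi_t$ in $\EE^\star$), and the Gagliardo--Nirenberg inequality \eqref{gajardo} alone does not encode this. Also, your preliminary reduction to regular initial data via the Lipschitz flow is redundant once you mollify the solution itself.
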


\begin{proof}
The conservation of the $L^2$-norm can be immediately obtained by the
validity of equation \eqref{diffform} in the space $\EE$:
$$
\f d {dt} M[\Psi_t]  \ = \, {\rm{Re}}  \, \left\langle \Psi_t ,
\f d {dt} \Psi_t \right\rangle \ = \ 0
$$
by the selfadjointness of $H$.
In order to prove the conservation of the energy, first we notice that
$\langle \Psi_t, H \Psi_t \rangle$ is differentiable as a function of time. Indeed,
\begin{equation*}
\begin{split} &
\f 1 h \left[ \langle \Psi_{t+h}, H \Psi_{t+h} \rangle -
\langle \Psi_{t}, H \Psi_{t} \rangle \right] 
\ = \ \left\langle \f{\Psi_{t+h} - \Psi_t} h, H \Psi_{t+h}
\right\rangle + \left\langle H \Psi_{t} ,\f{\Psi_{t+h} - \Psi_t} h
\right\rangle
\end{split}
\end{equation*}
and then, passing to the limit $h\to0$,
\begin{equation}
\label{previous}
\f d {dt} \left\langle \Psi_t ,
H \Psi_t \right\rangle \ = \ 2 \, {\rm{Re}} \, \left\langle \f d {dt} \Psi_t ,
H \Psi_t \right\rangle \ = \ - 2 \, {\rm{Im}} \, \langle | \Psi_t |^{2\mu}
\Psi_t , H \Psi_t \rangle,
\end{equation}
where we used the selfadjointness
of $H$ and \eqref{diffform}.
Furthermore,
\begin{equation}
\label{naechst}
\f d {dt} (\Psi_t ,
| \Psi_t |^{2\mu} \Psi_t) \ = \ \f d {dt} (\Psi_t^{\mu+1} , \Psi_t^{\mu+1})
 \ = \ -2 (\mu+1) \, {\rm{Im}} \,
\langle | \Psi_t |^{2\mu}
\Psi_t , H \Psi_t \rangle.
\end{equation}
From \eqref{previous} and \eqref{naechst} one then obtains
$$
\f d {dt} E[\Psi_t] \ = \ \f 1 2 \f d {dt} \langle \Psi_t ,
H \Psi_t \rangle - \f 1 {2\mu+2} \f d {dt} (\Psi_t ,
| \Psi_t |^2 \Psi_t)_{L^2} \ = \ 0
$$
and the proposition is proved.
\end{proof}
\begin{cor}[Global well-posedness] \mbox{ }

\n Let $0<\mu<2$. For any $\Psi_0 \in \EE$,  the
equation \eqref{intform1} has a unique solution $\Psi \in C^0 ([0,\infty),
\EE )
\cap C^1 ([0,\infty), \EE^\star)$. 
\end{cor}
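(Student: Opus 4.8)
The plan is the classical one: since local well-posedness and the blow-up alternative are already in hand from Proposition \ref{loch2}, it suffices to establish an a priori bound on the $\EE$-norm (equivalently the $H^1$-norm \eqref{sobbo}) of the maximal solution that is uniform on its whole interval of existence $[0,T^\star)$. The two conservation laws provide exactly the two ingredients of such a bound. Conservation of mass immediately gives $\|\Psi_t\|^2=\|\Psi_0\|^2$ for all $t$, so the $L^2$ part of the $\EE$-norm is controlled once and for all; it then remains only to bound the gradient term $\|\Psi_t'\|^2$.

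For the gradient I would rewrite the conserved energy as
$$
\tfrac12\|\Psi_t'\|^2 \ = \ E[\Psi_0] + \tfrac{1}{2\mu+2}\|\Psi_t\|_{2\mu+2}^{2\mu+2} - \tfrac{\alpha}{2}|\psi_{1,t}(0)|^2,
$$
and estimate the two terms on the right from above. For the nonlinear term I apply the graph Gagliardo--Nirenberg inequality \eqref{gajardo} with $p=2\mu+2$, which after raising to the power $p$ and computing the exponents yields
$$
\|\Psi_t\|_{2\mu+2}^{2\mu+2} \ \leq \ C\,\|\Psi_t'\|^{\mu}\,\|\Psi_t\|^{\mu+2} \ = \ C'\,\|\Psi_t'\|^{\mu},
$$
the last equality using conservation of mass to absorb $\|\Psi_t\|^{\mu+2}=\|\Psi_0\|^{\mu+2}$ into the constant. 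For the boundary term I would invoke the one-dimensional trace estimate $|\psi(0)|^2\leq 2\,\|\psi\|\,\|\psi'\|$ (which follows, as for \eqref{gajardo}, by extending each component evenly to $\RE$), so that $-\tfrac{\alpha}{2}|\psi_{1,t}(0)|^2 \leq |\alpha|\,\|\Psi_t\|\,\|\Psi_t'\| = C''\,\|\Psi_t'\|$, again by conservation of mass. Collecting these gives the scalar inequality
$$
\tfrac12\|\Psi_t'\|^2 \ \leq \ E[\Psi_0] + C'\,\|\Psi_t'\|^{\mu} + C''\,\|\Psi_t'\|,
$$
with constants depending only on $\|\Psi_0\|$, $\alpha$ and $\mu$.

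The decisive point is the hypothesis $0<\mu<2$. Since both exponents $\mu$ and $1$ are strictly less than $2$, the right-hand side of the last inequality grows strictly more slowly than the left-hand side as $\|\Psi_t'\|\to\infty$; hence the inequality can hold only for $\|\Psi_t'\|$ not exceeding the largest root of the corresponding algebraic equation, a bound independent of $t$. This is exactly where subcriticality enters, and it is the only genuinely non-routine step: for $\mu=2$ the nonlinear term scales like $\|\Psi_t'\|^2$ and one would need a smallness condition to close the estimate, while for $\mu>2$ no such a priori control is available (and indeed blow-up may occur).

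Combining $\sup_{t<T^\star}\|\Psi_t'\|<\infty$ with the conserved $\|\Psi_t\|$ gives $\sup_{t<T^\star}\|\Psi_t\|_{\EE}<\infty$, which is incompatible with the second horn of the blow-up alternative of Proposition \ref{loch2}; therefore $T^\star=\infty$ and the maximal solution is global, i.e. $\Psi\in C^0([0,\infty),\EE)\cap C^1([0,\infty),\EE^\star)$. Beyond the a priori estimate itself, the main thing to get right is the clean decoupling afforded by mass conservation, which turns both nonlinear-looking terms into pure powers of $\|\Psi_t'\|$; everything else is the standard bootstrap from local to global well-posedness via the blow-up alternative.
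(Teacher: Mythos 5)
Your proof is correct and follows essentially the same route as the paper: conservation of mass and energy together with the graph Gagliardo--Nirenberg inequality \eqref{gajardo} yields a uniform-in-time bound on $\|\Psi_t'\|$ (using crucially that all exponents appearing are strictly less than $2$ when $\mu<2$), and the blow-up alternative of the local theory then forces $T^\star=\infty$. The only differences are cosmetic: the paper applies \eqref{gajardo} with $p=\infty$ instead of $p=2\mu+2$, and your explicit trace estimate $|\psi(0)|^2\leq 2\|\psi\|\,\|\psi'\|$ for the vertex term $\frac{\alpha}{2}|\psi_1(0)|^2$ is in fact more careful than the paper's one-line lower bound, which absorbs that term without comment.
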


\begin{proof}
By estimate \eqref{gajardo} with $p = \infty$ and conservation of
the $L^2$-norm, there exists
a constant $C$, that depends on $\Psi_0$ only, such that
$$
E[\Psi_0] \ = E[\Psi_t] \ \geq \
\f 1 2 \| \Psi_t^\prime \|^2 - C \| \Psi_t^\prime \|^{\mu}
$$
Therefore a uniform (in $t$) bound on
$ \| \Psi_t^\prime \|$ is obtained. As a consequence,
one has that no blow-up in finite
time can occur, and therefore, by the blow-up alternative, the
solution
is global in time.
\end{proof}


\section{Variational Analysis: the Kirchhoff vertex}

\n
In this section we compute the infimum of the action functional for the Kirchhoff case. 
As often in this framework, the action functional is unbounded from below and we have to restrict it to the Nehari manifold, or
natural constraint manifold, in order to have a functional bounded from below.
The knowledge of the infimum of the constrained action will be
a key ingredient in the next section
in the proof of the main theorem.

\n
The strategy of the computation of the infimum is standard: first we derive a lower bound and then we show that 
this lower bound is optimal by means of a minimizing sequence. In the derivation of the lower bound symmetric 
rearrangements are used. Using this technique we can map the initial variational problem into a variational problem
with symmetric functions which can be reduced to a problem on the halfline providing the required estimate.

\n 
The minimizing sequence shows in fact that the constrained action exhibits a sort of spontaneous symmetry breaking
in the Kirchhoff case. That is, although the functional is symmetric, the minimizing sequence is localized 
on a single edge.

\n As defined in the introduction, in the Kirchhoff case the action
functional is given by
\[
S_\ome^0[\Psi] = E^0[\Psi] +\ome M[\Psi] = \f 1 2 \|\Psi'\|^2 + \f
\ome 2 \| \Psi \|^2 - \f{1}{2\mu +2}  \|\Psi\|_{2\mu+2}^{2\mu+2}, 
\]
while the Nehari functional $I^0_{\ome} $ reads
\[
I_{\ome}^0 [\Psi] =  \| \Psi' \|^2 - \| \Psi \|_{2 \mu+ 2}^{2\mu +2} +\ome\| \Psi \|^{2} .
\]
The Nehari manifold is defined by $ \{ \Psi \in \EE,\; \Psi\neq 0 \text{ s.t. }I_\ome^0 [\Psi] =0\}$.
The action restricted to the Nehari manifold will be named reduced
action and is given by
\be\label{stilde0}
\widetilde S [\Psi]= S_\ome^0 [\Psi]-\f{1}{2}   I_{\ome}^0 [\Psi] = \frac{\mu}{2\mu+2} \| \Psi \|_{2 \mu+ 2}^{2\mu +2}.
\ee
It is understood that the domain of all the functionals is always $\EE$.
\begin{teo}[Infimum of the Action for the Kirchhoff case] \mbox{}

\label{actionkir}
\n
The infimum of the action functional $S^0_{\ome}$ restricted to the Nehari manifold is given by:
\be
\inf \{ S_\ome^0 [\Psi] \text{ s.t. } \Psi\in \EE,\,\Psi\neq 0 , \, I_\ome^0 [\Psi] =0\} \equiv
d^0 (\ome) =  (\mu+1)^{\frac{1}{\mu} } \ome^{ \frac{1}{\mu} + \frac{1}{2} }
\int_0^1 (1-t^2)^{\frac{1}{\mu}} dt .\label{sinf}
\ee
\end{teo}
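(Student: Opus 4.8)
The plan is to follow the two-step strategy announced in the section preamble: first establish the lower bound $S^0_\ome[\Psi]\geq d^0(\ome)$ for every $\Psi$ on the Nehari manifold, and then exhibit a minimizing sequence achieving it. By \eqref{stilde0}, on the constraint the action reduces to $\widetilde S[\Psi]=\frac{\mu}{2\mu+2}\|\Psi\|_{2\mu+2}^{2\mu+2}$, so it suffices to bound $\|\Psi\|_{2\mu+2}^{2\mu+2}$ from below over all $\Psi\in\EE$ satisfying $I^0_\ome[\Psi]=0$. The Nehari constraint reads $\|\Psi'\|^2+\ome\|\Psi\|^2=\|\Psi\|_{2\mu+2}^{2\mu+2}$, which couples the $L^{2\mu+2}$-norm to the full $H^1$-type energy; the idea is to use this together with the Gagliardo--Nirenberg inequality \eqref{gajardo} and a rearrangement argument to reduce everything to a sharp one-dimensional computation.

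First I would invoke the theory of symmetric rearrangements on star graphs developed in Appendix A. For a given $\Psi$ on the Nehari manifold, passing to its decreasing rearrangement $\Psi^*$ preserves the $L^r$-norms (in particular $\|\Psi\|^2$ and $\|\Psi\|_{2\mu+2}^{2\mu+2}$) and, by the graph P\'olya--Szeg\H{o} inequality, does not increase the kinetic energy $\|\Psi'\|^2$. Since decreasing the kinetic term can only decrease the left-hand side of the Nehari relation, the rearranged function may fail to lie exactly on the manifold, but a simple scaling $\Psi^*\mapsto\lambda\Psi^*$ brings it back; one then checks that this scaling does not increase the reduced action. This lets me assume the minimizer is symmetric, i.e. all components equal to a single decreasing profile $\phi$ on $\RE^+$, so that the problem on $\GG$ with $N$ edges becomes a scalar problem on the halfline with the norms carrying a factor $N$.

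The core computation is then the sharp scalar problem on $\RE^+$ (equivalently, via even extension, on $\RE$): minimize $\frac{\mu}{2\mu+2}\|\phi\|_{2\mu+2}^{2\mu+2}$ subject to $\|\phi'\|^2+\ome\|\phi\|^2=\|\phi\|_{2\mu+2}^{2\mu+2}$. This is the classical NLS soliton variational problem, whose minimizer is the explicit solitary wave $\phi_\ome(x)=\big[(\mu+1)\ome\big]^{1/(2\mu)}\sech^{2}\!\big(\mu\sqrt{\ome}\,x\big)^{1/(2\mu)}$; substituting and changing variables $t=\tanh(\mu\sqrt\ome\,x)$ produces the explicit value in \eqref{sinf}, with the Beta-type integral $\int_0^1(1-t^2)^{1/\mu}\,dt$ and the correct powers of $(\mu+1)$ and $\ome$. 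For the matching upper bound, I would take a minimizing sequence supported asymptotically on a single edge: translates of this soliton pushed out toward infinity on one halfline, so that the vertex coupling and the interaction between edges become negligible and $S^0_\ome$ converges to $d^0(\ome)$. This is exactly the symmetry-breaking, escaping-to-infinity behavior flagged in the preamble, and it simultaneously shows the infimum is \emph{not attained}.

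The main obstacle I anticipate is the rearrangement step on the graph rather than the scalar computation, which is standard. Specifically, reconciling the P\'olya--Szeg\H{o} inequality on a star graph---where the sharp constant differs from the line by a factor depending on $N$---with the rescaling back onto the Nehari manifold requires care, since one must verify that the combined operation of rearranging and rescaling genuinely lowers (or preserves) the reduced action and that the single-edge soliton ansatz indeed saturates the graph inequality in the limit. The precise form of the $N$-dependent Pólya--Szegő constant is what forces the minimizing configuration to concentrate on one edge and thus explains why $d^0(\ome)$ coincides with the \emph{single}-halfline value rather than an $N$-fold multiple; getting this factor right is the delicate point on which the whole identity \eqref{sinf} hinges.
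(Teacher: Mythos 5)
Your upper bound is fine and is exactly the paper's construction: solitons escaping to infinity on a single edge, cut off near the vertex and rescaled by a factor $\de_n\to 1$ onto the constraint, showing both that $d^0(\ome)\leq d^{\text{line}}(\ome)$ and that the infimum is not attained. The lower bound, however, rests on a false statement of the key inequality. On a star graph with $N\geq 3$ edges the symmetric rearrangement does \emph{not} leave the kinetic energy non-increasing: the paper's P\'olya--Szeg\H{o} inequality (Theorem \ref{polya}) reads $\|{\Phi^{\ast}}'\|\leq \frac{N}{2}\|\Phi'\|$, i.e. only $\|\Phi'\|\geq\frac{2}{N}\|{\Phi^{\ast}}'\|$, and the Remark following it shows the constant is sharp: a tent function supported on one edge has $\|\Phi'\|^2=2$ while its rearrangement has $\|{\Phi^{\ast}}'\|^2=N^2/2$. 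Symmetrization compresses a function living on one edge onto all $N$ edges, shrinking its support by a factor $N$ and amplifying gradients. Your argument in fact refutes itself: if rearrangement did not increase kinetic energy, then every $\Psi$ on the Nehari manifold would yield a symmetric $\Psi^\ast$ with $I^0_\ome[\Psi^\ast]\leq 0$, and your reduction to $N$ identical halfline components would give the lower bound $d^0(\ome)\geq N\,d^{\text{half}}(\ome)\geq\frac{N}{2}\,d^{\text{line}}(\ome)$, which for $N\geq 3$ strictly exceeds the upper bound $d^{\text{line}}(\ome)$ produced by your own escaping-soliton sequence --- a contradiction. (This is also why your ``core computation'' cannot return the value \eqref{sinf}: carried out honestly it is off by the factor $N/2$.)

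The repair is forced to work with the weakened relation that the rearrangement actually satisfies, namely
\be
\frac{4}{N^2}\,\|{\Phi^{\ast}}'\|^2+\ome\,\|\Phi^\ast\|^2-\|\Phi^\ast\|_{2\mu+2}^{2\mu+2}\ \leq\ I^0_\ome[\Phi]\ =\ 0,
\ee
and the mismatched coefficient $4/N^2$ in front of the kinetic term cannot be absorbed by the multiplicative normalization $\Psi^\ast\mapsto\la\Psi^\ast$ you propose, since that rescaling multiplies all three terms by powers of $\la$ but leaves the ratio of the kinetic to the other coefficients untouched. The paper instead applies the dilation $\Phi(\cdot)\mapsto\la^{1/2}\Phi(\la\,\cdot)$ with $\frac{4}{N^2}\la^2=\la^\mu$, which converts the relaxed constraint into a genuine Nehari-type constraint at the rescaled frequency $\ome\lf(\frac{2}{N}\ri)^{\frac{2\mu}{2-\mu}}$ at the cost of the prefactor $\la^\mu$ on the reduced action (see \eqref{tara}--\eqref{uni}); it then reduces to the halfline, uses the reflection inequality $2d^{\text{half}}(\ome)\geq d^{\text{line}}(\ome)$ of \eqref{tako} together with the known value \eqref{ebi}, and the homogeneity $d^{\text{line}}(\ome)\propto\ome^{\frac{\mu+2}{2\mu}}$ makes every $N$-dependent prefactor cancel exactly, leaving precisely $d^{\text{line}}(\ome)$, which matches the upper bound. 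So the factor $N/2$ is not a technical nuisance to be sidestepped: it is exactly what lowers the symmetric lower bound down to the value attained by the symmetry-breaking escaping sequence, and without it your proof cannot close.
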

\begin{dem}
The proof of \eqref{sinf} is divided into two parts: first we derive a lower bound for $S^0_\ome$, then we prove that 
the lower bound is optimal by means of a minimizing sequence.

\n
In order to derive a lower bound, we consider an auxiliary variational problem with symmetric functions.
This is done by using the rearrangements on the graph which are discussed in Appendix \ref{rearra}.

\n
Let $\Phi\in \EE$ and let $\Phi^\ast $ be its symmetric rearrangement.
We known that $\Phi^\ast$ is positive, symmetric and $\Phi^\ast \in \EE$. Moreover, 
by Theorem \ref{polya} and Proposition \ref{lp}, we have
\[
\| \Phi \| = \| \Phi^\ast \| 
\qquad 
\| \Phi \|_{2\mu+2} = \| \Phi^\ast \|_{2\mu+2}
\qquad
\| \Phi' \| \geqslant \f{2}{N} \| \Phi^{\ast '} \| .
\]
Therefore for $\Phi \in \EE $ such that $I^0_\ome [ \Phi]=0$ we have
\[
\f{4}{N^2} \| \Phi^{\ast \prime} \|^2 - \| \Phi^\ast \|_{2 \mu+ 2}^{2\mu +2} +\ome\| \Phi^\ast \|^{2} \leqslant 
I_\ome^0 [\Phi] =0
\]
and
\[
\widetilde S [\Phi] = \widetilde S [\Phi^\ast].
\]
\n
Taking into account \eqref{stilde0}, and the above properties of $\Phi^\ast$
one can enlarge the domain in the following way in order to lower the infimum,
\begin{multline} \label{tara}
\inf \{ S_\ome^0 [\Phi] \text{ s.t. } \Phi\in \EE,\,\Phi\neq 0 , \, I_\ome^0 [\Phi] =0\}=
\inf \{ \widetilde S [\Phi] \text{ s.t. } \Phi\in \EE,\,\Phi\neq 0 , \, I_\ome^0 [\Phi] =0\}\\
\geqslant
\inf \{ \widetilde S [\Phi] \text{ s.t. } \Phi\in \EE,\,\Phi\neq 0 , \, \Phi \text{ symmetric}, 
\,\f{4}{N^2} \| \Phi' \|^2 - \| \Phi \|_{2 \mu+ 2}^{2\mu +2} +\ome\| \Phi \|^{2} \leqslant 0
\}.
\end{multline}
Under the scaling, $\Phi(\cdot) \leadsto \la^{1/2} \Phi(\la \cdot)$,
$\la>0$, the last variational problem  scales as 
\begin{multline*}
\inf \lf\{ \widetilde S [\Phi] \text{ s.t. } \Phi\in \EE,\,\Phi\neq 0 , \, \Phi \text{ symmetric },\, 
\f{4}{N^2} \| \Phi' \|^2 - \| \Phi \|_{2 \mu+ 2}^{2\mu +2} +\ome\| \Phi \|^{2} \leqslant 0\ri\}= \\
\inf \la^\mu
\lf\{ \widetilde S [\Phi] \text{ s.t. } \Phi\in \EE,\,\Phi\neq 0 , \, \Phi \text{ symmetric },\, 
 \f{4}{N^2} \la^2\| \Phi' \|^2 - \la^\mu \| \Phi \|_{2 \mu+ 2}^{2\mu +2} +\ome\| 
\Phi \|^{2} \leqslant 0\ri\}.
\end{multline*}
It is convenient to choose $\la$ as
\[
\f{4}{N^2} \la^2 = \la^\mu \qquad {\mbox {so that}} \qquad \la=\lf( \f{N}{2} \ri)^{ \f{2}{2-\mu}} 
\]
in order to reconstruct a Nehari manifold with a rescaled $\ome$ as constraint. 
Moreover due to the symmetry of $\Phi$ we have
\begin{multline} \label{uni}
\lf( \f{N}{2} \ri)^{ \f{2\mu}{2-\mu}} \inf
\lf\{ \widetilde S [\Phi] \text{ s.t. } \Phi\in \EE,\,\Phi\neq 0 , \, \Phi \text{ symmetric }, 
 \| \Phi' \|^2 -  \| \Phi \|_{2 \mu+ 2}^{2\mu +2} +\ome \lf( \f{2}{N} \ri)^{ \f{2\mu}{2-\mu}} \| \Phi \|^{2} \leqslant 0\ri\} \\
=N \lf( \f{N}{2} \ri)^{ \f{2\mu}{2-\mu}}  \inf
\lf\{ \frac{\mu}{2\mu +2}  \|\phi\|^{2\mu +2}_{L^{2\mu +2} (\erre^+)}  \text{ s.t. } \phi\in H^1(\erre^+)\, ,
\ri.\\ \lf.
\phi\neq 0, \, 
 \| \phi' \|^2_{L^2(\erre^+)} -  \| \phi \|_{L^{2\mu +2} (\erre^+)}^{2\mu +2} 
+ \ome \lf( \f{2}{N} \ri)^{ \f{2\mu}{2-\mu}} \| \phi \|^{2}_{L^2(\erre^+ )} \leqslant 0\ri\} .
\end{multline}
It is convenient to introduce a variational
problem on the half line and an auxiliary variational problem on the line. Let $d^{\text{half}}(\ome )$ and 
$d^{\text{line}} (\ome)$ be defined 
in the following way:
\begin{align*}
& d^{\text{half}} (\ome )= 
\inf \lf\{ 
\frac{\mu}{2\mu +2}  \|\phi\|^{2\mu +2}_{L^{2\mu +2} (\erre^+)}  
\text{ s.t. }
\phi\in H^1(\erre^+),\, \phi\neq 0, \,
 \| \phi' \|^2_{L^2(\erre^+)} -  \| \phi \|_{L^{2\mu +2} (\erre^+)}^{2\mu +2} +
 \ome \| \phi \|^{2}_{L^2(\erre^+ )} \leqslant 0\ri\}  \\
& d^{\text{line}} (\ome) = 
\inf \lf\{ \frac{\mu}{2\mu +2}  \|\phi\|^{2\mu +2}_{L^{2\mu +2} (\erre)}  \text{ s.t. }\phi\in H^1(\erre),
\, \phi\neq 0,  \,
 \| \phi' \|^2_{L^2(\erre)} -  \| \phi \|_{L^{2\mu +2} (\erre)}^{2\mu +2} +
 \ome \| \phi \|^{2}_{L^2(\erre )} \leqslant 0\ri\}  .
\end{align*}
Notice that the following inequality holds true:
\be \label{tako}
2 d^{\text{half}} (\ome ) \geqslant d^{\text{line}} (\ome).
\ee 
Indeed, by absurd, assume $ 2 d^{\text{half}} (\ome ) < d^{\text{line}} (\ome)$
and let $\phi_n$ be a minimizing sequence for the problem on the halfline. 
We can extend $\phi_n$ by parity and obtain a sequence
$\tilde \phi_n \in H^1 (\erre) $ such that 
\[
\| \tilde \phi_n ' \|^2_{L^2(\erre)} -  \| \tilde \phi_n \|_{L^{2\mu +2} (\erre)}^{2\mu +2} +
 \ome \| \tilde \phi_n \|^{2}_{L^2(\erre )} \leqslant 0 \qquad \qquad \|\tilde \phi_n \|^{2\mu +2}_{L^{2\mu +2} (\erre)} = 2  \|\phi_n\|^{2\mu +2}_{L^{2\mu +2} (\erre^+)}.
 \]
Passing to the limit one would obtain 
\[
d^{\text{line}} (\ome) \geqslant \liminf_n \frac{\mu}{2\mu +2}  \| \tilde \phi_n \|_{L^{2\mu +2} (\erre)}^{2\mu +2} = 2 d^{{\rm half}} (\ome)
\]
which contradicts our absurd hypothesis. Therefore $1/2\,  d^{\text{line}} (\ome)$ provides a lower bound for the variational problem
we are interested in.
On the other hand the exact expression of $d^{\text{line}} (\ome)$ can be easily obtained from known results (see \cite{[Caz]} Ch. VIII) , and it is given by:
\be \label{ebi}
d^{\text{line}} (\ome)= 
 (\mu+1)^{\frac{1}{\mu} } \ome^{ \frac{1}{\mu} + \frac{1}{2} }
\int_0^1 (1-t^2)^{\frac{1}{\mu}} dt.
\ee
Taking into account \eqref{tara}, \eqref{uni}, \eqref{tako} and \eqref{ebi} we can conclude
\begin{align}\label{primaparte}
d^0 (\ome) & \geqslant 
\f{N}{2} \lf( \f{N}{2} \ri)^{ \f{2\mu}{2-\mu}} (\mu+1)^{\frac{1}{\mu} } 
\lf[\ome \lf( \f{2}{N} \ri)^{ \f{2\mu}{2-\mu}} \ri]^{ \frac{\mu+2}{2\mu} }
\int_0^1 (1-t^2)^{\frac{1}{\mu}} dt \nonumber \\
& =(\mu+1)^{\frac{1}{\mu} } \ome^{ \frac{\mu+2}{2\mu}  }
\int_0^1 (1-t^2)^{\frac{1}{\mu}} dt .
\end{align}

\n
Estimate  \eqref{primaparte} closes the first part of the proof. Now it is sufficient to exhibit a sequence of trial functions $\Phi_n$ satisfying 
the constraint and such that $ \widetilde S [\Phi_n ] \to (\mu+1)^{\frac{1}{\mu} } \ome^{ \frac{\mu+2}{2\mu}  }
\int_0^1 (1-t^2)^{\frac{1}{\mu}} dt$. 
We consider a sequence of soliton-like functions escaping
to infinity, i.e. 
\be \label{trial}
(\Phi_n)_i  ( x ) = 
\begin{cases} 
\phi_n (x) = \phi_s (x-n) \chi (x) & i=1\\
0 & i\neq 1
\end{cases}
\ee
where $\phi_s$ is defined in Appendix \ref{appB} by \eqref{soliton} and $\chi$ is a $C^\infty (\erre^+)$ function such that $0\leqslant \chi \leqslant 1$, $\chi=0$ for 
$0\leqslant x \leqslant 1$ and $\chi(x)=1$ for $x\geqslant 2$. The sequence $\Phi_n$ belongs to $\EE$ but does not satisfy the
constraint $I_\ome^0 [ \Phi_n]=0$. It is straightforward to check that 
\be \label{lowbou}
\| \Phi_n \|_{2\mu+2} \geqslant c
\ee
where the r.h.s. of \eqref{lowbou} depends on $\ome$ and $\mu$. In the remaining part 
of the proof we shall not make explicit the dependence on 
$\ome$ and $\mu$ of the constant appearing in estimates. Let $\de_n$ be defined by
\[
\de_n =
\lf(
\frac{ \|\Phi_n '\|^2 + \ome \|\Phi_n \|^2}{\| \Phi_n \|_{2\mu+2}^{2\mu+2} } \ri)^{ \f{1}{2\mu} }.
\]
It is straightforward to check that $I^0 [ \de_n \Phi_n] =0$. Then in order to prove \eqref{sinf},  it is sufficient to prove that
\begin{equation}
 \label{trialaction}
\lim_{n\to \infty} \widetilde S [ \de_n \Phi_n] = (\mu+1)^{\frac{1}{\mu} } \ome^{ \frac{\mu+2}{2\mu}  }
\int_0^1 (1-t^2)^{\frac{1}{\mu}} dt \ .
\end{equation}
Now we prove that 
\be \label{de1}
\lim_{n\to \infty} \de_n =1.
\ee
We have
\[
\de_n =
\lf( 1+ 
\frac{ \|\Phi_n '\|^2 + \ome \|\Phi_n\|^2 - \| \Phi_n \|_{2\mu+2}^{2\mu+2} }{\| \Phi_n \|_{2\mu+2}^{2\mu+2} } \ri)^{ \f{1}{2\mu} }
\]
and by \eqref{lowbou},  it is sufficient to prove that
\[
\lim_{n\to \infty} \|\Phi_n '\|^2 + \ome \|\Phi_n\|^2 - \| \Phi_n \|_{2\mu+2}^{2\mu+2} =0.
\]
Taking into account \eqref{trial} and \eqref{soliteq} and integrating by parts one has
\begin{multline*}
\lf|  \|\Phi_n '\|^2 + \ome \|\Phi_n\|^2 - \| \Phi_n \|_{2\mu+2}^{2\mu+2} \ri| \leqslant\\
c \int_0^\infty |\phi_s (x-n) |^2  |\chi ''(x)| |\chi (x)| dx + 
c \int_0^\infty |\phi_s  (x-n)|\phi_s '(x-n)|  |\chi '(x)||\chi (x)|  dx \equiv {\mathcal R}_n.
\end{multline*}
The remainder ${\mathcal R}_n$ can be estimated using the exponential decay of $\phi_s$ and $\phi_s '$ in the following way
\[
|{\mathcal R}_n| \leqslant c\int_{n-1}^\infty |\phi_s (x) |^2   dx + 
c \int_{n-1}^\infty |\phi_s  (x)\phi_s '(x)|    dx
\leqslant 
c \int_n^\infty e^{-c x} dx \leqslant c e^{-cn }.
\]
This proves \eqref{de1} while \eqref{trialaction} is reduced to prove that
\begin{equation*}
\lim_{n\to \infty} \widetilde S [ \Phi_n] = (\mu+1)^{\frac{1}{\mu} } \ome^{ \frac{\mu+2}{2\mu}  }
\int_0^1 (1-t^2)^{\frac{1}{\mu}} dt.
\end{equation*}
The last equality follows by dominated convergence and \eqref{formula2}:
\begin{multline*}
\lim_{n\to \infty} \widetilde S [ \Phi_n] = 
\lim_{n\to \infty} \frac{\mu}{2\mu +2} \int_0^\infty |\phi_s (x-n) \chi(x) |^{2\mu +2} dx =
\lim_{n\to \infty} \frac{\mu}{2\mu +2} \int_{-n}^\infty |\phi_s (x) \chi(x+n) |^{2\mu +2} dx = \\
 \frac{\mu}{2\mu +2} \int_{-\infty}^\infty |\phi_s (x)  |^{2\mu +2} dx= 
(\mu+1)^{ \f{1}{\mu} } \ome^{ \f{2+\mu}{2\mu} } \int_0^1 (1-t^2)^{ \f{1}{\mu} } dt .
\end{multline*}
The proof is concluded. 
\end{dem}
The previous proof shows that the infimum $d^0 (\ome)$ is approximated by the action of a soliton escaping to infinity.
Moreover notice that the minimizing sequence weakly converges to the vanishing function.


\section{Variational Analysis: the $\de$ vertex}
\n
In this section we discuss the variational properties of the action functional in 
the general case with $\al < 0$. In fact we prove that there exists
$\alpha^\ast < 0$ such that for $-N\sqrt{\ome}< \al
<\al^\ast $ the action functional constrained to the Nehari manifold
admits an absolute minimum. The proof of this statement is broken into several lemmas.
Firstly, in Lemma \ref{lemequi} we prove an equivalent formulation of the variational problem 
we are studying. Then, in Lemma \ref{lemlowbound} and Proposition \ref{iwashi} we prove that the infimum
of the constrained action is strictly positive and smaller than the
infimum of the Kirchhoff action, therefore
for $\al$ negative enough the infimum is not reached by functions escaping at infinity 
like \eqref{trial},
otherwise the two infima would coincide.
This is a key ingredient in the proof of the  main Theorem \ref{mainvar}, where we prove that a minimizing
sequence admits subsequences with non trivial weak limit. Finally we prove that this limit
is the absolute minimum.

\n
We recall from the introduction the action functional, given by
\begin{equation*} 
S_\ome[\Psi] = E[\Psi] +\ome M[\Psi]= S_\ome^0 [\Psi] +\f{\al }{2} |\psi_1 (0)|^2
\end{equation*}
and the Nehari functional 
\[
I_{\ome} [\Psi] =  \| \Psi' \|^2 - \| \Psi \|_{2 \mu+ 2}^{2\mu +2} +\ome\| \Psi \|^{2} 
+\alpha|\psi_1(0)|^2.
\]
The Nehari manifold is given by $\{ \Psi \in \EE, \, \Psi\neq 0, \text{ s.t. } 
I_\ome [\Psi] =0\}$.
It is understood 
that the above functionals are defined on the form domain $\EE$.
The action restricted to the Nehari manifold will be named again reduced action and is defined by
\be\label{stilde}
\widetilde S [\Psi] = \frac{\mu}{2\mu+2} \| \Psi \|_{2 \mu+ 2}^{2\mu +2}= S_\ome [\Psi]-\f{1}{2}   I_{\ome} [\Psi].
\ee
We introduce also the function
\[
d(\ome) = \inf \{ S_\ome [\Psi] \text{ s.t. } \Psi\in \EE,\,\Psi\neq 0 , \, I_\ome [\Psi] =0\}.
\]
In the following let $\al <0$. 

\n
We want to prove that for $\al$ smaller than a threshold value $\al^\ast$ the action $S_\ome$ constrained 
to the Nehari manifold admits an absolute minimum.

\n
Firstly we give an equivalent formulation of this variational problem.
\begin{lem} \label{lemequi}
The following equality holds
\be \label{equivalenza}
\inf \{ S_\ome [\Psi] \text{ s.t. } \Psi\in \EE,\,\Psi\neq 0 , \, I_\ome [\Psi] =0\}=
\inf \{\widetilde S [\Psi] \text{ s.t. } \Psi\in \EE,\,\Psi\neq 0 , \, I_\ome [\Psi] \leqslant 0\}.
\ee
Moreover $\Phi \in \EE$ satisfies $\widetilde S [\Phi] = d(\ome) $ and $I_\ome [ \Phi] \leqslant 0 $ iff
$S_\ome [\Phi] = d(\ome) $ and $I_\ome [\Phi]=0$.
\end{lem}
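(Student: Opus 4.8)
The plan is to exploit the algebraic identity $\widetilde S[\Psi]=S_\ome[\Psi]-\tfrac12 I_\ome[\Psi]$ of \eqref{stilde}, which shows that on the Nehari manifold $\{I_\ome=0\}$ the two functionals $S_\ome$ and $\widetilde S$ coincide. Consequently the left-hand side of \eqref{equivalenza} equals $\inf\{\widetilde S[\Psi]\text{ s.t. }\Psi\neq0,\ I_\ome[\Psi]=0\}$, and since $\{I_\ome=0\}\subset\{I_\ome\le0\}$, passing to a larger constraint set only lowers the infimum, so the inequality $\ge$ in \eqref{equivalenza} is immediate.

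For the reverse inequality I would show that any admissible $\Psi$ for the enlarged problem can be pushed onto the Nehari manifold by a dilation, without increasing $\widetilde S$. Given $\Psi\neq0$ with $I_\ome[\Psi]\le0$, consider $\theta\Psi$, $\theta>0$. Writing $Q[\Psi]:=\|\Psi'\|^2+\ome\|\Psi\|^2+\al|\psi_1(0)|^2$ and $P[\Psi]:=\|\Psi\|_{2\mu+2}^{2\mu+2}$, one has $I_\ome[\theta\Psi]=\theta^2 Q[\Psi]-\theta^{2\mu+2}P[\Psi]$ and $\widetilde S[\theta\Psi]=\theta^{2\mu+2}\widetilde S[\Psi]$. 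The constraint $I_\ome[\theta\Psi]=0$ is then solved by $\theta^{2\mu}=Q[\Psi]/P[\Psi]$. Since $I_\ome[\Psi]\le0$ means exactly $Q[\Psi]\le P[\Psi]$, this choice gives $\theta\le1$, whence $\widetilde S[\theta\Psi]=\theta^{2\mu+2}\widetilde S[\Psi]\le\widetilde S[\Psi]$. Because $\theta\Psi$ lies on the Nehari manifold, $S_\ome[\theta\Psi]=\widetilde S[\theta\Psi]\le\widetilde S[\Psi]$; taking the infimum over all admissible $\Psi$ yields the inequality $\le$, and hence \eqref{equivalenza}.

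The step that makes this scaling legitimate, and which I expect to be the main point to secure, is that $\theta$ be a well-defined positive real with $\theta\le1$; this requires $Q[\Psi]>0$ for every $\Psi\neq0$ in $\EE$. Here the restriction $-N\sqrt{\ome}<\al$ enters decisively: since the self-adjoint operator $H$ satisfies $H\ge-\frac{\al^2}{N^2}$ (its lowest spectral point is the bound-state energy $-\al^2/N^2$ recalled in the Introduction), the variational characterization of the bottom of the spectrum gives $\|\Psi'\|^2+\al|\psi_1(0)|^2\ge-\frac{\al^2}{N^2}\|\Psi\|^2$, and therefore
\[
Q[\Psi]\ \ge\ \Big(\ome-\frac{\al^2}{N^2}\Big)\|\Psi\|^2\ >\ 0
\]
for $\Psi\neq0$, precisely because $\ome>\al^2/N^2$ is equivalent to $\al>-N\sqrt{\ome}$. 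One then also has $P[\Psi]\ge Q[\Psi]>0$, so the quotient defining $\theta$ is meaningful.

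Finally, for the equivalence of minimizers I would track the strict inequalities in the scaling argument. Suppose $\Phi$ satisfies $\widetilde S[\Phi]=d(\ome)$ and $I_\ome[\Phi]\le0$; if $I_\ome[\Phi]<0$, then $\theta^{2\mu}=Q[\Phi]/P[\Phi]<1$, so $\widetilde S[\theta\Phi]=\theta^{2\mu+2}\widetilde S[\Phi]<d(\ome)$, while $\theta\Phi$ belongs to the Nehari manifold and hence $S_\ome[\theta\Phi]=\widetilde S[\theta\Phi]\ge d(\ome)$, a contradiction. Thus $I_\ome[\Phi]=0$, and then $S_\ome[\Phi]=\widetilde S[\Phi]=d(\ome)$. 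The converse implication is trivial, since $I_\ome[\Phi]=0$ forces $\widetilde S[\Phi]=S_\ome[\Phi]=d(\ome)$ and $I_\ome[\Phi]=0\le0$.
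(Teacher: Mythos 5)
Your proof is correct and takes essentially the same route as the paper: the identity $\widetilde S=S_\ome-\tfrac12 I_\ome$ gives one inequality for free, and the same dilation (your $\theta\Psi$, the paper's $\beta\Phi$) pulls any function with $I_\ome\leq 0$ onto the Nehari manifold without increasing the reduced action, with the identical contradiction argument for the equivalence of minimizers. The only difference is that you explicitly check, via the spectral bound $\|\Psi'\|^2+\al|\psi_1(0)|^2\geq-\frac{\al^2}{N^2}\|\Psi\|^2$ and $\ome>\frac{\al^2}{N^2}$, that the scaling exponent is a well-defined positive number $\leq 1$; the paper leaves this positivity implicit, so your addition is a small but legitimate tightening of the same argument.
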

\begin{dem}
The idea is the following: if a function $\Phi$ is not on the Nehari manifold and $I_\ome [ \Phi] < 0 $,
then by multiplication by a suitable scalar, it can pulled on the manifold lowering the reduced action at the same time.
First notice that by \eqref{stilde} we have immediately 
\[
\inf \{ S_\ome [\Psi] \text{ s.t. } \Psi\in \EE,\, \Psi\neq0 , \,I_\ome [\Psi] =0\}\geqslant
\inf \{\widetilde S [\Psi] \text{ s.t. } \Psi\in {\mathcal E},\,\Psi\neq0 , \, I_\ome [\Psi] \leqslant 0\},
\]
since $S_\ome$ and $\widetilde S$ coincide on the Nehari manifold. 

\n
Now take $\Phi \in \EE$ such that $I_\ome [\Phi] <0$ and define 
\be \label{beta}
\beta = \lf( \f{ \|\Phi'\|^2 + \al |\phi_1 (0)|^2 + \ome \| \Phi \|^2 }{
\| \Phi \|_{2\mu + 2}^{2\mu + 2 } } \ri)^{\f{1}{2\mu} }.
\ee
Since $I_\ome [\Phi] <0$ then $\beta < 1$. Moreover by direct computation one has 
\[
I_\ome [ \beta \Phi] =0.
\]
Then, using again \eqref{stilde}, one has 
\[
S_\ome [ \beta \Phi] = \widetilde S [\beta \Phi] = \beta^{2\mu+2} \widetilde S [ \Phi] <
\widetilde S [\Phi]   
\]
then
\[
\inf \{ S_\ome [\Psi] \text{ s.t. } \Psi\in {\mathcal E},\, I_\ome [\Psi] =0\}\leqslant
\inf \{\widetilde S [\Psi] \text{ s.t. } \Psi\in {\mathcal E},\, I_\ome [\Psi] \leqslant 0\}
\]
and identity \eqref{equivalenza} has been proved.

\n
Notice that if $\Phi$ minimizes $S_\ome$ on $I_\ome =0$ then it minimizes also
$\widetilde S$ on $I_\ome \leqslant 0$ by \eqref{equivalenza}. Suppose now that
$\widetilde S [\Phi] = d(\ome) $ and $I_\ome [ \Phi] \leqslant 0 $. Then defining
$\beta$ as above one has that $S_\ome [ \beta \Phi]< \widetilde S [ \Phi] = d(\ome)$
which is  a contradiction to the definition of $d(\ome)$.
\end{dem}
\begin{lem} \label{lemlowbound}
Assume $\ome > \al^2/N^2$. Then $d(\ome)$ is strictly positive.
\end{lem}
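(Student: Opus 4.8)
The plan is to reduce everything, via the equivalent formulation of Lemma~\ref{lemequi}, to an \emph{a priori} lower bound away from zero. By \eqref{equivalenza} it suffices to bound $\widetilde S$ from below by a strictly positive constant on the enlarged set $\{\Psi\in\EE,\ \Psi\neq0,\ I_\ome[\Psi]\le0\}$. Since $\widetilde S[\Psi]=\frac{\mu}{2\mu+2}\|\Psi\|_{2\mu+2}^{2\mu+2}$ by \eqref{stilde}, the whole matter reduces to showing that $\|\Psi\|_{2\mu+2}$ cannot be made arbitrarily small along functions with $I_\ome[\Psi]\le0$. First I would isolate the quadratic part of the Nehari functional, $Q[\Psi]:=\|\Psi'\|^2+\al|\psi_1(0)|^2+\ome\|\Psi\|^2$, so that the constraint reads $Q[\Psi]\le\|\Psi\|_{2\mu+2}^{2\mu+2}$.

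The heart of the argument is to show that $Q$ is coercive on $\EE$, i.e. $Q[\Psi]\ge c_0\|\Psi\|_{H^1}^2$ for some $c_0=c_0(\ome,\al,N)>0$, precisely when $\ome>\al^2/N^2$. Here I would invoke the fact, recalled in the Introduction, that the bottom of the spectrum of $H$ equals $-\al^2/N^2$, which in variational form reads $\|\Psi'\|^2+\al|\psi_1(0)|^2\ge-\frac{\al^2}{N^2}\|\Psi\|^2$ for every $\Psi\in\EE$. This alone controls only the $L^2$ part; to retain a positive share of the kinetic term I would split, for a parameter $t\in(0,1)$, $\|\Psi'\|^2+\al|\psi_1(0)|^2=t\|\Psi'\|^2+\big[(1-t)\|\Psi'\|^2+\al|\psi_1(0)|^2\big]$ and apply the spectral bound to the rescaled form $(1-t)\|\Psi'\|^2+\al|\psi_1(0)|^2$, whose ground-state energy is $-\frac{\al^2}{(1-t)N^2}$ (the symmetric exponential $e^{\al x/((1-t)N)}$ being the minimizer of the corresponding Rayleigh quotient). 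This yields $Q[\Psi]\ge t\|\Psi'\|^2+\big(\ome-\frac{\al^2}{(1-t)N^2}\big)\|\Psi\|^2$, and since $\ome>\al^2/N^2$ one may fix $t$ small enough that the $L^2$-coefficient stays positive, producing the coercivity constant $c_0=\min\{t,\ \ome-\frac{\al^2}{(1-t)N^2}\}>0$.

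With coercivity in hand I would close the estimate through the graph Gagliardo--Nirenberg inequality \eqref{gajardo}. Taking $p=2\mu+2$ gives $\|\Psi\|_{2\mu+2}^{2\mu+2}\le C\,\|\Psi'\|^{\mu}\|\Psi\|^{\mu+2}\le C'\|\Psi\|_{H^1}^{2\mu+2}$. On the set $I_\ome[\Psi]\le0$ we then have $c_0\|\Psi\|_{H^1}^2\le Q[\Psi]\le\|\Psi\|_{2\mu+2}^{2\mu+2}\le C'\|\Psi\|_{H^1}^{2\mu+2}$, so for $\Psi\neq0$ one obtains the uniform lower bound $\|\Psi\|_{H^1}^{2\mu}\ge c_0/C'$, hence $\|\Psi\|_{H^1}^2\ge\rho^2>0$. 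Combining, $\widetilde S[\Psi]=\frac{\mu}{2\mu+2}\|\Psi\|_{2\mu+2}^{2\mu+2}\ge\frac{\mu}{2\mu+2}\,Q[\Psi]\ge\frac{\mu}{2\mu+2}\,c_0\rho^2>0$ uniformly on the constraint set, and passing to the infimum gives $d(\ome)>0$.

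I expect the coercivity step to be the only genuine obstacle. The boundary term $\al|\psi_1(0)|^2$ is negative for $\al<0$ and competes directly with the kinetic energy, so a naive trace estimate of the form $|\psi_1(0)|^2\le\epsilon\|\psi_1'\|^2+\epsilon^{-1}\|\psi_1\|^2$ is too lossy to beat the threshold $\al^2/N^2$. The sharp input is exactly that this threshold equals minus the bottom of the linear spectrum, and the splitting-plus-rescaling device is what converts that spectral sharpness into honest $H^1$-coercivity; the remaining Gagliardo--Nirenberg bookkeeping and the final algebra are routine.
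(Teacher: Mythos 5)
Your proof is correct, and its overall skeleton is the same as the paper's: both pass to the enlarged set $\{\Psi\in\EE,\ \Psi\neq 0,\ I_\omega[\Psi]\leq 0\}$ via Lemma \ref{lemequi}, establish that the quadratic part $Q[\Psi]=\|\Psi'\|^2+\alpha|\psi_1(0)|^2+\omega\|\Psi\|^2$ is $H^1$-coercive precisely when $\omega>\alpha^2/N^2$, and then combine the constraint $Q[\Psi]\leq\|\Psi\|_{2\mu+2}^{2\mu+2}$ with a Sobolev/Gagliardo--Nirenberg bound to force $\|\Psi\|_{2\mu+2}$, hence $\widetilde S$, away from zero. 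Where you genuinely diverge is in the derivation of coercivity. The paper builds it from scratch: the one-parameter trace inequality $|\phi(0)|^2\leq a\|\phi'\|^2_{L^2(\erre^+)}+a^{-1}\|\phi\|^2_{L^2(\erre^+)}$ (proved by Fourier transform and Cauchy--Schwarz), averaged over the $N$ edges through the vertex continuity condition $|\phi_1(0)|^2=\frac{1}{N}\sum_{i=1}^N|\phi_i(0)|^2$ --- this averaging is exactly where the factor $1/N$, and hence the sharp threshold $\alpha^2/N^2$, enters --- followed by a choice of $a$ in a window that is nonempty if and only if $\omega>\alpha^2/N^2$. You instead take as input the sharp spectral bound $\|\Psi'\|^2+\alpha|\psi_1(0)|^2\geq-\frac{\alpha^2}{N^2}\|\Psi\|^2$ (this is the paper's \eqref{quadinf}, invoked without proof in the proof of Theorem \ref{mainvar}, modulo an evident sign typo there) and recover a positive share of the kinetic term by your splitting-plus-rescaling device; your computation of the rescaled ground-state energy $-\alpha^2/((1-t)N^2)$ is correct, and in fact your family of estimates indexed by $t$ coincides with the paper's family indexed by $a$ under the identification $a=|\alpha|/((1-t)N)$, so the two arguments are equivalent in content. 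What each buys: the paper's route is self-contained and elementary, needing no spectral information about $H$; yours is shorter and makes conceptually explicit that the threshold is precisely the bottom of the linear spectrum, at the price of using that spectral fact as a black box (legitimate here, since the paper itself quotes it elsewhere). Your closing observation that the naive single-edge trace estimate only reaches the threshold $\omega>\alpha^2$ is also accurate, and it correctly identifies why the edge-averaging, or equivalently the sharp spectral constant, is indispensable.
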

\begin{dem}
Firstly we derive an elementary Sobolev inequality for the halfline. 
Let $f\in H^1 (\erre)$ and denote by $\hat f (k)$ its Fourier
transform. Then,
we have
\[
|f(0)| \leqslant \f{1}{\sqrt{2\pi}} \int |\hat f (k)|\, dk =
 \f{1}{\sqrt{2\pi}} \int |\hat f (k)| \f{ (a k^2 + a^{-1} ) }{ (a k^2 + a^{-1} )}
\, dk .
\]
By Cauchy-Schwarz inequality we have
\be \label{stimina}
|f(0)|^2 \leqslant \f{1}{2} ( a \| f' \|^2_{L^2(\erre)} + a^{-1} \| f \|^2_{L^2(\erre)}).
\ee
If $\phi\in H^1 (\erre^+)$ we can extend it by parity to a function on the line and apply \eqref{stimina}.
In this way we finally have
\be \label{stiminamina}
|\phi (0)|^2 \leqslant a \| \phi' \|^2_{L^2(\erre^+)} + a^{-1} \| \phi \|^2_{L^2(\erre^+)}.
\ee
Now take $\Phi \in \EE$ then by \eqref{stiminamina}, we have
\[
|\phi_1 (0) |^2 = \frac{1}{N} \sum_{i=1}^N |\phi_i (0) |^2
\leqslant \frac{1}{N} \sum_{i=1}^N \lf(  a  \|\phi_i \|^2_{L^2 (\erre^+)} 
+ \f{1}{ a} \|\phi_i ' \|^2_{L^2 (\erre^+)}\ri) =
\f{ a}{ N} \|\Phi \|^2 + \f{1}{ aN} \|\Phi ' \|^2.
\]
Then, using again \eqref{equivalenza} and with a suitable choice of $a$ (it is possible due to the restriction on $\ome$) we have
\[
0\geqslant \lf( 1 -\f{|\al|}{a} \ri) \| \Phi '\|^2 + \lf( \ome -a |\al| \ri) \|\Phi\|^2 
-\|\Phi\|_{2\mu+2}^{2\mu+2}\geqslant c \|\Phi\|_{H^1}^2 -\|\Phi\|_{2\mu+2}^{2\mu+2}.
\]
By Sobolev type inequalities we arrive at 
\[
 c \|\Phi\|^{2}_{2\mu+2} - \|\Phi\|_{2\mu+2}^{2\mu+2} \leqslant 0
\]
which implies, for a non vanishing function $\Phi$,
\[
 \|\Phi\|_{2\mu+2}\geqslant c.
\]
Since on the Nehari manifold $S_\ome$ and $\widetilde S$ coincide, we must have $d(\ome) >0$.
\end{dem}

\n
For any $\ome >0$ 
define $\al^\ast$ such that $-N\sqrt{\ome} <\al^\ast <0$ and 
\be \label{alstar}
\int_0^1 (1-t^2)^{ \frac{1}{\mu} } dt = \frac{N}{2} \int_{\f{|\al^\ast|}{N\sqrt{\ome}}}^1 (1-t^2)^{ \frac{1}{\mu} } dt.
\ee
Notice that $\al^\ast$ is uniquely defined since the r.h.s. of \eqref{alstar} is a decreasing function 
of $|\al^\ast|$ whose range includes the value $\int_0^1 (1-t^2)^{ \frac{1}{\mu} } dt $.
\begin{prop} \label{iwashi}
Let $-N\sqrt{\ome} < \al <\al^\ast$. Then 
\be\label{katsuo}
d(\ome) < d^0 (\ome).
\ee
\end{prop}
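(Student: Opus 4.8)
The plan is to establish the strict inequality by exhibiting an explicit admissible competitor lying on the $\de$-Nehari manifold whose action falls strictly below $d^0(\ome)$. Since $S_\ome$ and $\widetilde S$ agree on $\{I_\ome=0\}$, it suffices to produce $\Psi_\star\neq 0$ with $I_\ome[\Psi_\star]=0$ and $\widetilde S[\Psi_\star]<d^0(\ome)$, for then $d(\ome)\le \widetilde S[\Psi_\star]<d^0(\ome)$. The natural candidate is the \emph{symmetric} stationary state: the vector all of whose components equal a single shifted soliton tail $\phi_\star(x)=\phi_s(x+x_0)$, where $\phi_s$ is the soliton profile of Appendix \ref{appB} (even, positive, strictly decreasing on $\erre^+$, solving \eqref{soliteq}, and explicitly $\phi_s(x)=[(\mu+1)\ome]^{1/(2\mu)}\sech^{1/\mu}(\mu\sqrt{\ome}\,x)$), and $x_0>0$ is a translation to be fixed by the vertex condition.

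First I would fix $x_0$. Because all components coincide, $\Psi_\star$ automatically satisfies the continuity condition and lies in $\EE$ (indeed in $\DD(H)$, by the exponential decay of $\phi_s$). For the symmetric vector the $\de$ condition $\sum_i\phi_\star'(0)=\al\,\phi_\star(0)$ becomes $N\phi_\star'(0)=\al\,\phi_\star(0)$, i.e. $\phi_s'(x_0)/\phi_s(x_0)=\al/N$. Using $\phi_s'(x)/\phi_s(x)=-\sqrt{\ome}\tanh(\mu\sqrt{\ome}\,x)$, this reads $\tanh(\mu\sqrt{\ome}\,x_0)=|\al|/(N\sqrt{\ome})$, which admits a unique solution $x_0>0$ precisely when $|\al|<N\sqrt{\ome}$ --- exactly the admissible range in the statement. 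Since each component then solves the scalar stationary ODE $-\phi_\star''-|\phi_\star|^{2\mu}\phi_\star=-\ome\phi_\star$ on $\erre^+$ and the vertex condition holds, $\Psi_\star$ is a genuine solution of the stationary equation, whence $I_\ome[\Psi_\star]=0$.

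Next I would compute
\[
\widetilde S[\Psi_\star]=\frac{\mu}{2\mu+2}\|\Psi_\star\|_{2\mu+2}^{2\mu+2}
=\frac{\mu}{2\mu+2}\,N\int_{x_0}^\infty |\phi_s(x)|^{2\mu+2}\,dx .
\]
Performing the same change of variables $t=\tanh(\mu\sqrt{\ome}\,x)$ used to evaluate $d^{\text{line}}$, so that $\sech^{(2\mu+2)/\mu}(\mu\sqrt{\ome}\,x)\,dx=(\mu\sqrt{\ome})^{-1}(1-t^2)^{1/\mu}\,dt$ with lower endpoint $t=\tanh(\mu\sqrt{\ome}\,x_0)=|\al|/(N\sqrt{\ome})$, I obtain
\[
\widetilde S[\Psi_\star]=\frac{N}{2}\,(\mu+1)^{\frac1\mu}\,\ome^{\frac1\mu+\frac12}\int_{|\al|/(N\sqrt{\ome})}^1 (1-t^2)^{\frac1\mu}\,dt .
\]
This has the same closed form as $d^0(\ome)$ in \eqref{sinf}, up to the prefactor $N/2$ and the shifted lower limit of integration (the Kirchhoff value $\widetilde S[\Psi_\star]=\tfrac N2\,d^0(\ome)$ is recovered formally at $\al=0$).

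Finally, comparing the two expressions, the desired $\widetilde S[\Psi_\star]<d^0(\ome)$ is equivalent to
\[
\frac{N}{2}\int_{|\al|/(N\sqrt{\ome})}^1 (1-t^2)^{\frac1\mu}\,dt < \int_0^1 (1-t^2)^{\frac1\mu}\,dt .
\]
By the defining relation \eqref{alstar} the two sides are equal when $|\al|=|\al^\ast|$; since $\tau\mapsto\int_\tau^1(1-t^2)^{1/\mu}\,dt$ is strictly decreasing and $|\al|/(N\sqrt{\ome})$ strictly increases with $|\al|$, the left-hand side is strictly smaller whenever $|\al|>|\al^\ast|$, i.e. for $-N\sqrt{\ome}<\al<\al^\ast$. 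Hence $d(\ome)\le \widetilde S[\Psi_\star]<d^0(\ome)$. The only genuinely delicate points are the bookkeeping in the change of variables that produces the crucial factor $N/2$ --- which is what makes the symmetric competitor beat the escaping soliton only for $\al$ sufficiently negative --- and matching the resulting threshold exactly with the definition \eqref{alstar} of $\al^\ast$; the admissibility of $\Psi_\star$ and the comparison are otherwise elementary.
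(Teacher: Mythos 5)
Your proposal is correct and follows essentially the same route as the paper's proof: both exhibit the symmetric trial state whose components are all equal to the shifted soliton tail $\phi_s(\cdot+a)$, with $a$ determined by the $\de$ vertex condition $\tanh(\mu\sqrt{\ome}\,a)=|\al|/(N\sqrt{\ome})$, verify $I_\ome=0$ from the fact that this state solves the stationary equation, evaluate the reduced action via the identity \eqref{formula2} to get the factor $\tfrac N2$ and the shifted lower limit, and conclude by comparing with \eqref{sinf} using the defining relation \eqref{alstar} of $\al^\ast$ together with the strict monotonicity of $\tau\mapsto\int_\tau^1(1-t^2)^{1/\mu}\,dt$.
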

\begin{dem}
In order to prove \eqref{katsuo}, it is sufficient to exhibit a trial function $\widetilde\Psi\in \EE$ such that $I_\ome [\widetilde\Psi]=0$ and 
$S_\ome [ \widetilde \Psi ] < d^0 (\ome)$. 
Let $a$ be defined as 
\[
a = \f{1}{\mu \sqrt{\ome} } \arctanh \lf( \frac{|\al|}{ N \sqrt{\ome}} \ri).
\]
Now we consider the symmetric trial function $\widetilde \Psi$ given by:
\[
(\widetilde \Psi)_i (x) = \phi_s (x+a)
\qquad i=1\ldots N
\]
where $\phi_s$ is defined by \eqref{soliton}.
By construction $\widetilde\Psi \in \EE$. Moreover it is straightforward to check that 
$\widetilde\Psi \in {\mathcal D} (H) $ and 
\be \label{bashi}
H \widetilde\Psi - |\widetilde\Psi |^{2\mu} \widetilde\Psi = -\ome \widetilde\Psi.
\ee
Multiplying both sides of \eqref{bashi} and integrating by parts, 
one checks that $I_\ome [\widetilde \Psi ] = 0$ that is $\widetilde\Psi$
satisfies the constraint. Therefore it is sufficient to evaluate the reduced action using \eqref{formula2}. 
One has
\[
S_\ome [ \widetilde\Psi ] = \f N 2(\mu+1)^{\frac{1}{\mu} } \ome^{ \frac{1}{\mu} + \frac{1}{2} }
\int_{\f{|\al|}{N\sqrt{\ome}}}^1 (1-t^2)^{\frac{1}{\mu}} dt
\]
and the condition $S_\ome [ \widetilde \Psi ]< d^0 (\ome)$ amounts to
\[
\frac{N}{2} \int_{\f{|\al|}{N\sqrt{\ome}}}^1 (1-t^2)^{\frac{1}{\mu}} dt < \int_{0}^1 (1-t^2)^{\frac{1}{\mu}} dt
\]
which holds true by the hypothesis $-N\sqrt{\ome} < \al <\al^\ast$ since
\[
\frac{N}{2} \int_{\f{|\al|}{N\sqrt{\ome}}}^1 (1-t^2)^{\frac{1}{\mu}} dt < 
\frac{N}{2} \int_{\f{|\al^\ast|}{N\sqrt{\ome}}}^1 (1-t^2)^{\frac{1}{\mu}} dt =
\int_{0}^1 (1-t^2)^{\frac{1}{\mu}} dt .
\]
\end{dem}

\n
Now we can finally prove Theorem \ref{mainvar}, as stated in the introduction.

\begin{dem2}{of Theorem \ref{mainvar}}
Let $\{\Psi_n \}$ be a minimizing sequence, we prove that there exists a subsequence weakly convergent in $H^1$.
First notice that $\|\Psi_n\|_{2\mu +2}$ is obviously bounded (see Lemma \ref{lemequi}). 
Recall that for $\Phi \in \EE$
\be \label{quadinf}
\| \Phi ' \|^2 + \al |\phi_1 (0)|^2 \geqslant \f {\al^2}{ N^2} \|\Phi\|^2 .
\ee
Using \eqref{quadinf} and $I_\ome [\Psi_n]\leqslant 0 $ we have
\[
0\leqslant \lf( \ome - \f{\al^2}{N^2} \ri) \| \Psi_n \|^2 \leqslant 
\| \Psi_n ' \|^2  +\ome\| \Psi_n \|^{2} +\alpha|\psi_{n,1}(0)|^2 \leqslant \| \Psi_n \|_{2 \mu+ 2}^{2\mu +2} 
\leqslant c .
\]
This implies $\| \Psi_n \|\leqslant c$.
Using again  $I_\ome [\Psi_n]\leqslant 0 $ we have also
\[
\| \Psi_n ' \|^2 \leqslant \| \Psi_n \|_{2 \mu+ 2}^{2\mu +2} - \ome\| \Psi_n \|^{2} - \alpha|\psi_{n,1}(0)|^2 \leqslant 
c + c \| \psi_{n,1}'\| \| \psi_{n,1}\|  \leqslant c + c\lf( \f{1}{\ve} \| \Psi_n\|^2 +\ve \| \Psi_n '\|^2 \ri)
\]
for any $\ve>0$. Taking $\ve$ sufficiently small we see that $\| \Psi_n ' \|$ is bounded and therefore 
also $\| \Psi_n ' \|_{H^1}$ is bounded. By Banach-Alaoglu theorem there exists a weakly convergent subsequence,
which will be still denoted by $\{ \Psi_n \}$. Let $\Psi_{\infty}$ be the weak limit.

\n
Now we prove that $\Psi_{\infty}\neq 0$. To this aim we preliminarily show that $\Psi_\infty \in {\mathcal E}$, 
$\Psi_n (0)\to \Psi_\infty (0)$ and 
that $I_\ome [ \Psi_n ] \to 0$. Let $\Lambda^j: \GG \to \erre$ be a function on the graph defined in the following way: 
$\la_j  (y) = e^{-y}$ and 
$\la_i (y)=0$ for $i\neq j$. Then by weak convergence and integration by parts we have
\be \label{halflimit}
\psi_{j,n} (0) = (\Lambda^j , \Psi_n )_{H^1} \to (\Lambda^j , \Psi_\infty )_{H^1}=\Psi_{j,\infty} (0).
\ee
Since $\psi_{j,n} (0) $ does not depend on $j$, the first two claims
are proved. We prove the last claim by contradiction. Assume that  
\be \label{claudio}
I_\ome [ \Psi_n ] \to 0
\ee
 is false, then there exists a subsequence, still denoted by $\{ \Psi_n \}$,
such that
\be \label{secondpre}
\lim_{n\to \infty} I_\ome [ \Psi_n ] = \ga <0 .
\ee
Let $\beta_n$ be defined according to \eqref{beta}
then
\[
\lim_{n\to \infty} \beta_n =
\lim_{n\to \infty} \lf( 1 + \f{ I_\ome [ \Psi_n ]}{ \| \Psi_n \|_{2 \mu+ 2}^{2\mu +2}} \ri)^{\f{1}{2\mu}}=
\lf( 1 + \f{\ga \mu}{2(\mu+1) d(\ome) } \ri)^{\f{1}{2\mu}} <1
\]
therefore
\[
\lim_{n\to \infty} \widetilde S [ \beta_n \Psi_n ] = \lim_{n\to \infty}\beta_n^{2\mu +2}  \widetilde S [ \Psi_n ] < d(\ome)
\]
and 
$I_\ome [ \beta_n \Psi_n ] = 0$ but this contradicts the assumption that $\Psi_n$ is a minimizing sequence.
Hence $I_\ome [ \Psi_n ] \to 0$.

\n
We proceed again by contradiction to prove that $\Psi_\infty \neq 0$. Assume that  $\Psi_\infty = 0$
and define 
\[
\rho_n = \f{\lf[ \| \Psi_n '\|^2 + \ome  \| \Psi_n\|^2\ri]^{\f{1}{2\mu} }}{\| \Psi_n \|_{2\mu +2}^{1+ \f{1}{\mu}} }.
\]
Using \eqref{halflimit}, \eqref{secondpre} and the contradiction hypothesis, one has
\[
\lim_{n\to \infty} \rho_n =
\lim_{n\to \infty} \lf( 1+ \f{ I_\ome [ \Psi_n ]  - \al |\psi_{1,n}
  (0) |^2 }{\| \Psi \|_{2\mu +2}^{2\mu +2}}\ri)^{\frac{1}{2\mu} } 
= 1.
\]
Therefore 
\[
\lim_{n\to \infty} \widetilde S[ \rho_n \Psi_n] = 
\lim_{n\to \infty}  \rho_n^{2\mu+2} \widetilde S[ \Psi_n] = d(\ome).
\]
On the other hand, by direct computation one has 
\[
I_\ome^0  [ \rho_n \Psi_n ] =0.
\]
Therefore, by Proposition \ref{iwashi} and Theorem \ref{actionkir}
\[
d(\ome) < S_\infty (\ome) \leqslant \widetilde S[ \rho_n \Psi_n].
\]
Passing to the limit, one obtains
\[
d(\ome) < S_\infty (\ome) \leqslant d(\ome)
\]
therefore the hypothesis $\Psi_\infty =0$ can not hold.

\n
Now we shall prove that $I_\ome [\Psi_\infty] \leqslant 0$. We recall, see \cite{BL}, Brezis and Lieb's lemma: if
$f_n$ converges weakly to $f_\infty$ in $L^p$, $1<p<\infty$, then
\be \label{fatou}
 \|f_n\|_p^p - \|f_n-f_\infty\|_p^p - \|f_\infty\|_p^p \to 0.
\ee
In our case, this implies that
\be \label{breli3}
\widetilde S [\Psi_n ]- \widetilde S [\Psi_n - \Psi_\infty ] - \widetilde S [\Psi_\infty ] \to 0
\ee
and, applying \eqref{fatou} both to $\Psi$ and $\Psi'$, that
\be \label{liebbata}
I_\ome [\Psi_n]- I_\ome [\Psi_n - \Psi_\infty] - I_\ome [\Psi_\infty] \to 0.
\ee
Suppose that $I_\ome [\Psi_\infty] >0 $. Then, by \eqref{claudio} and \eqref{liebbata},
\[
\lim_{n\to \infty} I_\ome [\Psi_n - \Psi_\infty]=   \lim_{n\to \infty}I_\ome [\Psi_n] - I_\ome [\Psi_\infty]
= - I_\ome [\Psi_\infty] <0 .
\]
Choose $\bar n$ such that $ I_\ome [\Psi_n - \Psi_\infty]<0$ for $n>\bar n$. Then by definition of
$d(\ome)$ we have
\be \label{partial1}
d (\ome) \leqslant \widetilde S [\Psi_n - \Psi_\infty ] , \qquad n>\bar n.
\ee
On the other hand, since $\Psi_\infty \neq 0$, by \eqref{breli3} one has
\[
\lim_{n\to \infty} \widetilde S [\Psi_n - \Psi_\infty]=   \lim_{n\to \infty}  \widetilde S [\Psi_n] 
-  \widetilde S  [\Psi_\infty]
=d(\ome)  -  \widetilde S  [\Psi_\infty] < d(\ome)
\]
and this contradicts \eqref{partial1}; so it must be $I_\ome [\Psi_\infty]\leqslant 0$.

\n
By definition $d(\ome) \leqslant \widetilde S  [\Psi_\infty]$. On the other hand, by the lower semicontinuity
of the norm under weak convergence we have
\[
\widetilde S  [\Psi_\infty] = \f{\mu}{2(\mu+1) } \| \Psi_\infty \|_{2\mu+2}^{2\mu +2} 
\leqslant \lim_{n\to \infty} \f{\mu}{2(\mu+1) } \| \Psi_n\|_{2\mu+2}^{2\mu +2} = d(\ome)
\]
which implies 
\[
\widetilde S  [\Psi_\infty] = d(\ome)
\]
and so $\Psi_\infty$ is an absolute minimum of $S_\ome$ constrained to Nehari manifold.
\end{dem2}


\section{Stationary States}\label{sezione5}
In this section we explicitly compute the stationary states of
$S_\ome$ and of $S_\ome^0$ and identify the minimum of the action.  We denote by  $[s]$  the integer part of $s$.

\begin{teo}[Stationary states of $S_\ome$]\mbox{ } \label{frullato}

\n
Let $\al<0$ and $\ome > \frac{\alpha^2}{N^2}$; then $S_\ome$ has $[(N-1)/2]+1  $ critical points $\Psi_{\ome,j},\  $  
with $ j=0,\dots ,[\frac{N-1}{2}] $, given, up to permutations of the
edges, by: 
\beq
\lf(\Psi_{\omega,j}\ri)_i(x) = 
\begin{cases}
\phi_s(x-a_j ) & i=1,\ldots ,j \\
\phi_s(x+a_j) & i=j+1, \ldots, N
\end{cases} \label{states1}
\eeq 
\beq a_j = \f{1}{\mu \sqrt{\ome}} \arctanh
\lf(\f{\al}{(2j-N)\sqrt{\ome}} \ri) \ .\label{states2} 
\eeq 
Moreover, for $-N\sqrt{\ome} < \al <\al^\ast$ the function $\Psi_{\ome , 0}$ is the ground state.
\end{teo}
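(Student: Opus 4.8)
The plan is to solve the stationary equation $H\Psi_\ome - |\Psi_\ome|^{2\mu}\Psi_\ome = -\ome\Psi_\ome$ directly, edge by edge, and then to retain among the resulting solutions exactly those lying in the operator domain, i.e.\ those meeting the vertex conditions of \eqref{domdelta}. First I would note that an element of $\DD(H)$ has each component in $H^2(\erre^+)$, so on the $i$-th edge the profile solves the scalar ODE $-\psi_i'' + \ome\psi_i = |\psi_i|^{2\mu}\psi_i$ and must decay at $+\infty$. A short phase argument—the quantity $\Im(\overline{\psi_i}\psi_i')$ is constant along each edge and must vanish for an $L^2$ solution—forces each component to be real up to a constant phase, and continuity at the vertex then fixes a single global phase, which gauge invariance lets us set to zero. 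The zero-energy homoclinic orbit of the real ODE is precisely the soliton $\phi_s$ of \eqref{soliton}, so every nonzero decaying solution on the $i$-th edge is a translate $\psi_i(x) = \phi_s(x - b_i)$ with $b_i\in\erre$, the sign being fixed to $+$ since $\phi_s>0$ and the common vertex value is necessarily positive.

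Next I would impose the two conditions defining $\DD(H)$. Because $\phi_s$ is even and strictly decreasing on $[0,\infty)$, continuity $\psi_1(0)=\cdots=\psi_N(0)$ is equivalent to $|b_1|=\cdots=|b_N|=:a\geq0$; hence each $b_i=\pm a$, and up to a permutation of the edges we may assume that $j$ components carry $+a$ and $N-j$ carry $-a$, which is exactly the form \eqref{states1}. Using the logarithmic derivative of the explicit soliton, $\phi_s'(a)/\phi_s(a) = -\sqrt{\ome}\tanh(\mu\sqrt{\ome}\,a)$, the derivative condition $\sum_i\psi_i'(0)=\al\psi_1(0)$ collapses to the single scalar equation $(N-2j)\tanh(\mu\sqrt{\ome}\,a) = |\al|/\sqrt{\ome}$, whose solution is precisely $a_j$ of \eqref{states2}. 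Since $\al<0$ while $\tanh(\mu\sqrt{\ome}\,a)>0$ for $a>0$, a genuine solution $a_j>0$ exists only when $N-2j>0$ and $\ome>\al^2/(N-2j)^2$; this both confines $j$ to the range $0,\dots,[(N-1)/2]$ and singles out $\Psi_{\ome,0}$ as the only branch present for every $\ome>\al^2/N^2$. This produces exactly the claimed $[(N-1)/2]+1$ critical points.

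For the final assertion I would invoke Theorem \ref{mainvar}: for $-N\sqrt{\ome}<\al<\al^\ast$ the action attains an absolute minimum on the Nehari manifold, and a standard argument (the Nehari manifold is a natural constraint for this power nonlinearity) shows any such minimizer is a nonzero free critical point of $S_\ome$, hence coincides—up to permutation and phase—with one of the $\Psi_{\ome,j}$ already classified. It then remains to order the finitely many candidate values. On the Nehari manifold $S_\ome=\widetilde S=\tfrac{\mu}{2\mu+2}\|\cdot\|_{2\mu+2}^{2\mu+2}$, so I would evaluate $\|\Psi_{\ome,j}\|_{2\mu+2}^{2\mu+2}$ by splitting each edge integral and using $\int_{-a}^\infty|\phi_s|^{2\mu+2}=\int_{-\infty}^\infty|\phi_s|^{2\mu+2}-\int_a^\infty|\phi_s|^{2\mu+2}$ together with \eqref{formula2}, obtaining the closed form
\[
S_\ome[\Psi_{\ome,j}] = (\mu+1)^{\frac1\mu}\ome^{\frac1\mu+\frac12}\Big(j\int_0^1(1-t^2)^{\frac1\mu}\,dt + \tfrac{N-2j}{2}\int_{\tau_j}^1(1-t^2)^{\frac1\mu}\,dt\Big),\qquad \tau_j=\frac{|\al|}{(N-2j)\sqrt{\ome}},
\]
which for $j=0$ reproduces the value already computed in Proposition \ref{iwashi}.

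The main obstacle is this last step: showing the expression above is strictly increasing in $j$, so that $\Psi_{\ome,0}$ realizes the minimum and is therefore the ground state. This is a one-variable monotonicity check made delicate by the fact that $j$ enters both explicitly and through $\tau_j$—so the integral $\int_{\tau_j}^1(1-t^2)^{1/\mu}\,dt$ decreases while its prefactor $(N-2j)/2$ also shrinks—and it must be read together with the strict inequality $S_\ome[\Psi_{\ome,0}]<d^0(\ome)$ of Proposition \ref{iwashi}, which rules out the escaping, symmetry-broken behaviour of the Kirchhoff case of Theorem \ref{actionkir}. Once the ordering is in hand, the existence half of Theorem \ref{mainvar} guarantees that the infimum is attained at $j=0$ rather than merely approached, completing the identification of $\Psi_{\ome,0}$ as the ground state.
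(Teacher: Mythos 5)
Your classification of the critical points follows the paper's proof essentially verbatim and is correct: on each edge any decaying $H^2(\erre^+)$ solution of $-\psi''+\ome\psi=|\psi|^{2\mu}\psi$ is a translated soliton $\sigma\phi_s(\cdot-b_i)$, continuity at the vertex forces a common phase and $|b_i|\equiv a$, and the derivative condition collapses to $(N-2j)\tanh(\mu\sqrt\ome\,a)=|\al|/\sqrt\ome$, which yields \eqref{states1}--\eqref{states2}, the restriction $N-2j>0$, and the count $[(N-1)/2]+1$. Your closed form for $S_\ome[\Psi_{\ome,j}]$ is also correct (it coincides with the paper's rewriting \eqref{kuso2} up to the overall factor $\f{(\mu+1)^{1/\mu}}{2}\ome^{\f1\mu+\f12}$), and your strategy for the last assertion — the minimizer provided by Theorem \ref{mainvar} is a critical point, hence one of the $\Psi_{\ome,j}$, so it suffices to order finitely many values — is exactly the paper's. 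The genuine gap is that you stop at precisely the step that carries the remaining content: you never prove that $S_\ome[\Psi_{\ome,j}]$ increases in $j$, and you flag it yourself as ``the main obstacle''. Without that ordering nothing identifies $j=0$ as the minimizer, so the ground-state claim is unproved. The paper closes it with a rescaling that you missed: writing $S_\ome[\Psi_{\ome,j}]$ as a constant times $N\int_0^1(1-t^2)^{1/\mu}\,dt-(N-2j)\int_0^{|\al|/((N-2j)\sqrt\ome)}(1-t^2)^{1/\mu}\,dt$, the inequality $S_\ome[\Psi_{\ome,j}]<S_\ome[\Psi_{\ome,j+1}]$ is equivalent to
\[
(N-2j-2)\int_0^{\f{|\al|}{(N-2j-2)\sqrt\ome}}(1-t^2)^{\f1\mu}\,dt<(N-2j)\int_0^{\f{|\al|}{(N-2j)\sqrt\ome}}(1-t^2)^{\f1\mu}\,dt,
\]
and the substitutions $t=s/(N-2j-2)$ on the left and $t=s/(N-2j)$ on the right turn both sides into integrals over the \emph{same} interval $[0,|\al|/\sqrt\ome]$, where they compare pointwise since $1-\lf(\f{s}{N-2j-2}\ri)^2<1-\lf(\f{s}{N-2j}\ri)^2$. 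So the monotonicity you found delicate is immediate after this change of variables. Note also that your suggestion that the ordering ``must be read together with'' Proposition \ref{iwashi} is a red herring: that proposition (and the comparison with the Kirchhoff infimum) enters only the proof of Theorem \ref{mainvar}, to prevent minimizing sequences from escaping; once a minimizer exists, the ordering of critical values is a purely algebraic comparison.

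A second, smaller omission: the theorem concerns critical points of $S_\ome$ on the form domain $\EE$, so the identification ``critical point of $S_\ome$ $\Leftrightarrow$ solution of \eqref{stat-eq} in $\DD(H)$'', which you take as your starting point, is itself part of what must be proved. The paper opens its proof with this regularity step: from $S_\ome'(\Psi)\Phi=0$ for all $\Phi\in\EE$ one first tests against $\Phi$ vanishing near the vertex to get $\psi_i\in H^2(\erre^+)$, and then an integration by parts for general $\Phi\in\EE$ recovers the $\delta$ boundary condition, so that $\Psi\in\DD(H)$. This is routine, but in a complete proof it cannot simply be presupposed.
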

\begin{dem}
A regularity argument shows that a constrained critical point of the action $S_\omega$ is in fact an element of the domain of the operator $H\ .$ We sketch the standard proof. Any such non vanishing critical point $\Psi$ satisfies $S^\prime (\Psi) = 0$, i.e.
\be \label{eins}
S_{\omega}^\prime (\Psi) \Phi \ = \ 0, \quad \forall \Phi \in {\mathcal E}.
\ee
Applying \eqref{eins} first to $\Phi$, then to $\Xi = - i \Phi$, and
summing the two expressions, we find
\be \label{zwei}
B (\Psi, \Phi) - \f{1}{2\mu+2} ( |\Psi |^{2 \mu} \Psi, \Phi) + \omega (\Psi, \Phi) \ = \ 0  
,
\ee
where $B$ is the bilinear form associated to the quadratic form $E^{lin}$.
So, from \eqref{zwei} the following estimate holds
\be\label{klmn}
| B (\Psi, \Phi) |  \ \leq \ C_\Psi \| \Phi \|, \qquad \forall \,
\Phi \in {\mathcal E}.
\ee
Notice that, choosing $\Phi$ among the functions vanishing in  a
neighborhood of zero, we conclude from \eqref{klmn} Riesz theorem and definition of weak derivative that every $\psi_i \in H^2 (\RE^+)$. Thus, for a generic $\Phi\in {\mathcal E}$ an integration by parts gives
\be \label{bigamma2}
2\ B (\Psi, \Phi) \  = -\sum_{i=1}^N( {\psi_i}^{\prime\prime}, \phi_i)_{L^2(\RE^+)}\  - {\phi}_1 (0)
\overline{ \left( \alpha\psi_1(0)-\sum_{i=1}^N {\psi_i}^{\prime}(0)\right)} .
\ee
So, from \eqref{klmn} and \eqref{bigamma2}, we conclude that $\Psi$ belongs to the domain  $\DD(H)$.  Moreover, the function
$
 H \Psi - \frac{1}{2\mu+2}| \Psi |^{2 \mu} \Psi + \omega \Psi
$ belongs to
$L^2 (\RE^+)$.




\noindent
Therefore $S'_{\omega}[\Psi]=0$ is equivalent to the following equation
\beq
\label{stat-eq}
H\Psi_{\omega} - |\Psi_{\omega}|^{2\mu} \Psi_{\omega} = -\ome \Psi_{\omega} \,\qquad \ome>0.
\eeq
Notice that $H$ acts locally as the Laplacian, thus on every edge we
must seek $L^2(\RE^+)$-solutions to the equation
\[
-\phi'' - |\phi|^{2\mu} \phi = -\ome \phi \,\qquad \ome>0.
\]
The most general $L^2 (\erre^+)$-solution is $ \phi(x) = \sigma \phi_s
(x- y) = \sigma \lf[ (\mu+1)\ome\ri]^{\f{1}{2\mu}} 
\sech^{\f{1}{\mu}} (\mu \sqrt{\ome} (x-y)) $
where $\sigma \in \ci$, $|\sigma|=1$ and $y\in\RE$. Therefore the components $(\Psi_{\omega})_i$
of a critical point $\Psi_{\omega}$  are given by
\begin{equation*}
\lf(\Psi_{\omega}\ri)_i(x)=\sigma_i \phi(x-y_i )\,.
\end{equation*}
In order to have a solution of \eqref{stat-eq} it is sufficient to impose boundary conditions \eqref{domdelta}
such that $\Psi_\ome \in \DD (H)$.  The
continuity condition in \eqref{domdelta} implies
$\sigma_1=\ldots=\sigma_N$ and $y_i=\ve_i a$ with $\ve_i=\pm1$ and
$a>0$. We can omit the dependence on $\sigma$ without losing generality.
Referring to the bell shape of the function $\phi_s$, we say that in the
i-th edge: there is a {\em bump} if $y_i>0$, that is, if $\ve_i=+1$;
there is a {\em tail} if $y_i<0$, that is, if $\ve_i=-1$.
Now we determine $\ve_i$ and $y_i$. The second boundary condition in
\eqref{domdelta} rewrites as
\beq 
\tanh (\mu \sqrt{\ome} a)\sum_{i=1}^N
\ve_i =\f{\al}{\sqrt{\ome}}\ .
\label{eq-a}
\eeq
Equation \eqref{eq-a} gives as a first constraint that $\sum_{i=1}^N\ve_i$ must have the same sign of $\al$. That is 
the critical point must have more tails than bumps. For every such a
configuration, or equivalently a choice of the set $\{\ve_i\}$, 
condition \eqref{eq-a} fixes uniquely $a$. We choose to index the
solutions by the number $j$ of bumps. Correspondingly one obtains a
unique solution to \eqref{eq-a} which we call $a_j$. 
In this way we arrive at \eqref{states1} and \eqref{states2}. For
instance, if $N=3$ then there are two stationary states, a three-tail state
and a two-tail/one-bump state (up to permutations of the edges). They are shown in figure \ref{figu1}. 

\mbox{}

\begin{figure}[h!] 
\centering
\includegraphics[scale=0.50]{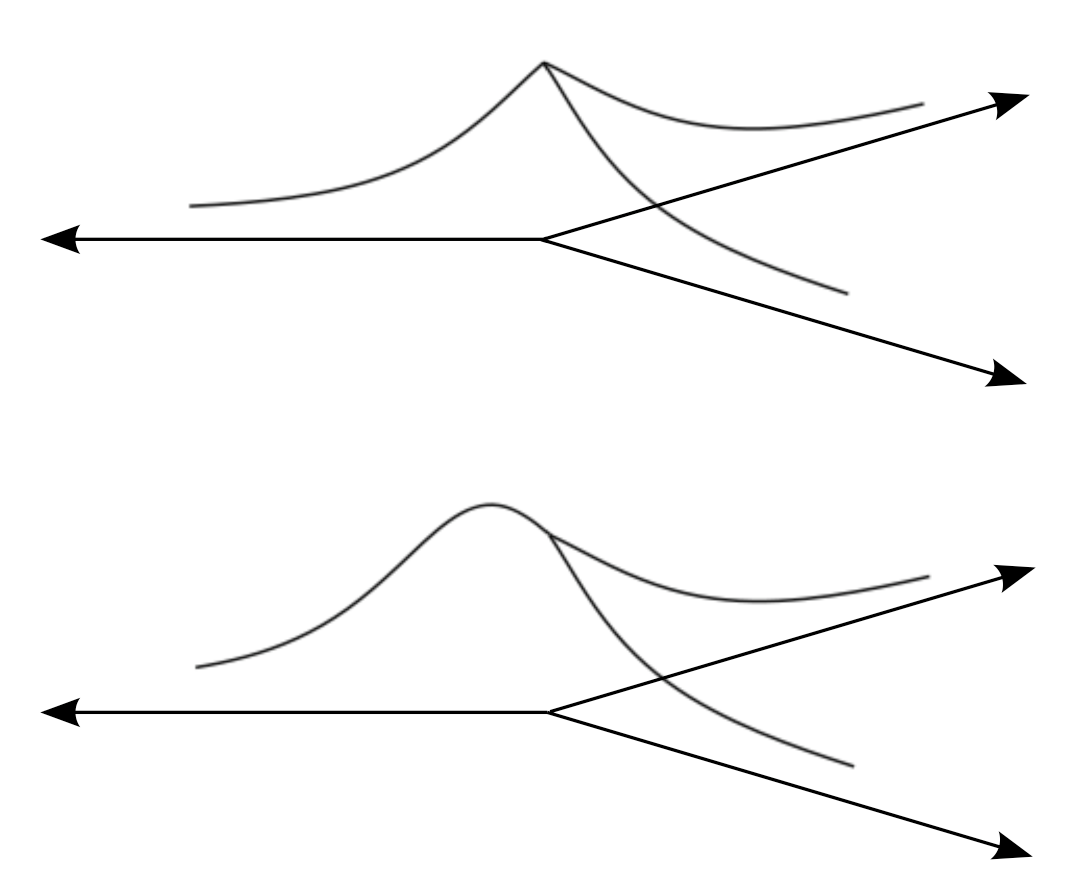}
\caption{Stationary states for $N=3$, $\alpha<0$\ .}
\label{figu1}
\end{figure}
\n
 
\noindent
To summarize, solutions to \eqref{stat-eq} 
are given by $\Psi_{\omega,j}$ with $j=0 ,\ldots ,[(N-1)/2]$.

\n
Notice that \eqref{eq-a} admits solutions iff the lower bound $\frac{{\al}^2}{N^2}<\omega\ $  holds true. We can explain this
fact for $\alpha <0$ noticing that \eqref{stat-eq}, for small $\Psi$ and neglecting
nonlinearity, is the eigenvalue equation for the linear part of the
Hamiltonian corresponding to energy $E=-\omega$; taking into account
the known fact that the linear graph Hamiltonian $H$ has the
ground state energy $-\frac{\alpha^2}{N^2}$, the lower bound means
that the nonlinear standing waves bifurcate from the vanishing
wavefunction at the ground state energy of the linear problem. 










\n
Now we prove that $\Psi_{\ome,0}$ is the ground state. Notice also
that $\Psi_{\ome,0}$ is uniquely defined since it is invariant under
permutations of the edges. We know that for $-N\sqrt{\ome} < \al
<\al^\ast$ a minimum of $S_\ome$ exists and therefore it is a critical
point. 
It is sufficient to prove that $S_\ome [ \Psi_{\ome, 0}] < S_\ome [ \Psi_{\ome, j}]$ for $j\neq 0$.
In fact we prove a stronger statement, that is, if $0 \leqslant j \leqslant [(N-1)/2] -1 $ then
\be \label{firb}
S_\ome [ \Psi_{\ome, j}] < S_\ome [ \Psi_{\ome, j+1}]\ .
\ee
Using \eqref{formula2}, equation \eqref{firb} is equivalent to
\begin{multline} \label{kuso}
j \int^1_{-\frac{|\al|}{(N-2j)\sqrt{\ome}   }} (1-t^2)^{\frac{1}{\mu} } dt 
+(N-j)\int^1_{\frac{|\al|}{(N-2j)\sqrt{\ome}   }} (1-t^2)^{\frac{1}{\mu} } dt < \\
(j+1) \int^1_{-\frac{|\al|}{(N-2j-2)\sqrt{\ome}   }} (1-t^2)^{\frac{1}{\mu} } dt 
+(N-j-1)\int^1_{\frac{|\al|}{(N-2j-2)\sqrt{\ome}   }} (1-t^2)^{\frac{1}{\mu} } dt.
\end{multline}
Let us define the constant $C$
\[
C = \int_0^1 (1-t^2)^{\frac{1}{\mu} } dt .
\]
It is convenient to rewrite the l.h.s. of \eqref{kuso} as
\begin{multline} \label{kuso2}
j \int^1_{-\frac{|\al|}{(N-2j)\sqrt{\ome}   }} (1-t^2)^{\frac{1}{\mu} } dt 
+(N-j)\int^1_{\frac{|\al|}{(N-2j)\sqrt{\ome}   }} (1-t^2)^{\frac{1}{\mu} } dt =
2j C + (N-2j) \int^1_{\frac{|\al|}{(N-2j)\sqrt{\ome}   }} (1-t^2)^{\frac{1}{\mu} } dt =\\
NC - (N-2j) \int_0^{\frac{|\al|}{(N-2j)\sqrt{\ome}   }} (1-t^2)^{\frac{1}{\mu} } dt.
\end{multline}
Repeating the same manipulations for the r.h.s., we see that \eqref{kuso2} is equivalent to
\[
(N-2j-2) \int_0^{\frac{|\al|}{(N-2j-2)\sqrt{\ome}   }} (1-t^2)^{\frac{1}{\mu} } dt <
(N-2j) \int_0^{\frac{|\al|}{(N-2j)\sqrt{\ome}   }} (1-t^2)^{\frac{1}{\mu} } dt .
\]
With a straightforward change of variables the last inequality becomes
\[ 
 \int_0^{\frac{|\al|}{\sqrt{\ome}   }} \lf[ 1-\lf( \f{t}{N-2j-2}\ri)^2\ri]^{\frac{1}{\mu} } dt <
 \int_0^{\frac{|\al|}{\sqrt{\ome}   }} \lf[1-\lf( \f{t}{N-2j}\ri)^2\ri]^{\frac{1}{\mu} } dt 
\]
which is manifestly true.
\end{dem}
\begin{remark} For $N>2$ and $j>0$ there exist excited states, but only for parameters $\omega>\frac{\alpha^2}{(N-2j)^2}$. So the picture is that for fixed $\alpha$ and increasing $\omega$ firstly the branch of ground state is born at $\omega>\frac{\alpha^2}{N^2}$ and then for sufficiently high $\omega$ the branches of higher excited states appear.
\end{remark}
\begin{remark} Even if in the present section we considered the case
  $\al \leqslant 0$, notice that the analysis of the previous theorem
  can be repeated also for  
$\al >0$ and one would find that the critical points are given again by \eqref{states1} and \eqref{states2} 
with $j=[N/2 + 1], \ldots , N$. This means that for a repulsive $\delta$ interaction at the vertex the stationary states have more bumps than tails.
\end{remark}
\par\noindent
We end this section with the characterization of stationary points of $S_\omega^0$
\begin{teo}[Critical Points of $S_\ome^0$] \mbox{ }
Let $\omega>0 $. If $N$ is odd, then there is a unique critical point of $S_\ome^0$ given by 
\be \label{milk}
\lf( \Psi_\ome^0 \ri)_i = \phi_s \qquad \qquad i=1,\ldots ,N
\ee
If $N$ is even then $S_\ome^0$ has a one parameter family of critical points given by:
\beq \label{shake}
\lf(\Psi_{\omega}^{0,a} \ri)_i(x) = 
\begin{cases}
\phi_s (x-a) & i=1,\ldots, N/2 \\
\phi_s (x+a) & i=N/2+1, \ldots ,N
\end{cases} \quad a\in\erre^+
\eeq
\end{teo}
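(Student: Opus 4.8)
The plan is to mirror the argument already used for Theorem \ref{frullato}, specializing the boundary conditions to the Kirchhoff case $\al=0$. First I would repeat the regularity step verbatim: any non vanishing constrained critical point $\Psi$ of $S_\ome^0$ satisfies $(S_\ome^0)'(\Psi)\Phi=0$ for all $\Phi\in\EE$, and testing this identity against $\Phi$ and against $\Xi=-i\Phi$ and summing shows, exactly as before via \eqref{klmn} and \eqref{bigamma2} (now with the boundary bilinear form associated to $E^{0,lin}$), that each component lies in $H^2(\erre^+)$ and that $\Psi\in\DD(H^0)$. Hence $\Psi$ solves the stationary equation $H^0\Psi-|\Psi|^{2\mu}\Psi=-\ome\Psi$, which acts as the Laplacian on each edge. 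Since the only $L^2(\erre^+)$ solutions of $-\phi''-|\phi|^{2\mu}\phi=-\ome\phi$ are the translated solitons $\sigma\phi_s(\cdot-y)$ with $|\sigma|=1$ and $y\in\erre$, every component has the form $(\Psi)_i(x)=\sigma_i\phi_s(x-y_i)$.

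Next I would impose the vertex conditions of $\DD(H^0)$, that is \eqref{domdelta} with $\al=0$. Continuity $\psi_1(0)=\dots=\psi_N(0)$ forces $\sigma_1=\dots=\sigma_N$ (which we drop by gauge invariance), and since $\phi_s$ is even and strictly decreasing on $\erre^+$ it forces $|y_i|=a$ for a common $a\geq0$; writing $y_i=\ve_i a$ with $\ve_i=\pm1$ reduces the problem to determining the signs $\ve_i$ and the shift $a$. The remaining Kirchhoff condition $\sum_{i=1}^N\psi_i'(0)=0$ is precisely \eqref{eq-a} evaluated at $\al=0$, namely
\[
\tanh(\mu\sqrt{\ome}\,a)\sum_{i=1}^N\ve_i=0.
\]
Because $\tanh(\mu\sqrt{\ome}\,a)=0$ if and only if $a=0$, this equation splits into two mutually exclusive alternatives: either $a=0$, in which case every component equals $\phi_s$ irrespective of the signs, giving the symmetric soliton; or $a>0$ together with $\sum_{i=1}^N\ve_i=0$, which requires equally many bumps and tails.

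Finally I would carry out the parity bookkeeping. A sum of $N$ terms $\pm1$ vanishes only when $N$ is even, with exactly $N/2$ indices equal to $+1$ and $N/2$ equal to $-1$; up to a permutation of the edges this is the configuration \eqref{shake}, and the shift $a>0$ is left completely free, producing the announced one-parameter family (with $a\to0$ recovering the symmetric state). When $N$ is odd the second alternative is impossible, so only $a=0$ survives and the unique critical point is \eqref{milk}. I do not expect a genuine obstacle here, since the whole content is obtained by setting $\al=0$ in \eqref{eq-a}; the only point demanding care is the consistent treatment of the degenerate value $a=0$, where the signs $\ve_i$ become irrelevant and the even family collapses onto the symmetric soliton, so that one must check that no critical point is either double counted or missed across the odd/even dichotomy.
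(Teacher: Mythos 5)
Your proposal is correct and follows essentially the same route as the paper: the paper's proof simply says to repeat the argument of Theorem \ref{frullato} with $\al=0$, reducing everything to the equation $\tanh(\mu\sqrt{\ome}\,a)\sum_{i=1}^N\ve_i=0$, and then reads off the odd/even dichotomy exactly as you do. Your extra care about the degenerate value $a=0$ (where the signs become irrelevant and the even family collapses onto the symmetric soliton) is a sensible clarification but not a departure from the paper's argument.
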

\begin{dem}
\noindent
Repeating the argument in the proof of Theorem \ref{frullato}, we have to find the solutions of 
\begin{equation*}
\tanh (\mu \sqrt{\ome} a)\sum_{i=1}^N
\ve_i =0 .
\end{equation*}
If $N$ is odd, then there is a unique solution given by $a=0$ which
corresponds to the critical point \eqref{milk}, and such a solution
can be described as composed by
$N$ half solitons glued at the vertex.
On the contrary, if $N$ is even then there are infinite solutions: $a\in \erre^+$, $\ve_i = +1$ for $i=1,\ldots, N/2$, 
$\ve_i = -1$ for $N/2+1, \ldots ,N$ gives a solution to \eqref{shake}
which corresponds to $\Psi_{\omega}^{0,a}$. 
\end{dem}
\begin{remark}
If $N$ is even, then the graph can be
considered as a set of $N/2$ copies of the real line. With a Kirchhoff
boundary condition, one has continuity and derivability of the
wavefunction at the vertex, and the above solutions 
$\Psi^{0,a}_{\omega}$ can be interpreted as $N/2$ identical solitary waves on each real line translated by a quantity $a$. 
\end{remark}
\begin{remark} In the case $N=3$ and for a cubic nonlinearity it has been proved in \cite{ACFN3} that the energy $E$ at constant mass $M$ is not minimized on $\Psi_\ome^0$, which turns out to be a saddle point. In fact, the constrained energy is bounded from below but it has not an absolute minimum. We conjecture that the same phenomenon
happens here for the action.
\end{remark}

\section{Stability of Ground States}
In Section 4 we showed the existence of a profile
$\Psi_{\omega,0}$ (denoted there by $\Psi_{\infty}$) which minimizes
the action $S_\omega$ for the star graph with attractive delta
boundary conditions at the vertex. This minimizer is the ground state
of the problem if the strength $\alpha$ of the point interaction at
the vertex is sufficiently large. In Section 5 we provided the
explicit expression of stationary states $\Psi_{\omega,j}$, and in
particular of the ground state $\Psi_{\omega,0}$. In correspondence to
the ground state (and to every stationary state) one has a standing
wave of the form $\Psi_{\omega}\ e^{i\omega t}$ which solves the NLS
on the graph. In this section we study the stability of such a
standing wave. Being a time-dependent solution and not an equilibrium
point of the autonomous equation \eqref{diffform}, stability has to be
intended as {\it orbital stability}. This means Lyapunov stability up to
symmetries of the equation, which in this case are related to the gauge
$U(1)$ invariance of the Hamiltonian of the problem. To be precise, we
recall that the orbit of $\Psi_{\omega}$ is defined as ${\mathscr
  O}(\Psi_{\omega})=\{e^{i\theta}\Psi_{\omega},\ \theta\in
\erre\}$. \par\noindent The state $\Psi_{\omega}$ is orbitally stable
if for every $\epsilon>0$ there exists $\delta>0$ such that
\begin{equation*}
d(\Psi(0), {\mathscr O}(\Psi_{\omega}))<\delta \quad \Rightarrow \quad
d(\Psi(t), {\mathscr O}(\Psi_{\omega}))<\epsilon \quad\quad \forall t
> 0 
\end{equation*}
where $\Psi(t)$ is the solution to \eqref{intform1} with initial data $\Psi_0$,
\begin{equation*}
d(\Psi, {\mathscr O}(\Psi_{\omega}))=\inf_{\Phi\in{\mathscr O}(\Psi_{\omega})}\|\Psi-\Phi\|_{\EE},
\end{equation*}
and the norm $\|\cdot\|_{\EE}$ is the energy norm, given in our case by $H^1$ norm in ${\mathcal E}$.\par\noindent
A stationary state is unstable if it is not stable.\par\noindent
Orbital stability of solitary (not necessarily standing) solutions to
nonlinear Schr\"odinger equations is a well developed subject, studied
in several classical papers. Two main techniques have been developed
to establish orbital stability of solitary waves: the method of
Cazenave and Lions based on Concentration Compactness (\cite{[Caz],
  [CL]}), and the method of constrained linearization pioneered by
Benjamin in the case of KdV equation and studied more systematically
by Weinstein and Grillakis-Shatah-Strauss \cite{[W2],[W3], [GSS1],
  [GSS2]}. 
In \cite{[ACFN4]} we studied the problem of the   minimization of the energy at constant mass through a suitable adaptation of concentration compactness method to the case of star graphs, while
here we refer to the Weinstein and Grillakis-Shatah-Strauss method
which is especially suited for treating stability of equilibria of Hamiltonian systems with symmetry. Some preparation is needed to cast
our problem in this framework. As in the scalar case, the NLS on a
graph turns out to be a Hamiltonian system on the real Hilbert space
of the couples of real and imaginary part of the wavefunction. We pose
$\Psi=U\ + i\ V \equiv (U,V)$, where $U=(u_1,\dots,u_N)^T$ and
$V=(v_1,\dots,v_N)^T$. So we identify $L^2(\GG)=L^2(\GG, \ci)$ with
$L^2(\GG, \erre)\oplus L^2(\GG, \erre):=L^2_{\erre}(\GG)\ .$
Analogously one can define the spaces
$L^{p}_{\erre}(\GG)$. Correspondingly,  
$L^2(\GG)$ can be given the structure of a real Hilbert space taking
as its scalar product the real part of the usual complex one: 
\[
((U_1,V_1)^T,(U_2,V_2)^T )_{L^2_{\erre}(\GG)} = {\rm Re} (\Psi_1, \Psi_2 )_{L^2(\GG)}\ .
\]
Furthermore, $L^2(\GG)$ is also a symplectic manifold when endowed
with the symplectic form (coinciding with the imaginary part of the
complex scalar product) 

\[
\Omega((U_1,V_1),(U_2,V_2))={\rm Im}( \Psi_1,  \Psi_2)_{L^2(\GG)} =
\sum_{i=1}^N \int_{\erre^+} ((v_2)_i (u_1)_i - (v_1)_i (u_2)_i)
dx\\ . 
\]
The same symplectic structure is inherited by the energy space $\EE$.
Moreover, multiplication by the imaginary unit $i$ is equivalent to
acting by the matrix $-\mathcal J \in {\rm Mat}(\erre, 2N\times 2N)$,
where  
\begin{equation*}
\mathcal J = \left(
  \begin{array}{cc}
    0 & I \\
    -I & 0 \\
  \end{array}
\right)\ ,
\end{equation*}
and the blocks $0$ and $I$ are the zero and unit matrices in ${\rm Mat}(\erre, N\times N)$.\par\noindent
Note that if $\Psi\in \DD(H)$, then the real vectors $U$ and $V$
satisfy the same boundary conditions as $\Psi$; we will say, with a
slight abuse, that they belong to $\DD(H)$.  With these premises, the
nonlinear Schr\"odinger equation for $\Psi$ is equivalent to the
canonical system  
 \be
\label{hamilt}
\frac{d}{dt} 
\begin{pmatrix}
U \\ V
\end{pmatrix}={\mathcal J} E'[U,V]
.
\ee
where the Hamiltonian $E$ becomes
\[
E(U,V)=\frac{1}{2}\|(U',V')\|_{L^2_{\erre}(\GG)}^2  -
\frac{1}{2\mu+2}\|(U,V)\|^{2\mu+2}_{L^{2\mu+2}_{\erre}(\GG)} +
\frac{\alpha}{2}(|u_1(0)|^2+|v_1(0)|^2) 
\]
or explicitly
\[
\begin{aligned}
& \frac{1}{2}\sum_{k=1}^N\bigg[\int_0^{+\infty}(|u_k'|^2 + |v_k'|^2)\ dx \bigg]+ \frac{\alpha}{2}(|u_1(0)|^2+|v_1(0)|^2)\\
& -\frac{1}{2\mu+2}\sum_{k=1}^N\bigg[\int_0^{\infty}(|u_k|^2+|v_k|^2)^{\mu+1}dx\bigg]\equiv E[u,v]\ ,
\end{aligned}
\]
\par\noindent
and the derivative $E'$ is given by
$$
E'((U,V))[(H,Z)]={\frac{d}{d\epsilon}} \{E((U,V)+\epsilon (H,Z))\}_{\epsilon=0}.
$$
Linearization of the Hamiltonian system \eqref{hamilt} around the stationary state is achieved by substituting
\[
(\Psi_t)_j=((\Psi_{\omega,0})_j + h_j +i z_j)e^{i\omega t}
\]
and neglecting higher order terms than linear in \eqref{hamilt}. The real vector functions $H$ and $Z$ satisfy
 \begin{equation*}
\frac{d}{dt} 
\begin{pmatrix}
H \\ Z
\end{pmatrix}={\mathcal J} {\mathcal L}
\begin{pmatrix}
H \\ Z
\end{pmatrix}\ ,
\end{equation*}
where ${\mathcal L}$ is the unique s.a. operator associated to the
symmetric and lower bounded quadratic form
$S_{\omega}''(\Psi_{\omega,0})$, i.e. the second derivative of the action at the
ground state. Indeed, the second derivative is defined through the formula 
 $${\mathcal L}((H_1,Z_1),(H_2,Z_2))=
 S_{\omega}''(\Psi_{\omega,0})((H_1,Z_1),(H_2,Z_2))=\frac{\partial^2}{{\partial\epsilon}{\partial \lambda }}
\{{S_{\omega}}(\Psi_{\omega}+\epsilon (H_1,Z_1) +\lambda (H_2,Z_2)\}_{\epsilon=0, \lambda=0}\ . $$
An easy computation shows that 
${\mathcal L}={\rm diag}({\mathcal L}_-,{\mathcal L}_+)$ and the
matrix operators ${\mathcal L}_-$ and ${\mathcal L}_+$ are given by
(here the summation convention is used) 
\begin{equation*}
\begin{aligned}
\left({\mathcal L}_+\right)_{i,k} &=\left(-\frac{d^2}{dx^2} + \omega - |(\Psi_{\omega,0})_k|^{2\mu}\right){\delta}_{i,k} \\ 
\left({\mathcal L}_-\right)_{i,k} &=\left(-\frac{d^2}{dx^2} +\omega -
(2\mu+1)|(\Psi_{\omega,0})_k|^{2\mu}\right){\delta}_{i,k} \ . 
\end{aligned}
\end{equation*} 
The operators ${\mathcal L}_-$ and ${\mathcal L}_+$  act on the real
vector functions $H$ and $Z$ belonging to ${\mathcal D}(H)={\mathcal
  D}({\mathcal L}_\pm)$. Notice that to simplify notation from here on
we suppress the dependence of operators ${\mathcal L}_\pm$ on the
ground state $\Psi_{\omega,0}$. \par\noindent  
Precise conditions to have orbital stability (and instability) for general Hamiltonian
systems and in particular for systems of NLS equations, are given in
the already quoted papers of Weinstein and
Grillakis-Shatah-Strauss. They can be reduced to the validity of three
conditions, called Assumptions I, II and III in \cite{[GSS1]} and
\cite{[GSS2]} and the verification of a further convexity condition on
the function $d(\omega)=S_{\omega}(\Psi_\omega)$  introduced in
Section 4 and called in the physical literature the Vakhitov-Kolokolov
condition.\par\noindent 
Assumption I is the  well-posedness, proved in Section 2. Assumption II is 
the existence of a regular branch of standing solutions of the
stationary equation, proved for our model in Section 4 and 5, where
the regular family of standing waves $(\frac{\alpha^2}{N^2},
+\infty)\ni\omega \mapsto\Psi_{\omega,0}$ is explicitly
constructed. Assumption III concerns spectral properties of
linearization $E''(\Psi_{\omega,0})=({\mathcal L}_-, {\mathcal L}_+)$
around the ground state. The spectral conditions are stated and proved
in the following proposition. 
\begin{prop} \label{spectral}\par\noindent The operators ${\mathcal
    L}_-$ and ${\mathcal L}_+$ are selfadjoint. Moreover:\par\noindent 
$i_1$) $\ker {\mathcal L}_+ = \{\Psi_{\omega,0}\}$ and
the rest of the spectrum is positive; \par\noindent
$i_2$)  $\ker{\mathcal L}_- = \{0\}$; \par\noindent
$i_3$) $n({\mathcal L}_-)=1$, 
where $n(A)$  is the number of negative eigenvalues of the operator $A$, i.e. its Morse index. 
\end{prop}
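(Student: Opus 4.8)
The plan is to exploit the permutation symmetry of the ground state to reduce each of ${\mathcal L}_+$ and ${\mathcal L}_-$ to a pair of scalar Schr\"odinger operators on the half-line. Self-adjointness is immediate: on $\mathcal{D}(H)$ each ${\mathcal L}_\pm$ is $H$ plus the bounded symmetric multiplication operator $-c_\pm|\Psi_{\omega,0}|^{2\mu}$ (with $c_+=1$, $c_-=2\mu+1$), hence self-adjoint by Kato--Rellich. Since $\Psi_{\omega,0}$ has all components equal to $\phi_s(\cdot+a_0)$, the potential is independent of the edge, so ${\mathcal L}_\pm$ commute with the edge permutations and leave invariant the splitting $L^2(\GG)={\mathcal S}\oplus{\mathcal S}^\perp$, where ${\mathcal S}$ is the symmetric subspace ($\psi_1=\dots=\psi_N$) and ${\mathcal S}^\perp=\{\Psi:\sum_k\psi_k=0\}$. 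On ${\mathcal S}$ continuity is automatic and the vertex condition of \eqref{domdelta} collapses to the Robin condition $\psi'(0)=\tfrac{\alpha}{N}\psi(0)$; on ${\mathcal S}^\perp$ the common boundary value forced by continuity must vanish, so each component obeys a Dirichlet condition and the $\alpha$-term drops. Hence $\mathrm{spec}({\mathcal L}_\pm)=\mathrm{spec}(L_\pm^{R})\cup\mathrm{spec}(L_\pm^{D})$, the latter with multiplicity $N-1$, where $L_\pm^{R}=-\partial_x^2+\omega-c_\pm\,\phi_s(\cdot+a_0)^{2\mu}$ on $[0,\infty)$ with Robin datum $\tfrac{\alpha}{N}$ and $L_\pm^{D}$ is the same expression with a Dirichlet condition. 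The two explicit zero-energy solutions that drive everything are $\phi_s(\cdot+a_0)$, which solves $L_+^{R}\,\cdot=0$ by the stationary equation \eqref{stat-eq}, and $\phi_s'(\cdot+a_0)$, which solves the common differential expression of $L_-^{R}$ and $L_-^{D}$ at energy zero (obtained by differentiating the profile equation for $\phi_s$).

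For ${\mathcal L}_+$ I would first note that $\phi_s(\cdot+a_0)$ indeed satisfies the Robin datum: the defining relation \eqref{states2} for $a_0$ is precisely $\phi_s'(a_0)/\phi_s(a_0)=-\sqrt{\omega}\tanh(\mu\sqrt{\omega}a_0)=\alpha/N$. Being strictly positive and $L^2$ on $[0,\infty)$, it is the nodeless ground state of $L_+^{R}$ at eigenvalue $0$, which is therefore simple and strictly below the rest of the spectrum; this accounts for the symmetric copy of the kernel. It remains to show $L_+^{D}>0$ strictly. I would do this by a Wronskian comparison against $\phi_s(\cdot+a_0)$: a Dirichlet eigenfunction at energy $\le0$ produces a Wronskian with $\phi_s(\cdot+a_0)$ that is monotone yet must run from a nonzero boundary value to $0$ at infinity with the wrong sign, a contradiction, while the value $0$ would force proportionality with $\phi_s(\cdot+a_0)$, incompatible with the Dirichlet condition since $\phi_s(a_0)\neq0$. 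Together these give $\ker{\mathcal L}_+=\mathrm{span}\{\Psi_{\omega,0}\}$ with the rest of the spectrum positive, which is claim $i_1$.

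For ${\mathcal L}_-$ the decisive point is the location of the node of $\phi_s'$. Since $a_0>0$, on $x\ge0$ one has $x+a_0\ge a_0>0$, so $\phi_s'(\cdot+a_0)$ is strictly negative and nodeless on $[0,\infty)$ and is the unique (up to scale) $L^2$ zero-energy solution. The same Wronskian comparison, run now against this nodeless negative solution, shows $L_-^{D}>0$ strictly: an eigenvalue $\le0$ is excluded because the Wronskian of a would-be ground state $\psi_0$ with $\phi_s'(\cdot+a_0)$ is monotone yet must run from $\psi_0'(0)\,\phi_s'(a_0)<0$ at $x=0$ to $0$ at infinity, and the value $0$ is excluded since $\phi_s'(a_0)\neq0$ forbids proportionality with a Dirichlet mode. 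Thus the antisymmetric block contributes neither kernel nor negative spectrum. For the symmetric block, the trial function $\phi_s(\cdot+a_0)$ gives $\langle L_-^{R}\phi_s(\cdot+a_0),\phi_s(\cdot+a_0)\rangle=-2\mu\int_0^\infty \phi_s(\cdot+a_0)^{2\mu+2}\,dx<0$ (using $L_+^{R}\phi_s(\cdot+a_0)=0$ and $L_-^{R}=L_+^{R}-2\mu\,\phi_s(\cdot+a_0)^{2\mu}$), so $L_-^{R}$ has at least one negative eigenvalue.

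To get exactly one I would invoke the variational interlacing between $L_-^{R}$ and its restriction to $\{\psi(0)=0\}$: on that codimension-one subspace the boundary term vanishes and the Robin form equals the Dirichlet form, so $L_-^{D}$ is the form-restriction of $L_-^{R}$, and min-max yields $\lambda_k(L_-^{R})\le\lambda_k(L_-^{D})\le\lambda_{k+1}(L_-^{R})$. Since $\lambda_0(L_-^{D})>0$, this forces $\lambda_1(L_-^{R})\ge\lambda_0(L_-^{D})>0$, so $L_-^{R}$ has exactly one negative eigenvalue and, as $\lambda_0(L_-^{R})<0<\lambda_1(L_-^{R})$, no zero eigenvalue. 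Collecting the blocks gives $\ker{\mathcal L}_-=\{0\}$ and $n({\mathcal L}_-)=1$, i.e.\ claims $i_2$ and $i_3$. I expect the main obstacle to be precisely the strict positivity of the antisymmetric block $L_-^{D}$: the potential $-(2\mu+1)\phi_s(\cdot+a_0)^{2\mu}$ is deeply attractive and would in general support a bound state, and it is only the ground-state geometry---all edges carrying tails, so that the sole node of $\phi_s'$ is pushed off $[0,\infty)$---that makes the zero-energy solution nodeless and rescues positivity. For an excited state $\Psi_{\omega,j}$ with $j\ge1$ bumps the corresponding $\phi_s'$ would vanish inside $[0,\infty)$ and the Dirichlet block would itself produce negative spectrum, which is the structural reason one expects the excited branches to behave differently.
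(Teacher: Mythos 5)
Your proposal is correct, but it takes a genuinely different route from the paper's proof. You exploit the permutation invariance of $\Psi_{\omega,0}$ to split $L^2(\GG)$ into the symmetric sector, where $\mathcal{L}_\pm$ reduce to half-line operators with Robin datum $\alpha/N$, and its orthogonal complement, which carries $N-1$ copies of the Dirichlet problem; you then argue by classical Sturm--Liouville tools: the positive zero-energy Robin solution $\phi_s(\cdot+a_0)$ pins the bottom of $L_+^R$ at zero, nodelessness of $\phi_s'(\cdot+a_0)$ on $[0,\infty)$ (its only node sits at $-a_0<0$ precisely because the ground state has only tails) plus Wronskian monotonicity yields strict positivity of both Dirichlet blocks, and Robin--Dirichlet form interlacing converts $L_-^D>0$ into $\lambda_1(L_-^R)>0$, giving exactly one negative eigenvalue and trivial kernel; I checked the sign bookkeeping ($\phi_s'(a_0)/\phi_s(a_0)=\alpha/N$, $W'=-\lambda\psi_0\phi$, etc.) and it is consistent. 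The paper never performs this symmetry reduction: it proves selfadjointness and locates the essential spectrum via relative compactness and Weyl's theorem; it proves $i_1$ by a ground-state (Jacobi) factorization of the quadratic form of $\mathcal{L}_+$, whose boundary terms vanish by the vertex conditions; it proves $i_2$ by the ODE observation that decaying zero-energy solutions are componentwise multiples of $(\Psi_{\omega,0})_k'$, which the boundary conditions then annihilate; and, crucially, it proves $i_3$ variationally, using that $\Psi_{\omega,0}$ minimizes $S_\omega$ on the codimension-one Nehari constraint, so that $S''(\Psi_{\omega,0})$ is nonnegative on the tangent space and $\mathcal{L}_-$ has at most one negative eigenvalue (citing Appendix B of \cite{FGJS1}). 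The trade-off is instructive: your argument is self-contained and does not invoke Theorem \ref{mainvar}, so it establishes the spectral conditions for the explicit profile $\Psi_{\omega,0}$ in the whole range $-N\sqrt{\omega}<\alpha<0$ rather than only for $\alpha<\alpha^\ast$ where Nehari minimality is available, and it explains structurally why the excited branches $\Psi_{\omega,j}$, $j\geq 1$, should behave differently (interior nodes of $\phi_s'$ in the Dirichlet blocks); the paper's argument, on the other hand, makes no use of the permutation symmetry of the profile, so it would survive for non-symmetric stationary states or more general graphs, and it ties the Morse index directly to the variational structure that the Grillakis--Shatah--Strauss machinery ultimately uses.
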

\begin{dem}
We begin to remark that operators ${\mathcal L}_-$ and ${\mathcal
  L}_+$ are selfadjoint on ${\mathcal D}({\mathcal L}_\pm)={\mathcal
  D}(H)$, due to the fact that the components of the ground state
$(\Psi_{\omega,0})_k$ are continuous and strongly decaying at
infinity, and as such they constitute in the matrix operators
${\mathcal L_{\pm}}$ a relatively compact perturbation of $H
+\omega$. For the same reason, by Weyl's theorem the absolutely
continuous spectrum of ${\mathcal L}_-$ and ${\mathcal L}_+$ coincides
with the essential spectrum of  $H +\omega$, i.e. $[\omega, +\infty)$,
  and the discrete spectrum is composed at most of a finite number of
  eigenvalues. 
Let us consider the kernel of ${\mathcal L}_+$. This surely contains
$\Psi_{\omega,0}$. Indeed,  the equation ${\mathcal
  L}_+{\Psi_{\omega,0}}=0$ coincides with the stationary equation
satisfied by $\Psi_{\omega,0}$. Let us show that there are not other
elements in the kernel. An integration by parts allows to rewrite the
quadratic form of  ${\mathcal L}_+$, for any element $V\in {\mathcal
  D}({\mathcal L}_+)$, as follows 
\[
\begin{aligned}
 ({\mathcal L}_+ V, V)_{L^2_{\erre}(\GG)}= &\sum_{k=1}^N\int_0^{+\infty} ((\Psi_{\omega,0})_k)^2 |\frac{d}{dx}(\frac{v_k}{(\Psi_{\omega,0})_k})|^2 dx\\+ & \sum_{k=1}^N\left(v_k(0)v'_k(0)-|v_k(0)|^2\frac{({\Psi}'_{\omega,0})_k(0)}{({\Psi}_{\omega,0})_k(0)} \right)
\end{aligned}
\]
and the last term is vanishing due to the $\delta$ boundary conditions at the vertex, continuity and $\sum_{k=1}^N v'_k(0)=\alpha v_1(0)$. \par\noindent
So $({\mathcal L}_+ v,
v)_{L^2_{\erre}(\GG)} > 0 $ for every $v\in {\mathcal D}({\mathcal L}_+)$ not coinciding with $\Psi_{\omega,0}$, which is the only eigenvector with eigenvalue $0$ of the operator ${\mathcal L}_+$. This proves statement $i_1$. \par\noindent Concerning statement $i_2$, it is sufficient to consider the equation ${\mathcal L}_- u=0$. This is written, in components, as 
\be\label{eqL-}
-\frac{d^2}{dx^2}u_k +\omega u_k- (2\mu+1)|(\Psi_{\omega,0})_k|^{2\mu} u_k = 0\ \ \ k=1,\dots, N\ ,
\ee
where $u_1(0)=u_2(0)=\dots =u_N(0)$ and $\sum_{k=1}^N u'_k(0)=\alpha
u_1(0)$ due to boundary conditions. The general theory of second order
differential equations gives for the previous equation a solution
which is a linear combination of asymptotically exponential
fundamental solutions; for $x\to +\infty$ only one of them is in
$L^2$. Now notice that the function $\tilde u_k:(0,+\infty)\mapsto
\erre\ \ {\rm s.t.}\ \ \tilde u_k(x)=\frac{d}{d
  x}(\Psi_{\omega,0})_k(x),\ \ k=1,\dots,N$ satisfies equation
\eqref{eqL-} in $(0,+\infty)$ and  decays at infinity. So every
solution in $u\in L^2(0,+\infty)$ of \eqref{eqL-} is a multiple of
such a function: $u_k (x)=c_k \tilde u_k (x), c_k\in \erre$. To
conclude, a direct calculation shows that the only real constants
compatible with the boundary conditions on $u$ is $c_k=0$, $k=1,\dots
, N$. This proves statement $i_2\ .$\par\noindent 
Let us consider finally the Morse index of the ${\mathcal L}_-$
operator. It is immediate that $n({\mathcal L}_-)\geq 1\ .$  Indeed,
let us consider the quadratic form for ${\mathcal L}_-$ evaluated on
the ground state:  
\[
({\mathcal L}_-\Psi_{\omega,0},\Psi_{\omega,0})_{L^2_{\erre}(\GG)} =
({\mathcal L}_+\Psi_{\omega,0},\Psi_{\omega,0})_{L^2_{\erre}(\GG)}
-2\mu (
|\Psi_{\omega,0}|^{2\mu}\Psi_{\omega,0},\Psi_{\omega,0})_{L^2_{\erre}(\GG)}
= 0 - 2 ||\Psi_{\omega,0} ||^{2\mu+2}_{2\mu+2} <0\ . 
\]
So the s.a. linear operator ${\mathcal L}_-$ has a negative vector,
so it surely admits at least negative eigenvalue. Let
us prove that it has  a single negative eigenvalue only. This is a
consequence of the variational properties of $\Psi_{\omega,0}$. In
fact
 $\Psi_{\omega,0}$ is a minimum point of the action $S_{\omega}$
on the codimension one constraint $I_{\omega}=0$. This minimization
property entails that $S''(\Psi_{\omega,0})$ is positive definite on
the tangent space at $\Psi_{\omega,0}$ of the constraint
manifold. Being the constraint a manifold of codimension one,
$S''(\Psi_{\omega,0})$ admits at most one negative eigenvalue and the
same is true for its only possibly negative diagonal component
${\mathcal L}_-$. See Appendix B in \cite{FGJS1} for the detailed 
argument. 
\end{dem}

The last property needed to show orbital stability of the ground state
is the so called slope condition, or Vakhitov-Kolokolov
condition. This coincides with the convexity of the function
$d(\omega)$, or more explicitly it means that on the branch of
stationary solutions $\{\Psi_{\omega,0}\}$ parametrized by $\omega$,
one has  
$$d''(\omega)=\frac{d^2}{d\omega^2} S_\ome(\Psi_{\omega})=\frac{d^2}{d\omega^2} (E(\Psi_\omega)+\omega M(\Psi_\omega))=\frac{d}{d\omega}||\Psi_{\omega,0}||^2> 0\ .$$
\par\noindent
In fact, a direct calculation making use of the formulas in the appendix (and which is possible in this model due to the explicitly known form of  $\Psi_{\omega,0}$), gives
\be\label{VK}
\frac{d}{d\omega}||\Psi_{\omega,0}||^2_2=
C\bigg[(\frac{1}{\mu}-\frac{1}{2})\int^1_{\frac{|\alpha|}{N\sqrt{\omega}}} (1-t^2)^{\frac{1}{\mu}-1}\ dt +\frac{|\alpha|}
{2N\sqrt{\omega}}\left(1-\frac{|\alpha|^2}{N^2\omega}\right)^{\frac{1}{\mu}-1}\bigg]
\ee 
with
$C=C(N,\mu,\omega)=N\frac{(\mu+1)^{\frac{1}{\mu}}}{\mu}\omega^{\frac{1}{\mu}-\frac{3}{2}}$.\par\noindent
Now, the r.h.s of \eqref{VK} is positive thanks to the lower bound
on $\omega > \frac{\alpha^2}{N^2}$. 
The Vakhitov-Kolokolov condition with the spectral properties proved in proposition \ref{spectral}, thanks to the Weinstein and Grillakis-Shatah-Strauss theory constitute the proof of Theorem \ref{orbitalstab} stated in the introduction.
\par\noindent 
\begin{remark} Note the following facts. \par\noindent
The theorem gives orbital stability of the ground state also for the critical nonlinearity $\mu=2\ .$\par\noindent 
From formula \eqref{VK} it follows that for supercritical nonlinearities $\mu>2$ there exists $\omega^*$ such that $\Psi_{\omega,0}\ $ is orbitally stable for 
$\omega\in (\frac{{\alpha}^2}{N^2},\omega^*)\ .$ In \cite{[GSS1]} it
is shown that if Assumptions I, II, III are satisfied and
$d''(\omega)<0$, then the standing wave corresponding to $\omega$ is
orbitally unstable. Again from formula \eqref{VK} we see that for
$\omega>\omega^*$ the ground state $\Psi_{\omega,0}$ is orbitally
unstable. The case $\omega=\omega^*$ where $d''(\omega)=0$ is
undecided. 
\end{remark}

\appendix
\section{Rearrangements} \label{rearra}

\n
For a given function $\Phi:\GG \to \ci$ we introduce the rearranged function $\Phi^*:\GG \to \erre$. The function $\Phi^*$ is  positive, symmetric, non increasing and   is constructed in a way such that it is equimeasurable w.r.t. $\Phi$, that is, the level sets of $|\Phi|$ and $\Phi^*$ have the same measure. This is sufficient to prove that all the $L^p(\GG)$ norms are conserved by the rearrangement. The comparison of the kinetic energy of $\Phi$ and $\Phi^*$ is more delicate. On the real line  the P\'olya-Szeg\H o inequality shows that the kinetic energy does not increase. This is no longer true for a star graph where a constant $N/2$ appears, see Theorem \ref{polya} below.

\n
Given $\Phi: \GG \to \ci$, we introduce $\la (s)$ and $\mu (s)$ defined by 
\[
\la (s) = | \{ |\Phi| \geqslant s\} |  \qquad \qquad
\mu (s) = | \{ |\Phi| > s\} |\,.
\]
Now we define the symmetric rearrangement of $\Phi$.
\begin{dfn} \label{defiriar}
Define $g: \erre^+ \to \erre^+$ as
\[
g(t) = \sup \{ s | \, \la(s) > N t\}\,,
\]
then we put $\Phi^* = (\phi^*_1, ... , \phi^*_N)$ with 
\[
\phi^*_j(x) = g (x)\,\qquad \qquad j=1,\dots,N.
\]
\end{dfn}

\n 
The main properties of $\Phi^{\ast}$ are the following:
\begin{prop} \label{lp}
Let $\Phi\in L^p(\GG)\ .$
The symmetric rearrangement $\Phi^{\ast}$ is  positive, symmetric and non increasing. Moreover $\| \Phi^{\ast}\|_{p  } =\| \Phi\|_{p  }$.
\end{prop}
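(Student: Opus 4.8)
The plan is to reduce the whole statement to a single distribution-function identity and then invoke the layer-cake (Cavalieri) representation of the $L^p$ norm. The three pointwise assertions — that $\Phi^\ast$ is positive, symmetric and non-increasing — fall out directly from the definition of $g$, so the only genuine content is the equality of the $L^p$ norms, which I would obtain from equimeasurability of $\Phi$ and $\Phi^\ast$.

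First I would dispose of the easy properties. Since $\la(0)=|\GG|=+\infty$, the set $\{s:\la(s)>Nt\}$ always contains $s=0$, so $g(t)\geq 0$ and $\Phi^\ast$ is positive; for $\Phi\in L^p(\GG)$ and $t>0$, Chebyshev's inequality gives $\la(s)<\infty$ for $s>0$, so the supremum defining $g(t)$ is finite. Monotonicity of $\la$ makes $t\mapsto\{s:\la(s)>Nt\}$ shrink as $t$ grows, whence $g$ is non-increasing; and since every component $\phi^\ast_j$ equals the same function $g$, the vector $\Phi^\ast$ is symmetric in the sense of the paper.

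The heart of the matter is the identity $|\{t>0:g(t)>s\}|=\mu(s)/N$ for every $s>0$. To prove it I would unwind the definition of the supremum: $g(t)>s$ holds iff there is some $s'>s$ with $\la(s')>Nt$, i.e. iff $\la(s^+)>Nt$, where $\la(s^+):=\lim_{s'\downarrow s}\la(s')$. The delicate step — and the one I expect to be the main obstacle — is the measure-theoretic computation $\la(s^+)=\mu(s)$: as $s'\downarrow s$ the sets $\{|\Phi|\geq s'\}$ increase to $\{|\Phi|>s\}$, so by continuity from below of the (finite, for $s>0$) measure one gets $\lim_{s'\downarrow s}\la(s')=\mu(s)$. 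Care is needed here to keep track of the strict versus non-strict inequalities in the two definitions and of the one-sided continuity of the non-increasing distribution function. Granting this, $g(t)>s$ iff $Nt<\mu(s)$, i.e. $\{t>0:g(t)>s\}=[0,\mu(s)/N)$, which has measure $\mu(s)/N$.

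Finally I would assemble the pieces. Because each of the $N$ components of $\Phi^\ast$ equals $g$, the distribution function of $\Phi^\ast$ on $\GG$ satisfies
\[
|\{x\in\GG:\Phi^\ast(x)>s\}|=N\,|\{t>0:g(t)>s\}|=\mu(s)=|\{x\in\GG:|\Phi(x)|>s\}|,
\]
so $\Phi$ and $\Phi^\ast$ are equimeasurable. The equality of norms then follows from the layer-cake formula
\[
\|\Phi\|_p^p=p\int_0^\infty s^{p-1}\,|\{|\Phi|>s\}|\,ds=p\int_0^\infty s^{p-1}\,|\{\Phi^\ast>s\}|\,ds=\|\Phi^\ast\|_p^p,
\]
valid for $1\leq p<\infty$; the case $p=\infty$ is immediate, since equimeasurability preserves the essential supremum.
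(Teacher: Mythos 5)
Your proof is correct, and at the top level it follows the same strategy as the paper: reduce everything to equimeasurability of $\Phi$ and $\Phi^\ast$, then conclude with the layer-cake representation. The technical route to equimeasurability, however, is different. The paper works with closed superlevel sets: it first proves the two-sided sandwich $\mu(g(t)) \leq Nt \leq \la(g(t))$ (its inequality \eqref{micio}), and then identifies $|\{\Phi^\ast \geq s\}|$ with $\la(s)$ through a supremum-plus-contradiction argument on $t_0 = \sup\{t : g(t)\geq s\}$, arriving at \eqref{equi}. You instead work with open superlevel sets and obtain an exact, one-step identification: $g(t)>s$ iff $\sup_{s'>s}\la(s') > Nt$, and since $\la$ is non-increasing this supremum is the right-hand limit $\la(s^+)$, which equals $\mu(s)$ by continuity from below (correctly, no finiteness is needed for that direction), so $\{t : g(t)>s\} = [0,\mu(s)/N)$. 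What your route buys is a cleaner argument: no contradiction step, and the superlevel sets of $g$ are exhibited explicitly as intervals, which makes the equimeasurability statement transparent; the price is having to keep the strict/non-strict inequalities straight, which you do. The two conventions ($\geq$ in the paper, $>$ in your proof) are interchangeable in the final step, since the two distribution functions differ for at most countably many values of $s$ and hence give the same layer-cake integral. You also supply two points the paper leaves implicit: finiteness of $g(t)$ for $t>0$ via Chebyshev, and the case $p=\infty$, which the layer-cake formula does not cover but which equimeasurability settles directly.
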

\begin{dem}
\n
By construction  $\Phi^\ast$ is positive and symmetric. Since $\la$ is
non increasing, $\Phi^\ast$ is non increasing too.

\n
Now we prove the invariance of the $L^p$ norm.
First we prove that if $ 0 \leqslant N t \leqslant \mu (0) $ then  
\be \label{micio}
\mu (g(t) ) \leqslant N t \leqslant \la (g(t)) .
\ee
Let $s'< g(t) $ then, by definition of $g$, $\la (s')> Nt$. Since the latter estimate holds for every 
such $s'$, taking the supremum we have $\la(g(t)) \geqslant Nt$ which
is the second half of \eqref{micio}.

\n
Now choose $s' > Nt$. Then $\mu (s') \leqslant \la (s') \leqslant Nt$ and taking the infimum over $s'$ the proof
of \eqref{micio} is complete. 

\n
A key property of the symmetric rearrangement is the equimisurability, that is:
\be
\label{equi}
|\{\Phi^{\ast} \geqslant s \} | = |\{ |\Phi| \geqslant s \}| \,.
\ee
\n
Notice that the set $\{ |\Phi|\geqslant t \}$ can be rephrased as the
union of
disjoint sets
\[
\{ |\Phi|\geqslant t \}= \{ |\phi_1| \geqslant t \}\cup \ldots \cup \{ |\phi_N|\geqslant t \} ,
\]
where $\{ |\phi_j| \geqslant t \}$ is to be understood as a subset of the $j$-th edge. Therefore we have
\[
|\{ |\Phi|\geqslant t \}| = \sum_{i=1}^N  | \{ |\phi_i| \geqslant t \} |.
\]
Fix $s$ and define
\begin{equation*}
t_0 = \sup \{ t | g(t) \geqslant s\}\,.
\end{equation*}
If $g(t)>s$ then $Nt \leqslant \la ( g(t) ) \leqslant \la (s) $ by \eqref{micio}. Taking the supremum
over $t$ we get $N t_0 \leqslant \la (s)$.

\n
Assume by absurd that $N t_0< \la (s) $. Take $t$ such that $N t_0 < Nt <\la (s)$ then 
$g(t) \geqslant s$ and the contradiction with the definition of $t_0$ is reached. Since
$N t_0 = |\{\Phi^{\ast} \geqslant s \} |$ equality  \eqref{equi} is proved.

\n
By the layer cake representation (see \cite{LL01}) we can prove the invariance
of $L^p$ norm under rearrangements. We start from
\[
\| \Phi\|_{p  }^p \equiv \sum_{j=1}^N \int_{\erre^+} |\phi_j (y)|^p dy = 
p\int_0^{+\infty}  s^{p-1}|\{ |\Phi|\geqslant s \}|\, ds
\]
and using \eqref{equi} we obtain
\[
\| \Phi \|^p_{p  } = 
p\int_0^{+\infty}  s^{p-1} | \{ \Phi^\ast \geqslant s \}| \, ds 
= \| \Phi^\ast \|^p_{p  }
\]
which is the desired identity.
\end{dem}

\n
Notice that if $\{|\Phi|=t\}$ has non-zero measure for some $t$ then
$\la$ has a jump, while if $|\Phi|$ has a jump then $\la$ has a
constant part. Moreover if $\la$ has a constant part then $g$ has a
jump and if $\la$ has a jump then $g$ has a constant part.  

\n
Notice also that if $|\Phi|$ is continuous and $\{|\Phi|=t\}$ has zero measure for $t$ then $g$ is the inverse of $\la$ up to scaling.





\n
Now we turn our attention to the P\'olya-Szeg\H o inequality and
prove it  by elementary methods.
We mainly follow \cite{hilden} while some technical results are taken from \cite{crucco}.

\n
From now on, we restrict ourselves  to real and positive $\Phi$. Later we shall extend the result to the general case.
First we gather some preliminary results in Lemmas \ref{cont} and \ref{lemmolo}.
Then we establish the required estimate for a class of regular
functions (Lemma \ref{linearzero} and Lemma \ref{linear}). Finally we
give the main theorem the proof of which relies on a careful
decomposition of the kinetic energy and on a limiting argument. 

\n
Moreover, since functions in $\EE$ are continuous, for our purposes in the following
we always assume that $\Phi$ is continuous without losing generality. 
\begin{lem} \label{cont}
Assume that $\Phi:\GG\to \erre^+$ is continuous and $\Phi\in L^p(\GG)$
then $\Phi^\ast$ is continuous and $\Phi^* \in L^p(\GG)$. 
\end{lem}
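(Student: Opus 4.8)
The plan is to dispose of the $L^p$ assertion at once and to concentrate the actual work on the continuity of $\Phi^\ast$. Since $\Phi\in L^p(\GG)$, Proposition \ref{lp} already yields $\|\Phi^\ast\|_p=\|\Phi\|_p<\infty$, so $\Phi^\ast\in L^p(\GG)$ with no further argument. For continuity, observe that by Definition \ref{defiriar} every component of $\Phi^\ast$ equals the single non-increasing function $g$, and these components trivially agree at the vertex; hence continuity of $\Phi^\ast$ on $\GG$ is equivalent to continuity of $g:\erre^+\to\erre^+$. As $g$ is non-increasing, the only possible failures are jump discontinuities, so it is enough to exclude these.

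The reduction I would exploit is the generalized-inverse relation between $g$ and $\la/N$, already visible in the proof of Proposition \ref{lp} through the identity $Nt_0=\la(s)$. For such a pair, $g$ has a jump precisely where $\la$ is constant on an interval; therefore the whole task reduces to showing that $\la$ has no flat part on the active range $(0,M)$, where $M:=\sup_{\GG}\Phi$ (which coincides with the essential supremum of $|\Phi|$, $\Phi$ being continuous). Note that $\la$ may itself have genuine jumps, arising when $\Phi$ has plateaus, but these only produce flat parts of $g$ and so do not threaten its continuity; it is strict monotonicity of $\la$, not continuity of $\la$, that matters.

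The heart of the argument is thus to prove that $\la$ is strictly decreasing on $(0,M)$, which I would do by contradiction. Suppose $\la(s_1)=\la(s_2)$ for some $0<s_1<s_2<M$. Since $s_i>0$ and $\Phi\in L^p$, each super-level set $\{\Phi\ge s_i\}$ has finite measure, so the subtraction $\la(s_1)-\la(s_2)=|\{s_1\le\Phi<s_2\}|$ is legitimate and forces $|\{s_1<\Phi<s_2\}|=0$. But $\{s_1<\Phi<s_2\}=\Phi^{-1}((s_1,s_2))$ is open, by continuity of $\Phi$, and a non-empty open subset of $\GG$ has positive measure; hence this set is empty, i.e. $\Phi$ omits every value in $(s_1,s_2)$. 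On the other hand $s_2<M$ supplies a point where $\Phi>s_2$, while the finiteness of $|\{\Phi\ge s_1\}|$ makes $\{\Phi<s_1\}$ non-empty, supplying a point where $\Phi<s_1$. Since $\GG$ is connected, joining these two points by a path (through the vertex if necessary) and applying the intermediate value theorem yields a point at which $\Phi$ assumes a value in $(s_1,s_2)$, the desired contradiction. Strict monotonicity of $\la$ then gives continuity of $g$, and the lemma follows.

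The step I expect to be the main obstacle is the careful measure-theoretic bookkeeping that converts ``$\la$ has no flat part'' into ``$g$ has no jump'' through the generalized-inverse correspondence, together with the clean handling of the endpoint behaviour of $g$ as $t\to 0^+$ and $t\to\infty$. The genuinely geometric content, however, is simple and robust: open level-gaps are empty, and connectedness of $\GG$ together with the $L^p$ tail of $\Phi$ rules such gaps out, so continuity is exactly what survives the rearrangement.
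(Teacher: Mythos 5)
Your proof is correct and takes essentially the same route as the paper's: the $L^p$ statement is delegated to Proposition \ref{lp}, and continuity of $g$ is deduced from strict monotonicity of the distribution function $\la$ via the generalized-inverse correspondence, with jumps of $\la$ (coming from plateaus of $\Phi$) producing only harmless flat parts of $g$. The one difference is completeness rather than strategy: the paper simply asserts that continuity of $\Phi$ makes $\la$ strictly decreasing (pointing to \cite{hilden} for details), whereas you actually justify it by the connectedness/intermediate-value/finite-measure argument — a genuine improvement, since continuity alone would not suffice without connectedness of $\GG$ and the $L^p$ tail forcing $\Phi$ below every positive level.
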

\begin{dem}
Due to proposition \ref{lp}, we have to discuss the continuity part only. 
Since $\Phi$ is continuous, $\la$ is  strictly decreasing and may have at most a countable number
of discontinuity of first kind. Therefore $\la$ is locally continuous away from
discontinuities and $g$ is locally (up to an irrelevant scaling) the
inverse function.  Then $g$ is locally continuous by the inverse
function theorem. At the points where $\la$ has a discontinuity it is
easy to check, using definition \ref{defiriar}, that $g$ has a 
constant part joining the non constant branches. So that  $g$ is
globally continuous. See \cite{hilden} for more details. 
\end{dem}

\n
Reasoning as in \cite{hilden} we get the following:
\begin{lem} \label{lemmolo}
Let $\Phi_n, \Phi:\GG\to \erre^+$ and $\Phi_n, \Phi \in L^p (\GG)$.
Suppose that $\|\Phi_n - \Phi\|_{p  } \to 0 $, then
\begin{align*}
&\mu (s) \leqslant \liminf_n \la_n (s) \leqslant \limsup_n \la_n (s) \leqslant \la (s)\,, \\
& g(0) \leqslant \liminf_n g_n (0)\,.
\end{align*}
\end{lem}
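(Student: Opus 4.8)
The plan is to deduce every inequality from the single observation that $L^p$ convergence on $\GG$ forces convergence in measure: by Chebyshev's inequality $|\{|\Phi_n-\Phi|>\ve\}|\leq \ve^{-p}\|\Phi_n-\Phi\|_{p}^{p}\to 0$ for every $\ve>0$. This I would combine with elementary inclusions between super-level sets and with the monotonicity of the distribution functions. Note first that $\la$ and each $\la_n$ are non-increasing and, since $\Phi,\Phi_n\in L^p(\GG)$, finite on $(0,\infty)$ (again by Chebyshev). Moreover, by continuity of the measure, $\la(s-\ve)\downarrow\la(s)$ as $\ve\downarrow 0$ because $\bigcap_{\ve>0}\{|\Phi|\geq s-\ve\}=\{|\Phi|\geq s\}$, while $\la(s+\ve)\uparrow\mu(s)$ as $\ve\downarrow 0$ because $\bigcup_{\ve>0}\{|\Phi|\geq s+\ve\}=\{|\Phi|>s\}$. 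These two one-sided limits are exactly what produces the strict versus non-strict super-level sets appearing in the statement.

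For the upper bound I would use that for every $\ve>0$ one has $\{|\Phi_n|\geq s\}\subseteq\{|\Phi|\geq s-\ve\}\cup\{|\Phi_n-\Phi|>\ve\}$; indeed if $|\Phi_n(x)|\geq s$ and $|\Phi(x)|<s-\ve$ then $|\Phi_n(x)-\Phi(x)|>\ve$. Taking measures and letting $n\to\infty$ gives $\limsup_n\la_n(s)\leq\la(s-\ve)$, and then $\ve\downarrow 0$ yields $\limsup_n\la_n(s)\leq\la(s)$. Symmetrically, from $\{|\Phi|\geq s+\ve\}\subseteq\{|\Phi_n|\geq s\}\cup\{|\Phi_n-\Phi|>\ve\}$ I obtain $\la(s+\ve)\leq\liminf_n\la_n(s)$, whence $\mu(s)\leq\liminf_n\la_n(s)$ after sending $\ve\downarrow 0$. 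Since $\liminf_n\la_n(s)\leq\limsup_n\la_n(s)$ is trivial, the first chain is proved.

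For the last inequality, set $L=\liminf_n g_n(0)$ and argue by contradiction, assuming $g(0)>L$. I would first record that $g(0)$ equals the essential supremum of $|\Phi|$: by Definition \ref{defiriar}, $g(0)=\sup\{s:\la(s)>0\}$, and for any $s<g(0)$ one may pick $s<s_1<g(0)$ with $\la(s_1)>0$, so that $\mu(s)\geq\la(s_1)>0$. Choosing $s$ with $L<s<g(0)$ thus gives $\mu(s)>0$. On the other hand, along a subsequence realizing $g_{n_k}(0)\to L<s$ one has $g_{n_k}(0)<s$ for large $k$; since $g_{n_k}(0)=\sup\{s':\la_{n_k}(s')>0\}$ this forces $\la_{n_k}(s)=0$ eventually, so $\liminf_k\la_{n_k}(s)=0$. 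Because $\liminf_n\la_n(s)\leq\liminf_k\la_{n_k}(s)$, the already established bound $\mu(s)\leq\liminf_n\la_n(s)$ gives $\mu(s)=0$, contradicting $\mu(s)>0$. Hence $g(0)\leq L$.

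The only genuinely delicate point is the bookkeeping of strict against non-strict inequalities: the appearance of $\mu$ (open super-level sets) on the left of the first chain, rather than $\la$, is dictated precisely by the loss of an $\ve$ in the set inclusions together with the one-sided continuity of $\la$, and the same mechanism is what drives the contradiction in the $g(0)$ estimate. Beyond selecting one subsequence realizing $\liminf_n g_n(0)$, no compactness or almost-everywhere extraction is required.
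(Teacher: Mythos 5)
Your proof is correct, and it is worth noting that the paper itself does not prove Lemma \ref{lemmolo}: it states it with only the remark ``Reasoning as in \cite{hilden}'', deferring entirely to Hild\'en's work on symmetrization. So your argument is a genuine, self-contained substitute rather than a variant of a proof in the paper. The route you take --- Chebyshev giving convergence in measure, then the two inclusions $\{|\Phi_n|\geq s\}\subseteq\{|\Phi|\geq s-\ve\}\cup\{|\Phi_n-\Phi|>\ve\}$ and $\{|\Phi|\geq s+\ve\}\subseteq\{|\Phi_n|\geq s\}\cup\{|\Phi_n-\Phi|>\ve\}$, closed by the one-sided limits $\la(s-\ve)\downarrow\la(s)$ and $\la(s+\ve)\uparrow\mu(s)$ --- is precisely what produces the asymmetry ($\mu$ on the left, $\la$ on the right), and each step is sound: the continuity-from-above step uses $\la(s-\ve_0)<\infty$ for small $\ve_0$, guaranteed by $\Phi\in L^p(\GG)$ when $s>0$, while the case $s=0$ is vacuous because $\la(0)=\la_n(0)=|\GG|=\infty$. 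The $g(0)$ part is also correct: from $g(0)=\sup\{s:\la(s)>0\}$, any $s$ with $\liminf_n g_n(0)<s<g(0)$ satisfies $\mu(s)>0$, whereas $g_{n_k}(0)<s$ along a subsequence realizing the $\liminf$ forces $\la_{n_k}(s)=0$, so the first part of the lemma gives $\mu(s)\leq\liminf_n\la_n(s)\leq\liminf_k\la_{n_k}(s)=0$, a contradiction. Compared with the usual proofs of such statements (extract an a.e.\ convergent subsequence from $L^p$ convergence, then apply Fatou to characteristic functions of level sets), your use of convergence in measure avoids any almost-everywhere extraction and any subsequence-of-subsequence bookkeeping beyond the single subsequence realizing $\liminf_n g_n(0)$, which makes the argument both more elementary and easier to check.
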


\n
We introduce the following class of regular functions.
\begin{dfn}
Let ${\mathcal PL}$ be the set of functions $\Phi:\GG \to \erre^+$ such that: $\Phi$ is continuous,  compactly supported and, for any $j=1,...,N$,  there
exists a finite number of compact intervals $I_{j,n}$  such that $\supp \phi_j = \bigcup_{n} I_{j,n}$ and $\Phi$ restricted to  $I_{j,n}$ is affine.
\end{dfn}
\n
This class of piecewise linear functions is dense in $\EE$.
\begin{lem} \label{linearzero}
Let $\Phi\in {\mathcal PL}$. There exist two open sets $O_1$ and $O_2$ such that $\Phi^\ast$ is constant
on $O_1$ and $\Phi^\ast$ is differentiable on $O_2$ with $|{\Phi^{\ast }}'|>0$. Moreover, $\GG \setminus ( O_1
\cup O_2 )$ consists of finitely many points.
\end{lem}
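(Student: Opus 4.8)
The plan is to exploit the piecewise-linear structure of $\Phi$ to describe the distribution function $\la(s)=|\{|\Phi|\geq s\}|$ explicitly, and then to read off the structure of $g$ (hence of $\Phi^\ast$) from the defining formula $g(t)=\sup\{s:\la(s)>Nt\}$. First I would collect the finite set of \emph{critical values}: the values attained by $|\Phi|$ at the (finitely many) endpoints of the affine pieces $I_{j,n}$, together with the heights of those pieces on which $\Phi$ is constant. Listing them in increasing order as $0=c_0<c_1<\cdots<c_m=\max|\Phi|$, one gets a finite partition of the range of $|\Phi|$.

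The key computation is that $\la$ is piecewise affine with respect to this partition. On each open interval $(c_i,c_{i+1})$ no endpoint value is crossed, so on every nonconstant affine piece the superlevel set $\{|\Phi|\geq s\}$ is a subinterval whose endpoint moves linearly in $s$; summing these contributions over all edges and all pieces shows that $\la$ is affine on $(c_i,c_{i+1})$. Moreover $\la$ is \emph{strictly} decreasing there: since $\Phi$ is continuous and compactly supported it attains every value in $[0,\max|\Phi|]$, so $\{c_i<|\Phi|<c_{i+1}\}$ has positive measure, contributed by nonconstant pieces sweeping through it. Finally, at each critical value $c_i$ that is the height of a flat piece, $\la$ has a downward jump of size $|\{|\Phi|=c_i\}|$, while at the remaining $c_i$ the function $\la$ is continuous. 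Thus $\la$ consists of finitely many strictly decreasing affine arcs joined, where flat pieces occur, by finitely many jumps.

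Next I would invert this picture through $g(t)=\sup\{s:\la(s)>Nt\}$, using that $g$ is continuous by Lemma \ref{cont}. Each strictly decreasing affine arc of $\la$ over an $s$-interval corresponds to an open $t$-interval on which $g$ is affine with $g'<0$, so that $|{\Phi^{\ast}}'|=|g'|>0$; each downward jump of $\la$ at a height $c_i$ corresponds, by the generalized-inverse formula, to an open $t$-interval on which $g\equiv c_i$ is constant; and for $t>|\supp\Phi|/N$ one has $g\equiv 0$. The finitely many $t$-values separating these regions I collect into a finite set $T\subset\erre^+$. The continuity of $g$ is what guarantees that the jumps of $\la$ produce flat parts (not jumps) of $g$ and that no extra exceptional behaviour appears.

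It remains to assemble the sets on the graph. Since $\Phi^\ast$ is symmetric with common profile $g$, I would take $O_2$ to be the union over the $N$ edges of the open $t$-intervals on which $g$ is affine with nonzero slope, and $O_1$ the union over the edges of the open intervals on which $g$ is constant. Then $\Phi^\ast$ is constant on $O_1$ and differentiable with $|{\Phi^{\ast}}'|>0$ on $O_2$, while $\GG\setminus(O_1\cup O_2)$ is contained in the union over the edges of $T$ together with the vertex, a finite set. The main obstacle is the careful verification that $\la$ is exactly piecewise affine with the asserted strict monotonicity between consecutive critical values and the asserted jumps at the flat heights, and the clean translation of this data through the generalized inverse; everything else is bookkeeping over the finitely many pieces.
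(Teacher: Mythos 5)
Your proof is correct, and it shares the paper's skeleton --- decompose the range of $|\Phi|$ by the finitely many critical values and exploit equimeasurability --- but the key step is executed by a genuinely different device. The paper never analyzes the distribution function $\la$ directly: for $s$ between consecutive critical values it parametrizes the solutions $y_{i,k}^j(s)$ of $\Phi=s$ on each non-constant affine piece, expresses the position $y^\ast(s)$ of the level line of $g$ through the alternating-sum identity \eqref{dragon}, and then invokes the inverse function theorem twice (once per piece to differentiate each $y_{i,k}^j$, and once more to invert $y^\ast$) to conclude $g'\neq 0$ on $O_2$. You instead prove that $\la$ itself is piecewise affine and strictly decreasing between consecutive critical values, with jumps exactly at the heights of the flat pieces, and then translate this structure through the generalized inverse $g(t)=\sup\{s:\la(s)>Nt\}$: affine arcs of $\la$ invert to affine arcs of $g$ with negative slope, jumps become flat parts, and continuity of $g$ (Lemma \ref{cont}) ensures nothing else occurs. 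Your route is more elementary (inversion of affine functions replaces the inverse function theorem) and delivers directly the stronger conclusion $\Phi^\ast\in{\mathcal PL}$, which the paper only extracts in the remark following the lemma. The trade-off is that the objects you bypass --- the branches $y_{i,k}^j(s)$, the identity \eqref{dragon}, and the derivative relation $\Phi'=\bigl(dy_{i,k}^j/ds\bigr)^{-1}$ --- are exactly the machinery the paper reuses in the proof of Proposition \ref{linear}, where the convexity argument for the kinetic-energy inequality is run on the quantities $dy_{i,k}^j/ds$; with your proof of this lemma, that bookkeeping would still have to be set up there.
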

\begin{dem}
Let $0=a_0<a_1< \ldots < a_m$ be the values assumed by $\Phi$ at the boundary of all $I_{j,n}$. If the set 
$\cup_i \{ \Phi= a_i \}$ has positive measure then $\Phi'=0$ a.e. on such a set. In the same way
${\Phi^{\ast}} '=0$ a.e. on the set $\cup_i \{ \Phi^\ast = a_i \}$. We define $O_1$ to be the interior part of
$\cup_i \{ \Phi^\ast = a_i \}$
and $O_2= \GG \setminus \cup_i \{ \Phi^\ast = a_i \}$. By construction
$\GG \setminus ( O_1 \cup O_2 )$ consists of finitely many points. 

\n
We have to show that $\Phi^\ast$ is differentiable on $O_2$ and $|{\Phi^{\ast}}'|>0$.
We introduce the notation $D_i=  \{ a_{i-1}<\Phi <a_i\}$ and $D_i^\ast=  \{ a_{i-1}< \Phi^\ast <a_i\}$.
Each set $D_i$ is decomposed first into the components on each edge, that is, $D_i= \cup_j D_i^j$.
Then we further decompose each $D_i^j$ 
into a finite union of open intervals $D_{i,k}^j$, $k=0,1, \ldots , n=n(i,j)$,
such that $\Phi$ restricted to $D_{i,k}^j$ is affine and non constant (see Figure \ref{figu3}).
Let us fix $s$ such that $ a_{i-1}< s <a_i$. Then the equation $\Phi=s$ has a solution $y_{i,k}^j (s) \in D_{i,k}^j$ for each $k=1,\ldots
, n$. We enumerate the sets $D_{i,k}^j$ in $k$ for $i,j$ fixed in an increasing way w.r.t to the distance 
from the vertex such that $y_{i,1}^j<y_{i,2}^j< \ldots < y_{i,n}^j$.
We put $n(i,j)=0$ if $D_i^j= \emptyset$ and no $y_{i,k}^j$ is defined for that values of $i$ and $j$.

\mbox{}

\begin{figure}[h!] 
\centering
\includegraphics[scale=0.70]{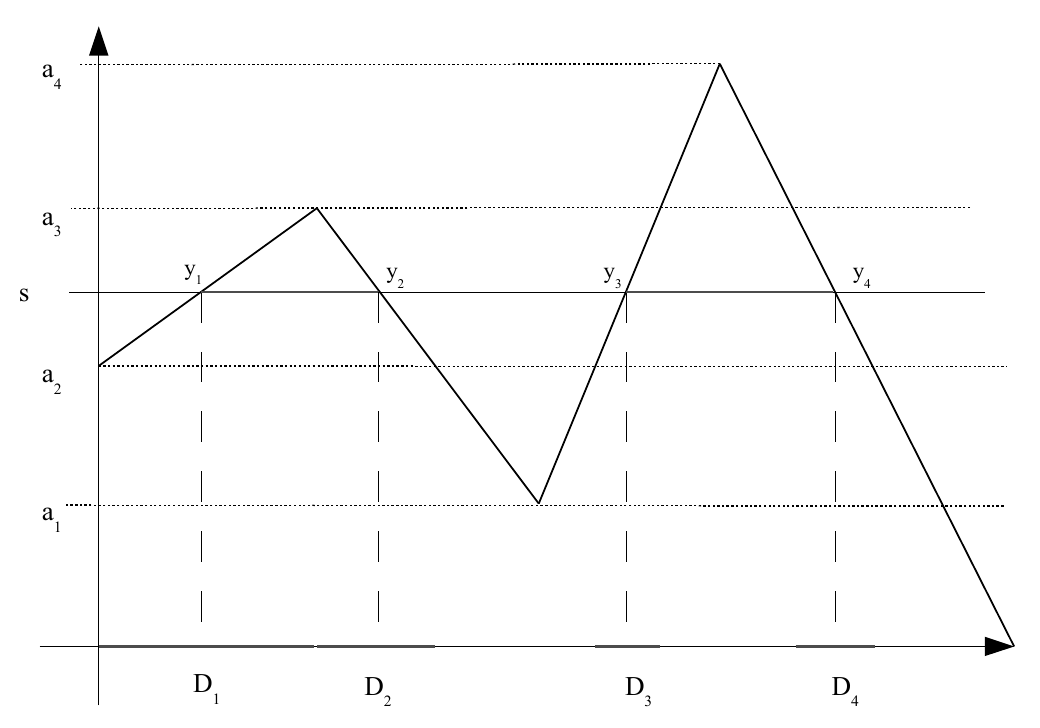}
\caption{Definition and enumeration of $D_{i,k}^j$}
\label{figu3}
\end{figure}
\n
We introduce also $ \tilde D_i^\ast$ defined as the projection of $D_i^\ast$ on the first edge.
Let $y^\ast (s)\in \tilde D_i^\ast$ be the solution to $g = s$.
We can express the measure of the level set $\{ \Phi >s\}$ by means of the $y_{i,k}^j$.
One has 
\[
|\{ \Phi >s\} | = \sum_{j=1}^N y_{i,n}^j - y_{i,n-1}^j + y_{i,n-2}^j -y_{i,n-3}^j\ldots =
\sum_{j=1}^N \sum_{k=1}^n (-1)^{n+k} y_{i,k}^j.
\]
Therefore by \eqref{equi} one has
\be \label{dragon}
|\{\Phi^\ast > s \}|= N |\{ g>s \} |= 
N y^\ast  = \sum_{j=1}^N \sum_{k=1}^n (-1)^{n+k} y_{i,k}^j.
\ee
On each set $D_{i,k}^j$ the derivative $\Phi'$ does not vanish by construction which implies that 
the function $ y_{i,k}^j (s) $ are differentiable w.r.t $s$ and
\[
\Phi' = \lf( \frac{d y_{i,k}^j}{ds} \ri)^{-1} \quad \text{on } D_{i,k}^j
\]
by the inverse function theorem. The function $y^\ast (s)$ is differentiable by equation \eqref{dragon}.
It is also invertible since $s$ is away from values of $\Phi$ corresponding to level sets with non zero measure. 
Therefore for such values of $s$ the function $g$ is strictly decreasing.
By the inverse function theorem $y^\ast$ is invertible and
\[
g'   = \lf( \frac{d y^\ast }{ds} \ri)^{-1} \neq 0.
\]
\end{dem}
Let $L$ be the Lipschitz constant of $\Phi$. Adapting the reasoning in
\cite{crucco}, one can prove  the following estimate:
\[
\lf| \frac{d y^\ast }{ds} \ri|\geqslant \f{1}{L}\,.
\]
This estimate provides an upper bound on $g'$. 
Notice that the proof actually shows that $\Phi^\ast \in {\mathcal PL}$ since it says that ${y^{\ast}} '$ and
therefore $g'$ is locally constant on each $\tilde D_i^\ast $.
If $\Phi$ is smooth, say $C^k$, then the same property holds on $O_2$ for $\Phi^\ast$ by the inverse function theorem.
\begin{prop} \label{linear}
Let $\Phi\in {\mathcal PL}$. Then,
\be \label{tripoli}
\|{\Phi^{\ast}} '\| \leqslant  \f{N}{2}    \|\Phi'\|  \,.
\ee
\end{prop}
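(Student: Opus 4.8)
The plan is to evaluate both kinetic energies by a coarea-type change of variables onto the common range of $\Phi$ and $\Phi^\ast$, thereby reducing \eqref{tripoli} to a single pointwise Cauchy--Schwarz inequality in the level variable $s$. First I would discard the regions that carry no kinetic energy: on the set $O_1$ of Lemma \ref{linearzero} one has ${\Phi^\ast}'=0$, and likewise the parts where $\Phi'=0$ do not contribute, so it suffices to integrate over the regions where the relevant functions are strictly monotone. There the branch functions $y_{i,k}^j(s)$ and $y^\ast(s)$ are differentiable and satisfy the inverse-function identities $\Phi'=(dy_{i,k}^j/ds)^{-1}$ on $D_{i,k}^j$ and $g'=(dy^\ast/ds)^{-1}$ already recorded in the proof of Lemma \ref{linearzero}.

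Changing variables $x=y_{i,k}^j(s)$ on each $D_{i,k}^j$, and $x=y^\ast(s)$ on the first edge while using that $\Phi^\ast$ has $N$ identical components, I would obtain, with $w_{i,k}^j:=|dy_{i,k}^j/ds|$ and the $s$-range split along the critical values $a_{i-1}<s<a_i$,
\[
\|\Phi'\|^2=\sum_i\int_{a_{i-1}}^{a_i}\Big(\sum_{j,k}\frac{1}{w_{i,k}^j}\Big)\,ds,\qquad \|{\Phi^\ast}'\|^2=N^2\sum_i\int_{a_{i-1}}^{a_i}\frac{1}{\sum_{j,k}w_{i,k}^j}\,ds.
\]
The extra factor $N^2$ is precisely the product of the $N$ identical components of $\Phi^\ast$ and the factor $N$ in the equimeasurability identity \eqref{dragon}: differentiating \eqref{dragon} in $s$ gives $N\,dy^\ast/ds=\sum_{j,k}(-1)^{n+k}\,dy_{i,k}^j/ds$.

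The decisive point is the sign bookkeeping. Because $\Phi$ is continuous and compactly supported, along each edge the outermost crossing lies on a descending branch and the branches alternate inward, so the weight $(-1)^{n+k}$ in \eqref{dragon} is exactly what makes every summand $(-1)^{n+k}\,dy_{i,k}^j/ds$ negative; hence $N|dy^\ast/ds|=\sum_{j,k}w_{i,k}^j$ with all $w_{i,k}^j>0$. With this the inequality \eqref{tripoli} reduces, after cancelling $N^2$, to the pointwise estimate $4\le\big(\sum_{j,k}w_{i,k}^j\big)\big(\sum_{j,k}1/w_{i,k}^j\big)$, which is Cauchy--Schwarz provided the number of branches $M(i)=\sum_j n(i,j)$ is at least $2$.

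I expect the main obstacle to be verifying this lower bound $M(i)\ge2$ for every relevant level, i.e. that at each height $s<\max|\Phi|$ the equation $|\Phi|=s$ has at least two solutions on the graph. Here I would again invoke continuity at the vertex together with compact support: for $s$ below the common vertex value $\Phi(0)$ each of the $N$ edges contributes an odd, hence $\ge1$, number of crossings, so $M\ge N\ge2$; for $s$ above $\Phi(0)$ each edge contributes an even number of crossings and at least one edge must exceed $s$, so again $M\ge2$. Cauchy--Schwarz then yields $\big(\sum w\big)\big(\sum 1/w\big)\ge M^2\ge4$, and integrating in $s$ over each $(a_{i-1},a_i)$ and summing establishes \eqref{tripoli}. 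As a sanity check, for $N=2$ this recovers the classical P\'olya--Szeg\H o constant $1$, while the genuinely graph-dependent factor $N/2$ arises from weighing the $N^2$ above against the sharp Cauchy--Schwarz bound $4$.
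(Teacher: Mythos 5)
Your proposal is correct and follows essentially the same route as the paper's proof: the same change of variables to the level parameter $s$ via the branch functions $y_{i,k}^j$ and $y^\ast$, the same use of the differentiated identity \eqref{dragon} with the seesaw/alternating-sign observation to get $N\,|dy^\ast/ds|=\sum_{j,k}|dy_{i,k}^j/ds|$, and the same reduction to the pointwise convexity (Cauchy--Schwarz) bound $\bigl(\sum w\bigr)\bigl(\sum 1/w\bigr)\geqslant M^2\geqslant 4$, justified by the fact that a continuous, compactly supported $\Phi$ attains each relevant level at least twice. Your parity-based count of crossings is just a slightly more detailed version of the paper's one-line justification of this last point.
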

\begin{dem}
We shall use the notation of the previous lemma.
First we consider the r.h.s. of  \eqref{tripoli}. We can restrict the
integral to the region where $\Phi$ is not constant and  change the integration variable.
\begin{equation*}
  \|\Phi'\|^2 
 = \sum_{i=1}^m \sum_{j=1}^N \sum_{k=1}^n \int_{D_{i,k}^j} |\Phi'|^2  = \sum_{i=1}^m \sum_{j=1}^N \sum_{k=1}^n \int_{a_{i-1}}^{a_i}  \lf( \f{dy_{i,k}^j}{ds} \ri)^{-2} 
 \lf| \f{dy_{i,k}^j}{ds} \ri| ds.
\end{equation*}
We can repeat the same operation for the l.h.s. of \eqref{tripoli}
\begin{equation*}
\|{\Phi^{ \ast}}'\|^2 =  N \int_{\erre^+} |g'|^2 =  N \sum_{i=1}^m \int_{a_{i-1}}^{a_i}  \lf( \f{d{y^{\ast}}'}{ds} \ri)^{-2} 
 \lf| \f{dy^{\ast }}{ds} \ri| ds.
\end{equation*}
Now the conclusion follows by using \eqref{dragon} and the convexity properties of the square function. 

\n
First notice that
\[
\lf|  \f{dy^{\ast }}{ds} \ri| = \f 1 N
\lf|  \sum_{j=1}^N \sum_{k=1}^n (-1)^{n+k} \f{d y_{i,k}^j}{ds} \ri| = \f 1 N
 \sum_{j=1}^N \sum_{k=1}^n  \lf| \f{d y_{i,k}^j}{ds} \ri|.
\]
The restriction of $\Phi$ to an edge has a seesaw
behavior and $\f{d y_{i,k}^j}{ds}$ has an alternating behavior in $k$.

\n
Therefore in order to prove \eqref{tripoli} it is sufficient to show that
\[
  \f{N^2}{4} \sum_{j=1}^N \sum_{k=1}^n  \lf( \frac{1}{ \lf| \f{d y_{i,k}^j}{ds} \ri|} \ri)^2  \lf| \f{d y_{i,k}^j}{ds} \ri|
\geqslant 
\lf(   \sum_{j=1}^N \sum_{k=1}^n  \frac{N}{ \lf| \f{d y_{i,k}^j}{ds} \ri|} \ri)^2 
 \sum_{l=1}^N \sum_{h=1}^n \lf| \f{d y_{i,h}^{l}}{ds} \ri|
\]
which is equivalent to
\be \label{atsui}
 \f{1}{4} \sum_{j=1}^N \sum_{k=1}^n  \lf( \frac{1}{ \lf| \f{d y_{i,k}^j}{ds} \ri|} \ri)^2  
\f{\lf| \f{d y_{i,k}^j}{ds} \ri|}{ \sum_{l=1}^N \sum_{h=1}^n \lf| \f{d y_{i,h}^{l}}{ds} \ri|}
\geqslant 
\lf(   \sum_{j=1}^N \sum_{k=1}^n \frac{1}{ \lf| \f{d y_{i,k}^j}{ds} \ri|} \ri)^2 .
\ee
By the convexity of the square function, inequality \eqref{atsui} holds true if
\be \label{tacchino}
\f{1}{4} \lf( \sum_{j=1}^N \sum_{k=1}^n  \ri)^2\geqslant 1.
\ee
Notice that $\sum_{j=1}^N \sum_{k=1}^n $ represents the number of solutions to the equation $\Phi=s$ for $a_{i-1}<s<a_i$ on the whole graph.
Since $\Phi$ is continuous and compactly supported, there are always at least two solutions and then \eqref{tacchino}
holds true. 
\end{dem}
\begin{teo}[P\'olya - Szeg\H o inequality]\mbox{ }
 \label{polya}
 
\n Assume that $\Phi \in \EE$ then $\Phi^\ast \in \EE $ and
\be \label{chiacchere}
\|{\Phi^{\ast}}'\|  \leqslant  \f{N}{2}    \|\Phi'\| \,.
\ee
\end{teo}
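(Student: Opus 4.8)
The plan is to deduce the general inequality \eqref{chiacchere} from the piecewise-linear case already established in Proposition \ref{linear}, through a two-step reduction: first from complex functions to real nonnegative ones, and then from a general real nonnegative $\Phi\in\EE$ to the dense subclass ${\mathcal PL}$ by an approximation and weak-compactness argument.

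First I would reduce to the real nonnegative case. Since the distribution functions $\la$ and $\mu$, and therefore the rearrangement itself, depend on $\Phi$ only through $|\Phi|$, one has $\Phi^\ast=\big(|\Phi|\big)^\ast$. On the other hand the diamagnetic inequality gives $\big|(|\Phi|)'\big|\leqslant|\Phi'|$ a.e. on each edge, hence $\big\|(|\Phi|)'\big\|\leqslant\|\Phi'\|$; moreover $|\Phi|\in\EE$ whenever $\Phi\in\EE$, because elements of $\EE$ are continuous and so the equality of the traces $\phi_1(0)=\ldots=\phi_N(0)$ forces $|\phi_1(0)|=\ldots=|\phi_N(0)|$. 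It therefore suffices to prove $\|{(|\Phi|)^\ast}'\|\leqslant\tfrac{N}{2}\|(|\Phi|)'\|$, that is, to establish \eqref{chiacchere} for real, nonnegative, continuous $\Phi\in\EE$.

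For such a $\Phi$ I would choose a sequence $\Phi_n\in{\mathcal PL}$ with $\Phi_n\geqslant 0$ and $\|\Phi_n-\Phi\|_{H^1}\to 0$, using the density of ${\mathcal PL}$ in $\EE$. Proposition \ref{linear} applies to each $\Phi_n$ and gives $\|{\Phi_n^\ast}'\|\leqslant\tfrac{N}{2}\|\Phi_n'\|$; since $\|\Phi_n'\|\to\|\Phi'\|$, the right-hand side converges to $\tfrac{N}{2}\|\Phi'\|$, so $\{\Phi_n^\ast\}$ is bounded in $H^1(\GG)$. Combined with the $L^2$-convergence $\Phi_n^\ast\to\Phi^\ast$ discussed below, this yields, up to a subsequence, the weak convergence $\Phi_n^\ast\rightharpoonup\Phi^\ast$ in $H^1(\GG)$; in particular $\Phi^\ast\in H^1(\GG)$, and since $\Phi^\ast$ is symmetric it lies in $\EE$. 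The inequality then follows from weak lower semicontinuity of the $L^2$-norm of the derivative,
\[
\|{\Phi^\ast}'\|\ \leqslant\ \liminf_{n}\|{\Phi_n^\ast}'\|\ \leqslant\ \liminf_{n}\tfrac{N}{2}\|\Phi_n'\|\ =\ \tfrac{N}{2}\|\Phi'\|.
\]

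The delicate point, which I expect to be the main obstacle, is the identification of the weak $H^1$ limit as precisely $\Phi^\ast$, i.e. the strong $L^2$-convergence $\Phi_n^\ast\to\Phi^\ast$. Here I would invoke Lemma \ref{lemmolo}: from $\|\Phi_n-\Phi\|_p\to 0$ one gets $\mu(s)\leqslant\liminf_n\la_n(s)\leqslant\limsup_n\la_n(s)\leqslant\la(s)$, so $\la_n(s)\to\la(s)$ at every continuity point of $\la$. Because $g$ is the generalized inverse of $\la$, this convergence of the distribution functions propagates to pointwise a.e. convergence $g_n\to g$, hence $\Phi_n^\ast\to\Phi^\ast$ a.e. on $\GG$. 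Together with the equimeasurability of Proposition \ref{lp}, which gives $\|\Phi_n^\ast\|_2=\|\Phi_n\|_2\to\|\Phi\|_2=\|\Phi^\ast\|_2$, the a.e. convergence upgrades to strong convergence in $L^2$ by the Brezis--Lieb lemma \eqref{fatou} (or, equivalently, Scheff\'e's lemma), which closes the argument.
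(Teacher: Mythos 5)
Your proposal is correct, but it passes to the limit by a genuinely different mechanism than the paper. Both arguments rest on the same two pillars, namely the piecewise-linear inequality \eqref{tripoli} of Proposition \ref{linear} and the semicontinuity of distribution functions in Lemma \ref{lemmolo}, but the paper never identifies the limit of the rearranged approximants $\Phi_n^\ast$ at all. Instead it bounds the weak formulation directly: for a positive test function $\chi\in C_0^\infty(\GG)$ it rewrites $\lf|(\chi,{\Phi^{\ast}}')\ri|$ by the change of variables $y=y^\ast(s)$ along level sets, applies Fatou's lemma together with Lemma \ref{lemmolo} to compare with the analogous expression for $\Phi_n$, changes variables back, and then uses Cauchy--Schwarz and \eqref{tripoli}; the resulting bound $\lf|(\chi,{\Phi^{\ast}}')\ri|\leqslant \f{N}{2}\|\Phi'\|\,\|\chi\|$ for all such $\chi$ gives ${\Phi^{\ast}}'\in L^2$ with the right norm by Riesz's theorem. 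You instead run the standard compactness scheme: uniform $H^1$ bounds on $\Phi_n^\ast$ from \eqref{tripoli}, weak convergence of a subsequence, identification of the weak limit as $\Phi^\ast$ (via a.e.\ convergence $g_n\to g$, equimeasurability from Proposition \ref{lp}, and Brezis--Lieb), and weak lower semicontinuity of the norm. This is sound, but the identification step is the one place where you should supply detail: the passage from convergence of the distribution functions to convergence $g_n\to g$ at every continuity point of $g$ is exactly the quantile-function (generalized inverse) convergence argument; it does follow from the two-sided bound $\mu(s)\leqslant\liminf_n\la_n(s)\leqslant\limsup_n\la_n(s)\leqslant\la(s)$ of Lemma \ref{lemmolo}, using that the monotone function $g$ has at most countably many discontinuities, but it is not a one-line assertion. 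Your treatment of the complex case is also different and arguably cleaner: the paper extends Proposition \ref{linear} to complex data through the phase--amplitude splitting $\Phi=e^{i\Theta}F$ with $\Theta,F\in{\mathcal PL}$, whereas you dispose of it at the outset via $\Phi^\ast=(|\Phi|)^\ast$ and the pointwise bound $|(|\Phi|)'|\leqslant|\Phi'|$. What the paper's route buys is that no convergence of $\Phi_n^\ast$ is ever needed; what yours buys is modularity and a shorter path once the quantile-convergence fact is spelled out.
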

\begin{dem} 
Due to Proposition \ref{lp} $\Phi^*$ is symmetric, and  then continuous at the vertex. So that  it is sufficient to prove \eqref{chiacchere}. Let $\Phi\in\EE$ be positive and 
take $\Phi_n \in {\mathcal  PL}$ such that $\Phi_n \to \Phi$ in $H^1(\GG)$.
Take also a positive test function $\chi\in C_0^\infty (\GG )$.
Moreover, let $0=a_0< a_1< a_2 < \ldots $ be the values such that $\{ g= a_i \}$ has strictly positive measure. 
Notice that $g$ restricted to $\tilde D_i^\ast = \{ a_{i-1} < g < a_i\}$ is monotone and invertible by Lemma \ref{cont}. Monotonicity of $g$ also  implies that its derivative exists almost everywhere and is in $L^1_{loc}(\erre^+)$.     
Then the following inequalities hold:
\begin{align}
\lf|(\chi,{\Phi^{\ast}}')\ri| 
& =- \sum_{j=1}^N \int_0^{\infty} \chi_j(y) \, g'(y) \, dy \label{uno} \\
& = -\sum_{j=1}^N \sum_{i\geqslant 0}\int_{\tilde D_i } \chi_j(y) \, g'(y) \, dy \label{due} \\
& = \sum_{j=1}^N \int_0^{g(0)} \chi_j ( y^{\ast} (s) ) ds  \label{tre} \\
& \leqslant \liminf \sum_{j=1}^N \int_0^{g_n(0)} \chi_j ( y^{\ast}_n (s) ) ds \label{quattro}  \\
& = \liminf_n  - (\chi, {\Phi_n^{\ast}}' ) \label{cinque} \\
& = \liminf_n  \lf|(\chi, {\Phi_n^{\ast}}') \ri|  \label{cinqueprimo} \\
& \leqslant \liminf_n \| \chi \| \| {\Phi_n^{\ast}} ' \| \label{sei} \\
& \leqslant \frac{N}{2} \liminf_n \| \chi \| \| \Phi_n' \| \label{sette}\\
& = \frac{N}{2} \| \chi \| \| \Phi' \| .\label{otto}
\end{align}
The chain of inequalities stands for the following reasons. In
\eqref{uno} we explicitly wrote the
r.h.s.. In \eqref{due} we have restricted the integral to the regions where $g$ is not constant.
In \eqref{tre} we have changed variable of integration, the new one being $y^\ast (s)$
defined as before. This is possible due to the restriction made in
\eqref{due}. In \eqref{quattro} we have used Fatou's Lemma and  Lemma \ref{lemmolo}. 
In \eqref{cinque} we have changed back the integration variable and
in \eqref{cinqueprimo} we just changed a sign. In \eqref{sei} we have used Cauchy-Schwarz inequality.
In \eqref{sette} we have used lemma \ref{linear}. In \eqref{otto} we have used the convergence hypothesis.
Estimate \eqref{chiacchere} for a positive function $f$ follows from equation \eqref{otto} by Riesz Theorem.

\n
Now we extend the inequality to the general case. First notice that
Proposition \ref{cont} and  Proposition \ref{lemmolo} both extend to 
the non-positive case and to the complex valued case. A careful inspection of the argument used above, shows that
for positive $\chi \in C_0^\infty (\GG )$ it is  still valid until inequality \eqref{sei}. 
Then to conclude the proof we have to extend Proposition \ref{linear} to complex valued functions.

\n
For the real valued case, the extension is trivial. If $\Phi$ is piecewise linear then $|\Phi|\in {\mathcal PL} $ 
and has the same rearrangement of $\Phi$. Notice also that $|\Phi'|= | |\Phi|'|$ almost everywhere. Therefore since
 \eqref{tripoli} holds for $|\Phi|$, then it holds also for $\Phi$.

\n
Now for the complex valued case, to define the class of approximating functions  we set $\Phi = e^{i\Theta} F$ with  $\Theta, \,F \in {\mathcal PL}$, and where the product must be understood componentwise $e^{i\Theta} F = (e^{i\theta_1} f_1,...,e^{i\theta_N} f_N)^T$. This set is still dense in $\EE$.
Again notice that $\Phi^\ast = F^\ast$. We have also
\[
|\Phi'|^2 = |\Theta' F|^2 + | F'|^2.
\]
Therefore we can write
\[
\|{ \Phi^{\ast}}' \|^2 =
 \| {F^{\ast}}' \|^2 \leqslant
\| F' \|^2 \leqslant
\|\Theta' F\|^2 + \| F' \|^2 = 
\| \Phi' \|^2
\]
which proves equation \eqref{chiacchere} in the general case.
\end{dem}
\begin{remark}
The same  argument used to prove Theorem \ref{polya} can be used for the $W^{1,p}$ norm since $z \leadsto |z|^p$ is convex for $p\geqslant 1$.
\end{remark}
\begin{remark}
The constant $N^2/4$ is optimal. For instance take $\Phi$ such that 
\[
\phi_1 (y) = 
\begin{cases}
x & 0\leqslant x \leqslant 1 \\
2-x & 1\leqslant x \leqslant 2 \\
0 & x\geqslant 2 \\
\end{cases}
\qquad \qquad
\phi_i =0 \text{ for } i \neq 1
\]
Then $g$ can be easily computed by using \eqref{dragon}. One has:
\[
g (x) = 
\begin{cases}
1-\f{N}{2} x  & 0\leqslant y \leqslant \frac{2}{N} \\
0 & x \geqslant \frac{2}{N}  \\
\end{cases}
\]
From which
\[
\|\Phi'\|^2  = 2
\qquad
\|{\Phi^{\ast}}'\|^2  =  \f{N^2}{2}   \,.
\]
\end{remark}
\vskip10pt
\par\noindent
We end this Appendix with a comment on previous work on rearrangements on graphs contained in \cite{[Fri05]}. In \cite{[Fri05]}, the following P\'olya-Szeg\H{o} inequality has been proven for a function $\phi$ on a {\it bounded} graph with Kirchhoff conditions at vertices:
\begin{equation} \label{sua}
\| \phi^{\ast  \prime}   \| \leqslant \| \phi\|, 
\end{equation}
while here we proved for the function $\Phi$ an unbounded star graph (we prefer to use a different notation for making clearer the comparison) that 
\begin{equation} \label{nostra}
\| \Phi^{\ast \prime }\| \leqslant \frac{N}{2} \| \Phi\|.
\end{equation}

We would like to remark that both \eqref{sua} and \eqref{nostra} hold
true and are optimal: in fact, they refer to two different definitions of
rearrangements and to different boundary conditions at the free ends, which, as we will see, do matter. 
In the first place, notice that the rearranged functions defined in \cite{[Fri05]} are supported
on a segment (or, equivalently, on one edge of the graph), while the
rearranged functions defined in the present paper are symmetric with respect
to the exchange of edges and therefore they are supported on all
edges. 
Let us explain in details the origin of constants in the two settings.
Inequality \eqref{sua} was proved in \cite{[Fri05]} for a tree of finite total length $l$. 
The rearranged function $\phi^\ast$ is not  defined on the tree but it is defined on 
the segment $[0,l]$ and it is equimeasurable with $\phi$. On the other hand, $\Phi^\ast$ too
is equimeasurable with $\phi$, but as a function on the entire graph. Therefore,
the restriction of $\Phi^\ast$ to one edge, let us
call it $g$, is not equimeasurable with $\Phi$ and we have $N | \{ g >t\} | = | \{ \Phi >t\} |$.
As a consequence, if we compare $g$ to $\phi^\ast$, we see that $g$ goes to $0$ in a steeper way. 
This different normalization explains the $N$ in our estimate.

\par\noindent
Finally, a further dependence on boundary conditions has to be taken in account. In \cite{[Fri05]} the
author was interested in the eigenvalues of the Laplacian with
Kirchhoff conditions at vertices. In particular, for vertices of
degree $1$, i.e. free ends, this corresponds to Neumann boundary conditions.  The form
domain of this operator consists of functions which are $H^1$ on
edges, continuous at vertices of degree higher or equal than 2 and no
conditions at all at vertices of degree one.  Inequality \eqref{sua}
has been proved for this class of functions in lemma $3$ in
\cite{[Fri05]}.  In our case we have unbounded edges and we consider
$H^1$ functions on edges continuous at the vertex, and which of course
are vanishing at infinity.  In both proofs a key point is deriving a
lower bound for $n(t)$, defined as the number of solutions of
$\phi(x)=t$, uniform in each class of functions.  We have the lower
bound $n(t)\geqslant 2$ while in \cite{[Fri05]}, see equation (2.5), the
lower bound is $n(t)\geqslant 1$.  This difference explains the
factor $2$ appearing in the denominator of \eqref{nostra} and missing in \eqref{sua}.  We
think that both estimates are optimal for the two different
geometrical settings.  In the case studied in \cite{[Fri05]}, one could consider a positive
function, starting from an endpoint of the graph, localized on one
single edge and vanishing in a monotone way. For such a function we
have $n(t)=1$ and therefore $n(t)\geqslant 1$ is optimal. For our
admissible functions such a behavior is impossible since we have
functions going to 0 at infinity and globally continuous. So we cannot have better than $n(t)\geqslant 2$.  We think
that it would be natural to compare a star graph with infinite length
with a star graph of finite length but Dirichlet boundary conditions
in the end points. For such a graph we expect P\'olya-Szeg\H{o}
inequality to take the form $\| \phi^{\ast \prime} \| \leqslant
\frac12 \| \phi\|$.
To conclude, several definitions for a rearrangement on a graph can be given,
and moreover the optimal constant in the P\'olya-Szeg\H{o} inequality depends in a
sensible way from the chosen definition and from the boundary conditions at the free vertices.

Our choice was natural in the geometrical setting of this model.
The presence of a central point of the star graph, i.e. the vertex, makes natural to define the rearranged function to be 
symmetric w.r.t. to this point as, in facts, one does in the $\erre^n$ case. 
Moreover, in this way the rearranged function is still defined on the star graph and this gave us
intuition on the minimizers.

\n

\section{Useful identities}
\label{appB}
In this section we recall some useful identities that will be used several times in the paper.
We label the soliton profile on the real line as
\be \label{soliton}
\phi_s (x) = [ (\mu + 1) \ome]^{\frac{1}{2\mu}} \sech^{\frac{1}{\mu}} (\mu \sqrt{\ome} x).
\ee
It satisfies the equation
\be \label{soliteq}
-\phi_s '' -|\phi_s |^{2\mu} \phi_s = -\ome \phi_s. 
\ee
Moreover, multiplying by $\ove \phi_s $ and integrating one checks that
\[
\| \phi_s ' \|^2_{L^2(\erre)} - \| \phi_s  \|^{2\mu +2}_{L^{2\mu+2} (\erre)} +\ome \|  \phi_s  \|^2_{L^2(\erre)} =0.
\]
Starting from  definition \eqref{soliton} and changing variables in the integrals, one obtains the following formulas:
\begin{align}
&\int_0^\infty |\phi_s (x+\xi )|^2 dx \quad= \frac{(\mu+1)^{\frac{1}{\mu} }}{\mu} \ome^{\frac{1}{\mu} -\frac{1}{2} } 
\int^1_{\tanh (\xi \mu\sqrt{\ome})} (1-t^2)^{\frac{1}{\mu} -1} dt \label{formula1} \\
&\int_0^\infty |\phi_s (x+\xi )|^{2\mu+2} dx = \frac{(\mu+1)^{1+\frac{1}{\mu} }}{\mu} \ome^{\frac{1}{\mu} +\frac{1}{2} } 
\int^1_{\tanh (\xi \mu\sqrt{\ome})} (1-t^2)^{\frac{1}{\mu} } dt \label{formula2} \ .
\end{align}


\begin{thebibliography}{10}



\bibitem{ACFN1}
R.~Adami, C.~Cacciapuoti, D.~Finco, and D.~Noja, \emph{Fast solitons on star
  graphs}, Rev. Math. Phys. \textbf{23} (4) (2011), 409--451.


\bibitem{ACFN2}
R.~Adami, C.~Cacciapuoti, D.~Finco, and D.~Noja, \emph{Stationary states of
  {NLS} on star graphs}, EPL \textbf{100} (2012), 10003.

\bibitem{ACFN3}
R.~Adami, C.~Cacciapuoti, D.~Finco, and D.~Noja, \emph{On the structure of
  critical energy levels for the cubic focusing {NLS} on star graphs}, J. Phys.
  A: Math. Theor. \textbf{45} (2012), 192001, 7pp.

\bibitem{[ACFN4]}
R.~Adami, C.~Cacciapuoti, D.~Finco, and D.~Noja, \emph{Constrained energy
  minimization and orbital stability for the {NLS} equation on a star graph}, in press on Ann. Inst. H. Poincar\'e (C) Analyse Non Lin\'eaire, http://dx.doi.org/10.1016/j.anihpc.2013.09.003. 
  
\bibitem{[AN09]}
R.~Adami and D.~Noja, \emph{Existence of dynamics for a 1{D} {NLS} equation
  perturbed with a generalized point defect}, J. Phys. A: Math. Theor.
  \textbf{42} (2009), no.~49, 495302, 19pp.

\bibitem{[AN12]} R.~Adami and D.~Noja, \emph{Stability and symmetry breaking
  bifurcation for the ground states of a NLS equation with a $\delta'$
  interaction}, Commun. Math. Phys. \textbf{318}, (2013), 247-289.

\bibitem{ANS}
R.~Adami, D.~Noja, and A.~Sacchetti, \emph{{B}ose-{E}instein condensates:
  Theory, characteristics, and current research}, Nova Publishing, New York,
  2010.

\bibitem{ANV}
R.~Adami, D.~Noja, and N.~Visciglia \emph{Constrained energy
  minimization and ground states for NLS with point defects},  
Disc. Cont. Dyn. Syst. B \textbf{18} (5) (2013), 1155--1188.

\bibitem{AST14-1}
R.~Adami, E.~Serra, and P.~Tilli, \emph{NLS ground states on graphs}, arXiv:1406.4036 [math.AP] (2014).

\bibitem{AST14-2}
R.~Adami, E.~Serra, and P.~Tilli, \emph{Lack of ground state for NLS on bridge-type graphs}, arXiv:1404.6973 [math.AP] (2014).

\bibitem{aghh:05}
S.~Albeverio, F.~Gesztesy, R.~{H{\o}gh-Krohn}, and H.~Holden, \emph{Solvable
  models in quantum mechanics: {S}econd edition}, AMS Chelsea Publ., 2005, with
  an Appendix by P. Exner.

\bibitem{BI1} V.~Banica, L.~Ignat, \emph{Dispersion for the Schr\"odinger equation on networks}, J. Math. Phys. {\bf 52} (2011), 083703.

\bibitem{BI2} V.~Banica, L.~Ignat, \emph{Dispersion for the Schr\"odinger equation on the line with multiple Dirac delta potentials and on delta trees}, to appear in Analysis and PDE,  arXiv:1211.7281 [math.AP] (2014).

\bibitem{[BCFK06]}
G.~Berkolaiko, R.~Carlson, S.~Fulling, and P.~Kuchment, \emph{Quantum graphs
  and their applications}, Contemporary Math., vol. 415, American Math.
  Society, Providence, R.I., 2006.

\bibitem{BK13}
G.~Berkolaiko, and P.~Kuchment, \emph{Introduction to quantum graphs}, American Math. Society, 2013.

\bibitem{[BEH]}
J.~Blank, P.~Exner, and M.~Havlicek, \emph{Hilbert spaces operators in quantum
  physics}, Springer, New York, 2008.

\bibitem{BL}
H.~Brezis and E.~H. Lieb, \emph{A relation between pointwise convergence of
  functions and convergence of functionals}, Proc. Amer. Math. Soc. \textbf{88}
  (1983), 486--490.

\bibitem{CFN14}
C.~Cacciapuoti, D.~Finco, and D.~Noja., \emph{Topology induced bifurcations for the NLS on the tadpole graph}, arXiv:1405.3465 [math-ph] (2014).

\bibitem{CM}
D.~{Cao Xiang} and A.~B. Malomed, \emph{Soliton defect collisions in the
  nonlinear {S}chr\"{o}dinger equation}, Phys. Lett. A \textbf{206} (1995),
  177--182.

\bibitem{[Caz]}
T.~Cazenave, \emph{{S}emilinear {S}chr\"{o}dinger {E}quations}, Courant Lecture
  Notes in Mathematics, AMS, vol {\bf 10}, Providence, 2003.

\bibitem{[CL]}
T.~Cazenave and P.-L. Lions, \emph{Orbital stability of standing waves for some
  nonlinear {S}chr\"odinger equations}, Commun. Math. Phys. \textbf{85} (1982),
  549--561.

\bibitem{EKKST08}
P.~Exner, J.P.~Keating, P.~Kuchment, T.~Sunada, and A.~Teplyaev, \emph{Analysis on graphs and its  applications}, Proc. Symp. Pure Math., vol. 77, American Mathematical Society, Providence, RI, 2008.

\bibitem{[Fri05]} L.~Friedlander, \emph{Extremal properties
  of eigenvalues for a metric graph}, Ann. Inst. Fourier {\bf{55}} 1
  (2005), 199-211.


\bibitem{FGJS1}
J.~Fr\"ohlich, S.~Gustafson, B.~L.~G. Jonsson, and I.~Sigal, \emph{Solitary
  wave dynamics in an external potential}, Comm. Math. Phys. \textbf{250}
  (2004), 613--642.

\bibitem{FMO}
R.~Fukuizumi, M.~Ohta, and T.~Ozawa, \emph{Nonlinear {S}chr\"odinger equation
  with a point defect}, Ann. Inst. H. Poincar\'e Anal. Non Lin\'eaire
  \textbf{25} (2008), no.~5, 837--845.

\bibitem{GSD}
S.~Gnutzman, U.~Smilansky, and S.~Derevyanko, \emph{Stationary scattering from
  a nonlinear network}, Phys. Rev. A \textbf{83} (2011), 033831.

\bibitem{[GSS1]}
M.~Grillakis, J.~Shatah, and W.~Strauss, \emph{Stability theory of solitary waves
  in the presence of symmetry {I}}, J. Funct. Anal. \textbf{74} (1987),
  160--197.

\bibitem{[GSS2]}
M.~Grillakis, J.~Shatah, and W.~Strauss, \emph{Stability theory of solitary waves
  in the presence of symmetry {II}}, J. Funct. Anal. \textbf{94} (1990),
  308--348.

\bibitem{hilden}
K.~Hild{\'e}n, \emph{Symmetrization of functions in {S}obolev spaces and the
  isoperimetric inequality}, Manuscripta Math. \textbf{18} (1976), no.~3,
  215--235.

\bibitem{crucco}
B.~Kawohl, \emph{Rearrangements and convexity of level sets in {PDE}}, Lecture
  Notes in Mathematics, vol. 1150, Springer-Verlag, Berlin, 1985.

\bibitem{[KS99]}
V.~Kostrykin and R.~Schrader, \emph{{K}irchhoff's rule for quantum wires}, J.
  Phys. A: Math. Gen. \textbf{32} (1999), no.~4, 595--630.

\bibitem{[Kuc04]}
P.~Kuchment, \emph{Quantum graphs. {I}. {S}ome basic structures}, Waves Random
  Media \textbf{14} (2004), no.~1, S107--S128.

\bibitem{[Kuc05]}
P.~Kuchment, \emph{Quantum graphs. {II}. {S}ome spectral properties of quantum
  and combinatorial graphs}, J. Phys. A: Math. Gen. \textbf{38} (2005), no.~22,
  4887--4900.

\bibitem{LeCFFKS}
S.~Le~Coz, R.~Fukuizumi, G.~Fibich, B.~Ksherim, and Y.~Sivan, \emph{Instability
  of bound states of a nonlinear {S}chr\"odinger equation with a {D}irac
  potential}, Phys. D \textbf{237} (2008), no.~8, 1103--1128.

\bibitem{LL01}
E.~H. Lieb and M.~Loss, \emph{Analysis}, second ed., Graduate Studies in
  Mathematics, vol.~14, American Mathematical Society, Providence, RI, 2001.

\bibitem{Linzon}
Y.~Linzon, R.~Morandotti, M.~Volatier, V.~Aimez, R.~Ares, and S.~Bar-Ad,
  \emph{Nonlinear scattering and trapping by local photonic potentials}, Phys.
  Rev. Lett. \textbf{99} (2007), 133901.

\bibitem{Miro}
A.~E. Miroshnichenko, M.~I. Molina, and Y.~S. Kivshar, \emph{Localized modes
  and bistable scattering in nonlinear network junctions}, Phys. Rev. E
  \textbf{75} (2007), 046602.

\bibitem{MPF91}
D.~S. Mitrinovi\'{c}, J.~E. Pe\v{c}ari\'{c}, and A.~M. Fink, \emph{Inequalities
  involving functions and their integrals and derivatives}, Mathematics and Its
  Applications, vol.~53, Kluwer Academic Publishers, Dordrecht/Boston/London,
  1991.

\bibitem{Pe}
M.~Peccianti, A.~Dyadyusha, M.~Kaczmarek, and G.~Assanto, \emph{Escaping
  solitons from a trapping potential}, Phys. Rev. Lett. \textbf{101} (2008),
  153902.


\bibitem{Sob}
Z.~Sobirov, D.~Matrasulov, K.~Sabirov, S.~Sawada and K.~Nakamura \emph{Integrable nonlinear Schr\"odinger
  equation on simple networks: connection formula at vertices}, Phys. Rev. E
  \textbf{81} (2010), 066602.

\bibitem{TOD}
A.~Tokuno, M.~Oshikawa, and E.~Demler, \emph{Dynamics of the one dimensional
  $B$ose liquids: Andreev-like reflection at $Y$-junctions and the absence of
  Aharonov-Bohm effect}, Phys. Rev. Lett. \textbf{100} (2008), 140402.



\bibitem{Wan}
W.~Wan, S.~Muenzel, and J.~W. Fleischer, \emph{Wave tunneling and hysteresis in
  nonlinear junctions}, Phys. Rev. Lett. \textbf{104} (2010), 073903.

\bibitem{[W2]}
M.~Weinstein, \emph{Modulational stability of ground states of nonlinear
  {S}chr\"{o}dinger equations}, SIAM J. Math. Anal. \textbf{16} (1985), 472--491.

\bibitem{[W3]}
M.~Weinstein, \emph{Lyapunov stability of ground states of nonlinear dispersive
  evolution equations}, Comm. Pure Appl. Math \textbf{39} (1986), 51--68.

\end{thebibliography}
\end{document}